\newtheorem{theorem}{Theorem}
\newtheorem{lemma}[theorem]{Lemma}
\newtheorem{problem}[theorem]{Problem}
 \newenvironment{proof}{\trivlist\item[]\emph{Proof}:}%
 {\unskip\nobreak\hskip 1em plus 1fil\nobreak$\Box$
 \parfillskip=0pt%
 \endtrivlist}
\newcommand{\qed}{}
\newcommand{\setofcycles}{\ensuremath{\mathcal{C}}}
\newcommand{\setofpaths}{\ensuremath{\mathcal{P}}}
\newcommand{\beadstring}{\ensuremath{B_{u,t}}}
\newcommand{\sbeadstring}{\ensuremath{B_{s,t}}}
\newcommand{\vbeadstring}{\ensuremath{B_{v,t}}}
\newcommand{\head}{\ensuremath{H_u}}
\newcommand{\chead}{\ensuremath{H_X}}
\newcommand{\chooseedge}{\ensuremath{\mathtt{choose}}}
\newcommand{\oracleleft}{\ensuremath{\mathtt{left\_update}}}
\newcommand{\oracleright}{\ensuremath{\mathtt{right\_update}}}
\newcommand{\undooracle}{\ensuremath{\mathtt{restore}}}
\newcommand{\bcc}{\textsc{bcc}}
\newcommand{\liststpaths}{\ensuremath{\mathtt{list\_paths}_{s,t}}}
\newcommand{\return}{\ensuremath{\mathtt{return}}}
\newcommand{\routput}{\ensuremath{\mathtt{output}}}
\begin{document}

\title{Optimal Listing of Cycles and st-Paths in Undirected Graphs}
\author{
  R. Ferreira (\small{U.~Pisa, ferreira@di.unipi.it}) 
  \and  
  R. Grossi (\small{U.~Pisa, grossi@di.unipi.it}) 
  \and  
  A. Marino (\small{U.~Firenze, a.marino@unifi.it}) 
  \and  
  N. Pisanti (\small{U.~Pisa, pisanti@di.unipi.it}) 
  \and  
  R. Rizzi (\small{U.~Verona, r.rizzi@univr.it}) 
  \and  
  G. Sacomoto (\small{INRIA, sacomoto@gmail.com}) 
}
%

\date{}

\maketitle
\setcounter{page}{0}
\thispagestyle{empty}
\begin{small}
  \centerline{\bf Abstract} 
  \medskip
  Listing all the simple cycles (hereafter just called cycles) in a
  graph is a classical problem whose efficient solutions date back to
  the early 70s. For a graph with $n$ vertices and $m$ edges,
  containing $\eta$ cycles, the best known solution in the literature
  is given by Johnson's algorithm [SIAM J. Computing, 1975] and takes
  $O((\eta+1)(m+n))$ time. This solution is surprisingly not optimal
  for undirected graphs: to the best of our knowledge, no
  theoretically faster solutions have been proposed in almost 40 years.

  We present the first optimal solution to list all the cycles in an
  undirected graph $G$, improving the time bound of Johnson's
  algorithm by a factor that can be $O(n^2)$.  Specifically, let
  $\setofcycles(G)$ denote the set of all these cycles, and observe
  that $|\setofcycles(G)| = \eta$. For a cycle $c \in
  \setofcycles(G)$, let $|c|$ denote the number of edges in~$c$. Our
  algorithm requires $O(m + \sum_{c \in \setofcycles(G)}{|c|})$ time
  and is asymptotically optimal: indeed, $\Omega(m)$ time is
  necessarily required to read $G$ as input, and $\Omega(\sum_{c \in
    \setofcycles(G)}{|c|})$ time is necessarily required to list the
  output.

  We then describe an infinite family of dense graphs, in which each
  graph with $n$ vertices and $m$ edges contains $\eta = \Omega(m)$
  cycles $c$ with $|c| = O(1)$ edges. For each graph in this family,
  our algorithm requires $\Theta(\eta+m) = \Theta(\eta)$ time, thus
  saving $O(m+n) = O(n^2)$ time when compared to Johnson's algorithm.
  In general for any graph, since $|c| \leq n$, the cost of our
  algorithm never exceeds $O(m + (\eta+1) n)$ time.

  We also present the first optimal solution to list all the simple
  paths from $s$ to $t$ (shortly, $st$-paths) in an undirected graph
  $G$. Let $\setofpaths_{st}(G)$ denote the set of $st$-paths in $G$
  and, for an $st$-path $\pi \in \setofpaths_{st}(G)$, let $|\pi|$ be
  the number of edges in $\pi$.  Our algorithm lists all the
  $st$-paths in~$G$ optimally in $O(m + \sum_{\pi \in
    \setofpaths_{st}(G)}{|\pi|})$ time, observing that
  $\Omega(\sum_{\pi \in \setofpaths_{st}(G)}{|\pi|})$ time is
  necessarily required to list the output.


  While the basic approach is simple (see binary partition in
  point~\ref{item:abstract:3}), we use a number of non-trivial ideas to
  obtain our optimal algorithm for an undirected (connected) graph $G$
  as conceptually listed below.
  \begin{enumerate}
  \item Prove the following reduction. If there exists an optimal
    algorithm to list the $st$-paths in $G$, there exists an optimal
    algorithm to list the cycles in $G$. This relates
    $\setofcycles(G)$ and $\setofpaths_{st}(G)$ for some choices $s,t$.


  \item Focus on listing the $st$-paths. Consider the decomposition of
    the graph into biconnected components ({\bcc}s), thus forming a
    tree $T$ where two {\bcc}s are adjacent in $T$ iff they share an
    articulation point. Exploit (and prove) the property that if $s$
    and $t$ belong to distinct {\bcc}s, then $(i)$ there is a unique
    \emph{sequence} $\sbeadstring$ of adjacent {\bcc}s in $T$ through
    which each $st$-path must necessarily pass, and $(ii)$ each
    $st$-path is the concatenation of paths connecting the
    articulation points of these {\bcc}s in $\sbeadstring$.

  \item \label{item:abstract:3} Recursively list the $st$-paths in
    $\sbeadstring$ using the classical binary partition (i.e.\mbox{}
    given an edge $e$ in $G$, list all the cycles containing
    $e$, and then all the cycles not containing~$e$): now it suffices to
    work on the \emph{first} \bcc\ in $\sbeadstring$, and efficiently
    maintain it when deleting an edge $e$, as required by the binary
    partition.

  \item Use a notion of \emph{certificate} to avoid recursive calls
    (in the binary partition) that do not list new $st$-paths.  This
    certificate is maintained dynamically as a data structure
    representing the first \bcc\ in $\sbeadstring$, which guarantees
    that there exists at least one \emph{new} solution in the current
    $\sbeadstring$.

  \item Consider the binary recursion tree corresponding to the binary
    partition.  Divide this tree into \emph{spines}: a spine
    corresponds to the recursive calls generated by the edges $e$
    belonging to the same adjacency list in $\sbeadstring$.  The
    amortized cost for each listed $st$-path $\pi$ is $O(|\pi|)$ if
    there is a guarantee that the amortized cost in each spine $S$ is
    $O(\mu)$, where $\mu$ is a lower bound on the number of $st$-paths
    that will be listed from the recursive calls belonging to $S$. The
    (unknown) parameter~$\mu$ different for each spine~$S$, and the
    corresponding cost $O(\mu)$, will drive the design of the proposed
    algorithms.
  \end{enumerate}
\end{small}

\newpage

\section{Introduction}
\label{sec:introduction}


Listing all the simple cycles (hereafter just called cycles) in a
graph is a classical problem whose efficient solutions date back to
the early 70s. For a graph with $n$ vertices and $m$ edges, containing
$\eta$ cycles, the best known solution in the literature is given by
Johnson's algorithm~\cite{Johnson1975} and takes
$O((\eta+1)(m+n))$ time. This solution is surprisingly not optimal for
undirected graphs: to the best of our knowledge, no theoretically faster solutions
have been proposed in almost 40 years.

\textbf{Our results.}  
We present the first optimal solution to list all
the cycles in an undirected graph~$G$, improving the time bound of
Johnson's algorithm by a factor that can be $O(n^2)$.  Specifically,
let $\setofcycles(G)$ denote the set of all these cycles, and observe
that $|\setofcycles(G)| = \eta$. For a cycle $c \in \setofcycles(G)$,
let $|c|$ denote the number of edges in~$c$. Our algorithm requires
$O(m + \sum_{c \in \setofcycles(G)}{|c|})$ time and is asymptotically
optimal: indeed, $\Omega(m)$ time is necessarily required to read $G$
as input, and $\Omega(\sum_{c \in \setofcycles(G)}{|c|})$ time is
necessarily required to list the output. Since $|c| \leq n$, the cost
of our algorithm never exceeds $O(m + (\eta+1) n)$ time.

Along the same lines, we also present the first optimal solution to
list all the simple paths from $s$ to $t$ (shortly, $st$-paths) in an
undirected graph $G$. Let $\setofpaths_{st}(G)$ denote the set of
$st$-paths in $G$ and, for an $st$-path $\pi \in \setofpaths_{st}(G)$,
let $|\pi|$ be the number of edges in $\pi$.  Our algorithm lists
all the $st$-paths in~$G$ optimally in $O(m + \sum_{\pi \in
  \setofpaths_{st}(G)}{|\pi|})$ time, observing that $\Omega(\sum_{\pi
  \in \setofpaths_{st}(G)}{|\pi|})$ time is necessarily required to
list the output.  We prove the following reduction to relate
$\setofcycles(G)$ and $\setofpaths_{st}(G)$ for some suitable choices
of vertices $s,t$: If there exists an optimal algorithm to list the
$st$-paths in $G$, then there exists an optimal algorithm to list the
cycles in $G$.  Hence, we can focus on listing $st$-paths.


\textbf{History of the problem.}
The classical problem of listing all the cycles of a graph has been
extensively studied for its many applications in several fields,
ranging from the mechanical analysis of chemical
structures~\cite{Sussenguth65} to the design and analysis of reliable
communication networks, and the graph isomorphism
problem~\cite{Welch66}.
There is a vast body of work, and the
majority of the algorithms listing all the cycles can be divided into
the following three classes (see~\cite{Bezem87,Mateti76} for excellent
surveys).

\textit{Search space algorithms.} 
According to this approach, cycles are looked for in an appropriate
search space.  In the case of undirected graphs, the \emph{cycle vector
space} \cite{Diestel} turned out to be the most promising choice: from
a basis for this space, all vectors are computed and it is tested whether
they are a cycle. Since the algorithm introduced in~\cite{Welch66}, many
algorithms have been proposed: however, the complexity of these
algorithms turns out to be exponential in the dimension of the vector
space, and thus in $n$.  For the special case of planar graphs,
in~\cite{Syslo81} the author was able to design an algorithm listing
all the cycles in $O((\eta + 1)n)$ time.

\textit{Backtrack algorithms.} 
According to this approach, all paths are generated by
backtrack and, for each path, it is tested whether it is a
cycle. One of the first algorithms based on this approach is the one
proposed in~\cite{Tiernan70}, which is however exponential in $\eta$. By
adding a simple pruning strategy, this algorithm has been successively
modified in~\cite{Tarjan73}: it lists all the cycles in $O(nm(\eta+1))$
time. Further improvements were proposed
in~\cite{Johnson1975},~\cite{Szwarcfiter76}, and~\cite{Read75},
leading to $O((\eta+1)(m+n))$-time algorithms that work for both directed
and undirected graphs. 

\textit{Algorithms using the powers of the adjacency matrix.} 
This approach uses the so-called \emph{variable adjacency matrix}, that is,
the formal sum of edges joining two vertices. A non-zero element of
the $p$-th power of this matrix is the sum of all walks of length $p$:
hence, to compute all cycles, we compute the $n$th power of
the variable adjacency matrix. This approach is not very efficient
because of the non-simple walks. All algorithms based
on this approach (e.g.\mbox{} \cite{Ponstein66}
and~\cite{Yau67}) basically differ only on the way they avoid to
consider walks that are neither paths nor cycles.

Almost 40 years after Johnson's algorithm~\cite{Johnson1975}, the
problem of efficiently listing all cycles of a graph is still an
active area of research
(e.g.~\cite{ourSPIRE2012,Halford04,Horvath04,Liu06,Sankar07,Wild08,Schott11}).  New
application areas have emerged in the last decade, such as
bioinformatics: for example, two algorithms for this problem have been
proposed in~\cite{Klamt06} and~\cite{Klamt09} while studying
biological interaction graphs. Nevertheless, no significant
improvement has been obtained from the theory standpoint: in
particular, Johnson's algorithm is still the theoretically most
efficient.

\textbf{Hard graphs for Johnson's algorithm.}
We now describe an infinite family of dense undirected graphs, in
which each graph with $n$ vertices and $m$ edges contains $\eta =
\Omega(m)$ cycles $c$ with $|c| = O(1)$ edges. Suppose w.l.o.g.\mbox{}
that $n$ is a multiple of 3, and consider a tripartite complete graph
$K^3_{n/3} = (V_1 \cup V_2 \cup V_3, E)$, where $V_1, V_2, V_3$ are
pairwise disjoint sets of vertices of size $n/3$ each, and $E$ is the
set of edges thus formed: for each choice of $x_1 \in V_1, x_2 \in
V_2, x_3 \in V_3$, edges $(x_1,x_2)$, $(x_2,x_3)$, and $(x_1,x_3)$
belong to $E$; and no edges exist that connect any two vertices within $V_i$,
for $i=1,2,3$. Note that each choice of $x_1 \in V_1, x_2 \in V_2, x_3
\in V_3$ in $K^3_{n/3}$ gives rise to a distinct cycle $c \equiv
(x_1,x_2), (x_2,x_3), (x_3,x_1)$ of length $|c|=3$, and there are no
other (simple) cycles. Thus, there are $\eta = (n/3)^3$ cycles in the
graph $K^3_{n/3}$, where $m = 3 (n/3)^2$.  For each graph
$K^3_{n/3}$ in this family, our algorithm requires $\Theta(\eta+m) =
\Theta(\eta) = \Theta(n^3)$ time to list the cycles, thus saving $O(m+n)
= O(n^2)$ time when compared to Johnson's algorithm that takes
$O(n^5)$ in this case.

Note that the analysis of the time complexity of Johnson's algorithm
is not pessimistic and cannot match the one of our algorithm for
listing cycles.  For example, consider the sparse ``diamond'' graph
$D_n = (V, E)$ in Fig.~\ref{fig:johnsoncounter} with $n=2k+3$
vertices in $V = \{a,b,c, v_1, \ldots, v_k, u_1, \ldots, u_k\}$. There
are $m = \Theta(n)$ edges in $E = \{ (a,c)$, $(a,v_i)$, $(v_i,b)$,
$(b,u_i)$, $(u_i,c)$, for $1 \leq i \leq k\}$, and three kinds of
(simple) cycles:
(1)~$(a, v_i), (v_i, b), (b, u_j), (u_j, c), (c, a)$ for $1 \leq i, j
\leq k$;
(2)~$(a, v_i), (v_i, b), (b, v_j), (v_j, a)$ for $1 \leq i < j \leq
k$;
(3)~$(b, u_i), (u_i, c), (c, u_j), (u_j, b)$ for $1 \leq i < j \leq
k$,
totalizing $\eta = \Theta(n^2)$ cycles.
Our algorithm takes $\Theta(n + k^2) = \Theta(\eta) = \Theta(n^2)$
time to list these cycles.  On the other hand, Johnson's algorithm
takes $\Theta(n^3)$ time, and the discovery of the $\Theta(n^2)$ cycles
in~(1) costs $\Theta(k) = \Theta(n)$ time each: the backtracking
procedure in Johnson's algorithm starting at $a$, and passing through
$v_i$, $b$ and $u_j$ for some $i,j$, arrives at $c$: at that point, it
explores all the vertices $u_l$ $(l \neq i)$ even if they do not lead
to cycles when coupled with $a$, $v_i$, $b$, $u_j$, and $c$.

%

\section{Overview and Main Ideas}
\label{sec:overview}


\subsection*{Preliminaries}

Let $G=(V,E)$ be an undirected connected graph with $n=|V|$ vertices
and $m=|E|$ edges, without self-loops or parallel edges. For a vertex
$u \in V$, we denote by $N(u)$ the neighborhood of $u$ and by
$d(u)=|N(u)|$ its degree.  $G[V']$ denotes the subgraph \emph{induced}
by $V' \subseteq V$, and $G - u$ is the induced subgraph $G[ V
\setminus \{u\}]$ for $u \in V$. Likewise for edge $e \in E$, we adopt
the notation $G-e = (V,E \setminus \{e\})$.

Paths are simple in $G$ by definition: we refer to a path $\pi$ by its
natural sequence of vertices or edges.  A path $\pi$ from $s$ to $t$,
or $st$-\emph{path}, is denoted by $\pi = s \leadsto t$. Additionally,
$\setofpaths(G)$ is the set of all paths in $G$ and
$\setofpaths_{s,t}(G)$ is the set of all $st$-paths in $G$.  When
$s=t$ we have cycles, and $\setofcycles(G)$ denotes the set of all
cycles in $G$. We denote the number of edges in a path $\pi$ by
$|\pi|$ and in a cycle~$c$ by $|c|$. In this paper, we consider the following
problems.


\begin{problem}[Listing st-Paths]
	\label{prob:liststpaths}
	Given a graph $G=(V,E)$ and two distinct vertices $s,t \in V$,
        output all the paths $\pi \in \setofpaths_{s,t}(G)$.
\end{problem} 

\begin{problem}[Listing Cycles]
	\label{prob:listcycles}
	Given a graph $G=(V,E)$, output all the cycles $c \in \setofcycles(G)$.
\end{problem} 


Our algorithms assume without loss of generality that the input graph
$G$ is connected, hence $m \ge n-1$, and use the decomposition of $G$
into biconnected components. Recall that an \emph{articulation point}
(or cut-vertex) is a vertex $u \in V$ such that the number of
connected components in $G$ increases when $u$ is removed. $G$ is
\emph{biconnected} if it has no articulation points. Otherwise, $G$
can always be decomposed into a tree of biconnected components, called
the \emph{block tree}, where each biconnected component is a maximal
biconnected subgraph of $G$ (see Fig.~\ref{fig:beadstring}), and
two biconnected components are adjacent if and only if they share an
articulation point.


\subsection{Reduction to \boldmath{$st$}-paths}
\label{sub:reduction-paths}

We now show that listing cycles reduces to listing $st$-paths while
preserving the optimal complexity.  


\begin{lemma}
  \label{lemma:reduction}
  Given an algorithm that solves Problem~\ref{prob:liststpaths} in
  optimal $O(m + \sum_{\pi \in \setofpaths_{s,t}(G)}{|\pi|})$ time, there
  exists an algorithm that solves Problem~\ref{prob:listcycles} in
  optimal $O(m + \sum_{c \in \setofcycles(G)}{|c|})$ time.
\end{lemma}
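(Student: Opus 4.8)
The plan is to list the cycles in groups, one per vertex, using a fixed total order $v_1,\dots,v_n$ of $V$ to prevent duplicates and invoking the assumed $st$-path algorithm as a black box on a small auxiliary graph in each group. Write $G_i=G[\{v_i,\dots,v_n\}]$. Since every cycle of $G$ has a unique minimum-rank vertex, it suffices, for each $i$, to list exactly the cycles of $G_i$ through $v_i$; these all lie inside $\hat G_i$, defined as the union of the biconnected components of $G_i$ that contain $v_i$ (equivalently, the edges of $G_i$ lying on a cycle through $v_i$, together with their endpoints). In group $i$ I would form $\hat G_i'$ from $\hat G_i$ by replacing $v_i$ with two new vertices $s_i,t_i$, each joined to all of $N_{\hat G_i}(v_i)$; run the assumed algorithm to list all $s_it_i$-paths of $\hat G_i'$; and, for each listed path $\pi=s_i,a,\dots,b,t_i$ with $\mathrm{rank}(a)<\mathrm{rank}(b)$, output the cycle $v_i,a,\dots,b,v_i$ (this rule automatically discards the $d_{\hat G_i}(v_i)$ paths of length $2$, where $a=b$).

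Correctness amounts to a bijection with $\setofcycles(G)$. Identifying $s_i,t_i$ back to $v_i$ turns a simple $s_it_i$-path of length $\ge 3$ into a cycle of $G_i$ through $v_i$ (its internal vertices are the path's vertices other than the endpoints, hence distinct and different from $v_i$), and different such paths in the same group yield different cycles; conversely a cycle $c\in\setofcycles(G)$, taken in the group $i$ of its minimum-rank vertex, lies in $G_i$ and, being $2$-connected, inside the biconnected component of $G_i$ through $v_i$, hence inside $\hat G_i$, so its two traversals are $s_it_i$-paths of $\hat G_i'$ that identify to $c$, exactly one of them satisfying $\mathrm{rank}(a)<\mathrm{rank}(b)$. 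Thus $c$ is output exactly once in group $i$; and since all vertices of an output cycle have rank $\ge i$ while $v_i$ has rank $i$, no cycle is output in two groups.

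For the time bound I would add up the group costs. Reading and building $\hat G_i'$ costs $O(|E(\hat G_i)|+d_{\hat G_i}(v_i))$, and by hypothesis the $st$-path algorithm costs $O(|E(\hat G_i')|+\sum_\pi|\pi|)=O(|E(\hat G_i)|+\sum_\pi|\pi|)$; the length-$2$ paths contribute $O(d_{\hat G_i}(v_i))$ to $\sum_\pi|\pi|$, and each remaining listed path has the same length as the cycle it produces, so $\sum_\pi|\pi|=O(d_{\hat G_i}(v_i))+2\sum_{c}|c|$ with $c$ ranging over cycles of $G_i$ through $v_i$. Summing over $i$: each cycle of $G$ is a cycle of $G_i$ through $v_i$ for exactly one $i$, so $\sum_i\sum_c|c|=\sum_{c\in\setofcycles(G)}|c|$; also $\sum_i d_{\hat G_i}(v_i)\le\sum_i d_{G_i}(v_i)=m$ since each edge $(v_i,v_j)$ with $i<j$ is counted only in group $i$; and decomposing each $\hat G_i$ into the biconnected components of $G_i$ through $v_i$, the single-edge ones again total at most $m$ over all $i$, while for a $2$-connected component $B$ every edge of $B$ lies on a cycle of $B$ through $v_i$ (subdivide the edge, use $2$-connectivity, contract back), so $|E(B)|\le\sum_{c\subseteq B,\,v_i\in c}|c|$, and these sums telescope to $\sum_{c\in\setofcycles(G)}|c|$. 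Hence the total is $O(m+\sum_{c\in\setofcycles(G)}|c|)$, which is optimal.

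I expect the main obstacle to be producing the graphs $\hat G_i$ within this budget: recomputing a block decomposition of $G_i$ from scratch in each group is too slow, since $\sum_i|E(G_i)|$ can be $\Theta(n^2)$ even when $G$ is one long cycle (for which $\sum_c|c|=n$). I would resolve this by processing $i$ from $n$ down to $1$, so that $G_i$ arises from $G_{i+1}$ by inserting the vertex $v_i$ and its edges to the already-present higher-ranked vertices, while maintaining the block forest of the current graph under these edge insertions with an incremental biconnectivity structure that runs in near-linear total time; keeping each block together with its edge list then lets group $i$ list the edges of the blocks containing $v_i$ in time $O(|E(\hat G_i)|+d_{G_i}(v_i)+1)$, matching the accounting above. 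Note that no separate initial block decomposition of $G$ is needed, as cycles lie inside biconnected components automatically and the analysis never used biconnectivity of $G$ itself.
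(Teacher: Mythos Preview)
Your reduction is correct but follows a genuinely different route from the paper's. You group cycles by their minimum-rank vertex and split that vertex into a source--sink pair; the paper instead keeps a list $L$ of biconnected components, and for each $B$ in $L$ repeatedly picks a back edge $b=(s,t)$, lists all $st$-paths in $B-b$ (these are exactly the cycles of $B$ through $b$), deletes $b$, and appends the biconnected components of $B-b$ back to $L$. The cost analysis is then one line: since $B$ is $2$-connected, every edge of $B-b$ lies on some cycle through $b$, hence $|E_B|-1\le\sum_{\pi}|\pi|$, and the $O(|E_B|)$ cost of recomputing the block decomposition from scratch is paid for by the cycles just output. No auxiliary data structure is needed.

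Your version is sound up to the very step you flag as the obstacle. The bijection is correct, and the bound $\sum_i|E(\hat G_i)|=O\bigl(m+\sum_{c}|c|\bigr)$ is proved exactly as you say. But producing each $\hat G_i$ within that budget requires the incremental biconnectivity structure you invoke, and the standard bound there is $O(m\,\alpha(m,n))$ total (Westbrook--Tarjan), not $O(m)$; your phrase ``near-linear'' is honest but leaves a formal gap relative to the stated $O\bigl(m+\sum_{c}|c|\bigr)$. One can plausibly close it with an offline argument exploiting that the insertion order is known, yet this is nontrivial and not spelled out. The paper's edge-deletion scheme sidesteps the issue entirely: it never maintains anything, it just recomputes and charges the recomputation to the output. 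That is what the paper's approach buys---simplicity and an exact linear overhead---while your grouping by minimum-rank vertex gives a cleaner combinatorial bijection but pays for it in data-structure complexity.
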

\begin{proof}
  Compute the biconnected components of $G$ and keep them in a list
  $L$. Each (simple) cycle is contained in one of the biconnected
  components and therefore we can treat each biconnected component
  individually as follows. While $L$ is not empty, extract a biconnected
  component $B=(V_{B},E_{B})$ from $L$ and repeat the following three
  steps: $(i)$ compute a DFS traversal of $B$ and take any back edge
  $b=(s,t)$ in $B$; $(ii)$ list all $st$-paths in $B-b$, i.e.~the
  cycles in $B$ that include edge~$b$; $(iii)$ remove edge $b$ from
  $B$, compute the new biconnected components thus created by removing
  edge~$b$, and append them to $L$. When $L$ becomes empty, all the
  cycles in $G$ have been listed.

  Creating $L$ takes $O(m)$ time. For every $B \in L$, steps $(i)$ and
  $(iii)$ take $O(|E_B|)$ time.  Note that step $(ii)$ always outputs
  distinct cycles in $B$ (i.e.~$st$-paths in $B-b$) in
  $O(|E_{B}|+\sum_{\pi \in \setofpaths_{s,t}(B-b)}{|\pi|})$ time.
  However, $B-b$ is then decomposed into biconnected components whose
  edges are traversed again. We can pay for the latter cost: for any
  edge $e \neq b$ in a biconnected component $B$, there is always a
  cycle in $B$ that contains both $b$ and $e$ (i.e.\mbox{} it is an
  $st$-path in $B-b$), hence $\sum_{\pi \in
    \setofpaths_{s,t}(B-b)}{|\pi|}$ dominates the term $|E_{B}|$,
  i.e.~$\sum_{\pi \in \setofpaths_{s,t}(B-b)}{|\pi|}=
  \Omega(|E_{B}|)$.  Therefore steps $(i)$--$(iii)$ take $O(\sum_{\pi
    \in \setofpaths_{s,t}(B-b)}{|\pi|})$ time. When $L$ becomes empty,
  the whole task has taken $O(m + \sum_{c \in \setofcycles(G)}{|c|})$
  time.
\end{proof}

\subsection{Decomposition into biconnected components}
\label{sec:decomposition}

\begin{figure}[t!]
\centering
\begin{minipage}[b]{0.3\textwidth}
\centering
\definecolor{cqcqcq}{rgb}{0,0,0}
\begin{tikzpicture}[scale=0.8,line cap=round,line join=round,>=triangle 45,x=1.0cm,y=1.0cm]
\draw (-2,2)-- (-1,3);
\draw [dash pattern=on 5pt off 5pt] (-2,2)-- (-1,2.52);
\draw [dash pattern=on 5pt off 5pt] (-2,2)-- (-1,2);
\draw (-1,3)-- (0,2);
\draw [dash pattern=on 5pt off 5pt] (-1,2.52)-- (0,2);
\draw [dash pattern=on 5pt off 5pt] (-1,2)-- (0,2);
\draw (-2,2)-- (-1,1);
\draw (-1,1)-- (0,2);
\draw (0,2)-- (1,3);
\draw [dash pattern=on 5pt off 5pt] (0,2)-- (1.04,2.48);
\draw [dash pattern=on 5pt off 5pt] (0,2)-- (1,2);
\draw (0,2)-- (1,1);
\draw (1,3)-- (2,2);
\draw [dash pattern=on 5pt off 5pt] (1.04,2.48)-- (2,2);
\draw [dash pattern=on 5pt off 5pt] (1,2)-- (2,2);
\draw (1,1)-- (2,2);
\draw [dash pattern=on 5pt off 5pt] (-2,2)-- (-1,1.48);
\draw [dash pattern=on 5pt off 5pt] (-1,1.48)-- (0,2);
\draw [dash pattern=on 5pt off 5pt] (0,2)-- (1,1.44);
\draw [dash pattern=on 5pt off 5pt] (1,1.44)-- (2,2);
\draw [shift={(0,2)}] plot[domain=0:3.14,variable=\t]({1*2*cos(\t r)+0*2*sin(\t r)},{0*2*cos(\t r)+1*2*sin(\t r)});
\begin{footnotesize}
\fill [color=black] (-2,2) circle (1.5pt);
\draw[color=black] (-2,2) node[left] {$a$};
\fill [color=black] (-1,3) circle (1.5pt);
\draw[color=black] (-1,3) node[above] {$v_1$};
\fill [color=black] (-1,2) circle (1.5pt);
\fill [color=black] (-1,1) circle (1.5pt);
\draw[color=black] (-1,1) node[below] {$v_k$};
\fill [color=black] (-1,1.48) circle (1.5pt);
\fill [color=black] (-1,2.52) circle (1.5pt);
\fill [color=black] (0,2) circle (1.5pt);
\draw[color=black] (0,2) node[above] {$b$};
\fill [color=black] (1,3) circle (1.5pt);
\draw[color=black] (1,3) node[above] {$u_1$};
\fill [color=black] (1,2) circle (1.5pt);
\fill [color=black] (1.04,2.48) circle (1.5pt);
\fill [color=black] (1,1) circle (1.5pt);
\draw[color=black] (1,1) node[below] {$u_k$};
\fill [color=black] (2,2) circle (1.5pt);
\draw[color=black] (2,2) node[right] {$c$};
\fill [color=black] (1,1.44) circle (1.5pt);
\end{footnotesize}
\end{tikzpicture}
\caption{Diamond graph.}
\label{fig:johnsoncounter}
\end{minipage}
\begin{minipage}[b]{0.6\textwidth}
\centering
\begin{tikzpicture}
[nodeDecorate/.style={shape=circle,inner sep=1pt,draw,thick,fill=black},%
  lineDecorate/.style={-,dashed},%
  elipseDecorate/.style={color=gray!30},
  scale=0.45]
\fill [elipseDecorate] (5,10) circle (2);
\fill [elipseDecorate] (9,10) circle (2);
\fill [elipseDecorate] (2,10) circle (1);
\fill [elipseDecorate] (-1,10) circle (2);
\fill [elipseDecorate,rotate around={-55:(-1,8)}] (-1,7) circle (1);

\draw (5,10) circle (2);
\draw (9,10) circle (2);
\draw (2,10) circle (1);
\draw (-1,10) circle (2);
\draw (5,7) circle (1);
\draw (9,7) circle (1);
\draw [rotate around={55:(-1,8)}] (-1,7) circle (1);
\draw [rotate around={-55:(-1,8)}] (-1,7) circle (1);
\draw (13,10) circle (2);
\draw (13,7) circle (1);
\draw (-4,10) circle (1);
\begin{footnotesize}
\node (7) at (-2,7) [nodeDecorate,label=above:$s$] {};
\node (14) at (9,11) [nodeDecorate,label=above:$t$] {};
\end{footnotesize}
\foreach \nodename/\x/\y in {
  0/7/10, 1/5/11,
  2/3/10, 3/1/10, 4/-3/10, 5/-1/10, 6/-1/8, 7/-2/7, 8/-1/11,
  9/0/7,
  11/5/10, 12/4/9, 13/5/8, 14/9/11, 15/11/10, 16/9/9,
  17/9/7 , 18/9/8, 50/5/7, 51/13/11, 52/13/8, 53/13/7, 54/-4/10}
{
  \node (\nodename) at (\x,\y) [nodeDecorate] {};
}

\path
\foreach \startnode/\endnode in {6/7, 6/9, 5/6, 5/3, 5/8, 4/5, 3/2,
2/11, 1/11, 11/12, 11/0, 11/13, 13/50, 0/14, 14/15, 15/16, 16/18,
18/17, 15/51, 15/52, 51/52, 52/53, 54/4}
{
  (\startnode) edge[lineDecorate] node {} (\endnode)
};

\path
\foreach \startnode/\endnode/\bend in { 8/3/20, 6/3/20, 4/8/10,
12/13/20, 1/0/20, 2/1/20, 13/0/20, 0/18/10}
{
  (\startnode) edge[lineDecorate, bend left=\bend] node {} (\endnode)
};

\end{tikzpicture}
\caption{Block tree of $G$ with bead string $\sbeadstring$ in gray.}
\label{fig:beadstring}
\end{minipage}
\end{figure}

We now focus on listing $st$-paths
(Problem~\ref{prob:liststpaths}). We use the decomposition of $G$ into
a block tree of biconnected components.  Given vertices $s,t$, define
its \emph{bead string}, denoted by $\sbeadstring$, as the unique
sequence of one or more adjacent biconnected components (the
\emph{beads}) in the block tree, such that the first one contains $s$
and the last one contains $t$ (see Fig.~\ref{fig:beadstring}): these
biconnected components are connected through articulation points,
which must belong to all the paths to be listed.

\begin{lemma}
  \label{lemma:beadstring}
  All the $st$-paths in $\setofpaths_{s,t}(G)$ are contained in the
  induced subgraph $G[\sbeadstring]$ for the bead string
  $\sbeadstring$. Moreover, all the articulation points in
  $G[\sbeadstring]$ are traversed by each of these paths.
\end{lemma}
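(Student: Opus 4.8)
The plan is to prove both claims by appealing to the standard structure of the block tree. First I would recall that the block tree is indeed a tree: its nodes are the biconnected components (beads) and the articulation points, with a bead adjacent to an articulation point iff the bead contains it. Since $s$ and $t$ are vertices of the connected graph $G$, each lies in at least one bead, and I would fix beads $B_s \ni s$ and $B_t \ni t$ (if $s$ is itself an articulation point, any incident bead works; the resulting bead string is still well defined because the articulation points it must traverse are forced). Because the block tree is a tree, there is a unique simple path in it between $B_s$ and $B_t$, and this path alternates between beads and articulation points; reading off the beads along it gives the sequence $\sbeadstring = B_1, B_2, \ldots, B_k$ with $s \in B_1$, $t \in B_k$, and consecutive beads $B_i, B_{i+1}$ sharing exactly one articulation point $a_i$.

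The core of the argument is a cut/separation claim: for each $i$, the articulation point $a_i$ separates $s$ from $t$ in $G$. Indeed, removing $a_i$ from $G$ disconnects the block tree into two parts, one containing $B_1$ (hence $s$) and the other containing $B_k$ (hence $t$), by the defining property of articulation points and the tree structure. Hence every $st$-path in $G$ must pass through every $a_i$, which proves the "moreover" part once I argue these $a_i$ are exactly the articulation points of $G[\sbeadstring]$ — the latter holds because any articulation point of $G$ lying inside $G[\sbeadstring]$ must be one of the shared points $a_i$ (an internal vertex of a single bead is not an articulation point of $G$, and the beads are maximal biconnected, so within one bead there is no cut vertex). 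Then I would show the first claim: take an $st$-path $\pi$ and suppose for contradiction it uses a vertex $v \notin G[\sbeadstring]$. Let $v$ lie in a bead $B$ hanging off the bead string at some articulation point $a_i$ (more precisely, $a_i$ is the articulation point on the block-tree path from $B$ to the bead string). Since $\pi$ must pass through $a_i$, and entering the "$B$ side" forces it to use $a_i$ both on the way in and on the way out, $\pi$ would repeat $a_i$ — contradicting simplicity. Therefore $\pi$ stays entirely within $G[\sbeadstring]$.

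A cleaner way to package the last step, which I would probably adopt, is this: removing the set $A = \{a_1, \ldots, a_{k-1}\}$ of shared articulation points from $G$ splits $G$ into connected components; exactly the components "between consecutive $a_i$'s" together with $A$ make up $G[\sbeadstring]$, while every other component is separated from both $s$ and $t$ by some single $a_i$. An $st$-path, being simple, visits each $a_i$ at most once, so it cannot leave the $\sbeadstring$ region and come back; combined with the fact that it is forced to visit all the $a_i$ in order, it is confined to $G[\sbeadstring]$ and traverses all of $a_1, \ldots, a_{k-1}$. The main obstacle, and the only place care is genuinely needed, is the bookkeeping around the case where $s$ or $t$ is itself an articulation point (so the "first" or "last" bead is not canonical) and the verification that the articulation points of the induced subgraph $G[\sbeadstring]$ coincide with $\{a_1,\dots,a_{k-1}\}$ rather than being a larger or smaller set; both are routine consequences of the maximality of biconnected components and the tree structure, but they should be stated explicitly so that the later sections may rely on "the articulation points of $G[\sbeadstring]$" unambiguously.
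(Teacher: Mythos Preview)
Your proposal is correct and follows essentially the same block-tree argument as the paper: both exploit that the bead string is a path in the block tree, that the shared articulation points $a_i$ separate $s$ from $t$, and that any excursion outside $\sbeadstring$ would force a repeated vertex. The paper's write-up is terser (it phrases the first claim via an edge $(u,v)$ leaving $\sbeadstring$ and argues $v$ has no way back by maximality of biconnected components), whereas you route the same idea through the articulation point at which the excursion hangs off; your extra care about the case where $s$ or $t$ is itself an articulation point, and about identifying the articulation points of $G[\sbeadstring]$ with the $a_i$, is a welcome clarification the paper leaves implicit.
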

\begin{proof}
  Consider an edge $e = (u,v)$ in $G$ such that $u \in \sbeadstring$
  and $v \notin \sbeadstring$. Since the biconnected components of a
  graph form a tree and the bead string $\sbeadstring$ is a path in
  this tree, there are no paths $v \leadsto w$ in $G-e$ for any $w \in
  \sbeadstring$ because the biconnected components in $G$ are maximal
  and there would be a larger one (a contradiction).
  Moreover, let $B_1, B_2, \ldots, B_r$ be the biconnected components
  composing $\sbeadstring$, where $s \in B_1$ and $t \in B_r$. If
  there is only one biconnected component in the path (i.e.~$r=1$),
  there are no articulation points in $\sbeadstring$.  Otherwise, all
  of the $r-1$ articulation points in $\sbeadstring$ are traversed by
  each path $\pi \in \setofpaths_{s,t}(G)$: indeed, the articulation
  point between adjacent biconnected components $B_i$ and $B_{i+1}$ is
  their only vertex in common and there are no edges linking $B_i$ and
  $B_{i+1}$.
\end{proof}

We thus restrict the problem of listing the paths in
$\setofpaths_{s,t}(G)$ to the induced subgraph $G[\sbeadstring]$,
conceptually isolating it from the rest of $G$. For the sake of
description, we will use interchangeably $\sbeadstring$ and
$G[\sbeadstring]$ in the rest of the paper.

\subsection{Binary partition scheme}
\label{sec:basic-scheme}

We list the set of $st$-paths in $\sbeadstring$, denoted by
$\setofpaths_{s,t}(\sbeadstring)$, by applying the binary partition
method (where $\setofpaths_{s,t}(G) = \setofpaths_{s,t}(\sbeadstring)$
by Lemma~\ref{lemma:beadstring}): We choose an edge $e = (s,v)$
incident to~$s$ and then list all the $st$-paths that include $e$ and
then all the $st$-paths that do not include $e$. Since we delete some
vertices and some edges during the recursive calls, we proceed as follows.

\emph{Invariant}: At a generic recursive step on vertex $u$
(initially, $u:=s$), let $\pi_s = s \leadsto u$ be the path discovered
so far (initially, $\pi_s$ is empty $\{\}$). Let $\beadstring$ be the
current bead string (initially, $\beadstring :=
\sbeadstring$). More precisely, $\beadstring$ is defined as follows:
$(i)$~remove from $\sbeadstring$ all the nodes in $\pi_s$ but $u$, and
the edges incident to $u$ and discarded so far; $(ii)$~recompute the
block tree on the resulting graph; $(iii)$~$\beadstring$ is the unique
bead string that connects $u$ to $t$ in the recomputed block tree.

\smallskip

\emph{Base case}: When $u=t$, output the $st$-path $\pi_s$ computed so
far. 

\smallskip \emph{Recursive rule}: Let $\setofpaths(\pi_s, u,
\beadstring)$ denote the set of $st$-paths to be listed by the current
recursive call. Then, it is the union of the following two disjoint
sets, for an edge $e=(u,v)$ incident to~$u$:
\begin{itemize}
\item {[left branching]} the $st$-paths in $\setofpaths(\pi_s \cdot e,
  v, \vbeadstring)$ that use $e$, where $\vbeadstring$ is the unique
  bead string connecting $v$ to $t$ in the block tree resulting from
  the deletion of vertex $u$ from $\beadstring$;
\item {[right branching]} the $st$-paths in $\setofpaths(\pi_s, u,
  \beadstring')$ that do \emph{not} use~$e$, where $\beadstring'$ is
  the unique bead string connecting $u$ to $t$ in the block tree
  resulting from the deletion of edge $e$ from $\beadstring$.
\end{itemize}

\noindent
Hence, $\setofpaths_{s,t}(\sbeadstring)$ (and so
$\setofpaths_{s,t}(G)$) can be computed by invoking $\setofpaths(\{\}, s,
\sbeadstring)$. The correctness and completeness of the above approach
is discussed in Section~\ref{sec:intro-cert}.

At this point, it should be clear why we introduce the notion of bead
strings in the binary partition. The existence of the partial path
$\pi_s$ and the bead string $\beadstring$ guarantees that there surely
exists at least one $st$-path. But there are two sides of the coin
when using $\beadstring$.

\emph{(1)}~One advantage is that we can avoid useless recursive calls:
If vertex $u$ has only one incident edge $e$, we just perform the left
branching; otherwise, we can safely perform both the left and right
branching since the \emph{first} bead in $\beadstring$ is always a
biconnected component by definition (and so there exist both an
$st$-path that traverses $e$ and one that does not traverse $e$).

\emph{(2)}~The other side of the coin is that we have to maintain the
bead string $\beadstring$ as $\vbeadstring$ in the left branching and
as $\beadstring'$ in the right branching by
Lemma~\ref{lemma:beadstring}. Note that these bead strings are surely
non-empty since $\beadstring$ is non-empty by induction (we only
perform either left or left/right branching when there are solutions by
point~\emph{(1)}).

To efficiently address point~\emph{(2)}, we need to introduce the notion of
certificate as described next.

\subsection{Introducing the certificate}
\label{sec:intro-cert}

Given the bead string $\beadstring$, we call the \emph{head} of
$\beadstring$, denoted by $\head$, the first biconnected component in
$\beadstring$, where $u \in \head$. Consider a DFS tree of
$\beadstring$ rooted at $u$ that changes along with $\beadstring$, and
classify the edges in $\beadstring$ as tree edges or back edges (no
cross edges since the graph is undirected).

To maintain $\beadstring$ (and so $\head$) during the recursive calls,
we introduce a \emph{certificate} $C$ (see Fig.~\ref{fig:Certificate}
for an example): It is a suitable data structure that uses the above
classification of the edges in $\beadstring$, and supports the
following operations, required by the binary partition scheme.
\begin{itemize}
	\setlength{\itemsep}{0pt} 
      \item $\chooseedge(C,u)$: returns an edge $e = (u,v)$ with $v
        \in \head$ such that $\pi_s \cdot (u,v) \cdot u \leadsto t$ is
        an $st$-path such that $u \leadsto t$ is inside
        $\beadstring$. Note that $e$ always exists since $\head$ is
        biconnected. Also, the chosen $v$ is the last one in DFS order
        among the neighbors of $u$: in this way, the (only) tree edge
        $e$ is returned when there are no back edges leaving from
        $u$.\footnote{As it will be clear in
          Sections~\ref{sec:recursion-amortization}
          and~\ref{sec:certificate}, this order facilitates the
          analysis and the implementation of the certificate.}
      \item $\oracleleft(C,e)$: for the given $e=(u,v)$, it obtains
        $\vbeadstring$ from $\beadstring$ as discussed in
        Section~\ref{sec:basic-scheme}. This implies updating also
        $\head$, $C$, and the block tree, since the recursion
        continues on~$v$. It returns bookkeeping information $I$ for
        what is updated, so that it is possible to revert to
        $\beadstring$, $\head$, $C$, and the block tree, to their
        status before this operation.
\item $\oracleright(C,e)$: for the given $e=(u,v)$, it obtains
        $\beadstring'$ from $\beadstring$ as discussed in
        Section~\ref{sec:basic-scheme}, which implies updating also $\head$,
  $C$, and the block tree. It returns bookkeeping information $I$ as
  in the case of $\oracleleft(C,e)$.
\item $\undooracle(C,I)$: reverts the bead string to $\beadstring$, the
  head $\head$, the certificate $C$, and the block tree, to their
  status before operation $I := \oracleleft(C,e)$ or $I :=
  \oracleright(C,e)$ was issued (in the same recursive call).
\end{itemize}

\begin{figure*}
\centering
\begin{minipage}{0.45\textwidth}
\begin{figure}[H]
\centering
\begin{tikzpicture}
[nodeDecorate/.style={shape=circle,inner sep=1pt,draw,thick,fill=black},%
  lineDecorate/.style={-,dashed},%
  elipseDecorate/.style={color=gray!30},
  scale=0.20]
\draw (10,22) circle (9);
\draw (5,11.1) circle (3);

\node (s) at (10,34) [nodeDecorate,color=lightgray,label=above left:$s$] {};
\node (u) at (10,31) [nodeDecorate,label=above left:$u$] {};

\node (tp) at (6.2,13.9) [nodeDecorate] {};
\node (t) at (5.5,11) [nodeDecorate,label=below:$t$] {};

\path {
	(s) edge[snake,-,color=lightgray] node {\quad\quad$\pi_s$} (u)
	(u) edge node {} (tp)
	(tp) edge node {} (t)
};

\node (a) at (9.3,28) [nodeDecorate,] {};
\node (b) at (7.1,18) [nodeDecorate,] {};
\node (c) at (13,25) [nodeDecorate,] {};
\node (d) at (14.5,22) [nodeDecorate] {};
\node (e) at (11,22) [nodeDecorate] {};
\node (f) at (16,19) [nodeDecorate,] {};
\node (g) at (13,19) [nodeDecorate,] {};
\node (h) at (10,16) [nodeDecorate,] {};

\path {
	(a) edge node {} (c)
	(c) edge node {} (d)
	(d) edge node {} (f)
	(c) edge node {} (e)
	(d) edge node {} (g)
	(b) edge node {} (h)
};

\path {
	(u) edge[dashed,bend left=-40] node {} (tp)
	(f) edge[dashed,bend left=-40] node {} (u)
	(g) edge[dashed,bend left=10] node {} (c)
	(e) edge[dashed,bend left=-10] node {} (u)
	(d) edge[dashed,bend left=-20] node {} (u)
	(a) edge[dashed,bend left=5] node {} (h)
};

\end{tikzpicture}
\caption{Example certificate $C$}
\label{fig:Certificate}
\end{figure}
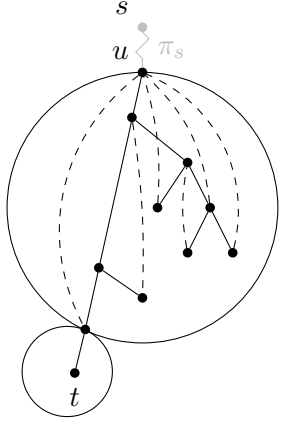
\end{minipage}
\hfill
\begin{minipage}{0.50\textwidth}

\begin{algorithm}[H]
	\caption{\label{alg:liststpaths} $\liststpaths(\pi_s,\,u,\,C)$}
\begin{algorithmic}[1]
	\IF{$u=t$}
		\STATE $\routput(\pi_s)$ \label{code:base}
		\STATE $\return$ \label{code:returnbase}
	\ENDIF
	\STATE $e = (u,v) := \chooseedge( C, u )$ \label{code:choose}
	\IF{ $e \text{ is back edge}$ \label{code:if_back}}
		\STATE $I := \oracleright(C,e)$  \label{code:right_update}
		\STATE $\liststpaths(\pi_s,\, u,\,C)$ \label{code:right_branch}
		\STATE $\undooracle(C, I)$ \label{code:right_undo}
	\ENDIF
        \STATE $I := \oracleleft(C,e)$ \label{code:left_update}
        \STATE $\liststpaths( \pi_s \cdot (u,v),\, v,\, C)$ \label{code:left_branch}
        \STATE $\undooracle(C, I)$ \label{code:left_undo}
\end{algorithmic}
\end{algorithm}
\end{minipage}
\end{figure*}

Note that a notion of certificate in listing problems has been
introduced in~\cite{Ferreira11}, but it cannot be directly applied to
our case due to the different nature of the problems and our use of more
complex structures such as biconnected components. Using our certificate
and its operations, we can now formalize the binary partition and its
recursive calls $\setofpaths(\pi_s, u, \beadstring)$ described in
Section~\ref{sec:basic-scheme} as Algorithm~\ref{alg:liststpaths},
where $\beadstring$ is replaced by its certificate $C$.

The base case ($u=t$) corresponds to lines~1--4 of
Algorithm~\ref{alg:liststpaths}. During recursion, the left branching
corresponds to lines~5 and~11-13, while the right branching to
lines~5--10. Note that we perform only the left branching when there is
only one incident edge in $u$, which is a tree edge by definition of
$\chooseedge$. Also, lines~9 and~13 are needed to restore the
parameters to their values when returning from the recursive
calls. The proof of the following lemma is in the Appendix.

\begin{lemma}
  \label{lemma:correctness_algo_listpaths}
  Algorithm~\ref{alg:liststpaths} correctly lists all the $st$-paths in
  $\setofpaths_{s,t}(G)$.
\end{lemma}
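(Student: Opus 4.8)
The plan is to prove Lemma~\ref{lemma:correctness_algo_listpaths} by induction on the recursion, showing that a single invocation $\liststpaths(\pi_s,\,u,\,C)$ outputs exactly the set $\setofpaths(\pi_s,u,\beadstring)$ of $st$-paths in $G$ whose prefix up to $u$ equals $\pi_s$ and whose suffix from $u$ lives inside the current bead string $\beadstring$ represented by $C$. The base case $u=t$ is immediate: $\pi_s$ is itself an $st$-path and it is the only one with empty suffix, so lines~1--4 are correct. For the inductive step I would argue that the left/right branching on the edge $e=(u,v)$ returned by $\chooseedge(C,u)$ induces a genuine partition: every $st$-path extending $\pi_s$ and staying in $\beadstring$ either uses $e$ or does not, these two cases are disjoint, and $\oracleleft$ (resp.\ $\oracleright$) transforms $\beadstring$ into exactly the bead string $\vbeadstring$ (resp.\ $\beadstring'$) of Section~\ref{sec:basic-scheme}, which by Lemma~\ref{lemma:beadstring} captures precisely the paths of each side; invoking the inductive hypothesis on each recursive call then finishes the count.

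The two points that need care are \emph{completeness} (no $st$-path is missed) and \emph{correctness of the pruning} (every recursive call does reach the base case and outputs at least the paths it should, i.e.\ the branches are never vacuous when taken). For completeness I would invoke Lemma~\ref{lemma:beadstring}: since all $st$-paths lie in $G[\sbeadstring]$ and traverse its articulation points, the initial call $\liststpaths(\{\},s,C_0)$ with $C_0$ the certificate of $\sbeadstring$ sees all of them; then each path, being either through $e$ or not, is routed to exactly one of the two children, and the bead-string maintenance guarantees it remains visible there. For the pruning I would use point~\emph{(1)} of Section~\ref{sec:basic-scheme}: when $u$ has a back edge, the head $\head$ of $\beadstring$ is biconnected, so there is both an $st$-path through $e$ and one avoiding $e$, hence neither the left nor the right branch is empty; when $u$ has only the tree edge, only the left branch is performed and it is non-empty because $\beadstring$ always connects $u$ to $t$. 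One also has to observe that $\chooseedge$ picks the last neighbor in DFS order so that it returns the tree edge precisely when no back edge leaves $u$, matching the test on line~6.

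A secondary obligation is to check that the $\undooracle$ calls on lines~9 and~13 correctly restore $\beadstring$, $\head$, $C$, and the block tree to their pre-branch state, so that after the right branch returns the left branch operates on the intended $\beadstring$ rather than on $\beadstring'$; this is really a statement about the data structure (deferred to Section~\ref{sec:certificate}), and for the correctness lemma it suffices to assume the specified semantics of $\oracleleft$, $\oracleright$, $\undooracle$ as black boxes, exactly as they were stated in Section~\ref{sec:intro-cert}. The main obstacle I anticipate is being precise about the invariant: one must state carefully what ``$\beadstring$ is the bead string connecting $u$ to $t$ in the block tree of $\sbeadstring$ with the vertices of $\pi_s\setminus\{u\}$ and the discarded edges removed'' means, and verify that $\oracleleft$/$\oracleright$ preserve exactly this invariant (in particular that removing $u$ and moving to $v$, resp.\ removing $e$, yields a non-empty bead string with the same characterization). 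Once the invariant is pinned down, the induction is routine; without it, the disjointness and exhaustiveness claims are hard to phrase cleanly. I would therefore devote the bulk of the proof to formulating and maintaining this invariant, and keep the branching argument itself short.
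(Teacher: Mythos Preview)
Your proposal is correct and follows essentially the same approach as the paper's own proof: both argue by induction on the recursion, maintain the invariant that $\pi_s$ is a simple $s\leadsto u$ path, and use the binary partition on $e=(u,v)$ (paths through $e$ versus paths avoiding $e$) together with the observation that when only the tree edge remains no right branch is needed. If anything, your outline is more careful than the paper's brief appendix proof, which does not explicitly spell out the role of Lemma~\ref{lemma:beadstring}, the non-emptiness of each branch, or the black-box use of $\undooracle$; all of these you have correctly identified and handled.
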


A natural question is what is the complexity: we should account for
the cost of maintaining~$C$ and for the cost of the recursive calls of
Algorithm~\ref{alg:liststpaths}. Since we cannot always maintain the
certificate in $O(1)$ time, the ideal situation for attaining an
optimal cost is taking $O(\mu)$ time if at least $\mu$ $st$-paths are
listed in the current call (and its nested calls). An obstacle to this
ideal situation is that we cannot estimate~$\mu$ efficiently and
cannot design Algorithm~\ref{alg:liststpaths} so that it takes
$O(\mu)$ adaptively. We circumvent this by using a different cost
scheme in Section~\ref{sub:recursion-tree-cost} that is based on the
recursion tree induced by Algorithm~\ref{alg:liststpaths}.
Section~\ref{sec:certificate} and
Appendix~\ref{app:extend-analys-oper} are devoted to the efficient
implementation of the above certificate operations according to the cost
scheme that we discuss next.

\subsection{Recursion tree and cost amortization}
\label{sub:recursion-tree-cost}

We now define how to distribute the costs among the several recursive
calls of Algorithm~\ref{alg:liststpaths} so that optimality is
achieved. Consider a generic execution on the bead string
$\beadstring$. We can trace this execution by using a binary recursion
tree $R$.  The nodes of $R$ are labeled by the arguments on which
Algorithm~\ref{alg:liststpaths} is run: specifically, we denote a node
in $R$ by the triple $x = \langle \pi_s, u, C \rangle$ iff it
represents the call with arguments $\pi_s$, $u$, and~$C$.\footnote{For
  clarity, we use ``nodes'' when referring to $R$ and ``vertices''
  when referring to $\beadstring$.}  The left branching is represented
by the left child, and the right branching (if any) by the right child
of the current node. 

\begin{lemma}
\label{lem:properties_recursion}
The binary recursion tree $R$ for $\beadstring$ has the following properties: 
\begin{enumerate}
	\setlength{\itemsep}{0pt} 
\item \label{item:R1} There is a one-to-one correspondence between the
  paths in $\setofpaths_{s,t}(\beadstring)$ and the leaves in the recursion
  tree rooted at node $\langle \pi_s, u, C \rangle$.
\item \label{item:R2} Consider any leaf and its corresponding $st$-path
  $\pi$: there are $|\pi|$ left branches in
  the corresponding root-to-leaf trace.
\item \label{item:R3} Consider the instruction $e:=\chooseedge(C,u)$
  in Algorithm~\ref{alg:liststpaths}: unary (i.e.\mbox{} single-child)
  nodes correspond to left branches ($e$ is a tree edge) while binary
  nodes correspond to left and right branches ($e$ is a back
  edge).
\item \label{item:R4} The number of binary nodes is
  $|\setofpaths_{s,t}(\beadstring)| - 1$.
\end{enumerate}
\end{lemma}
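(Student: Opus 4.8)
The plan is to prove the four properties essentially together by induction on the structure of the recursion tree $R$, since they are tightly interlocked: property~\ref{item:R3} is a local statement about a single node that follows directly from the code of Algorithm~\ref{alg:liststpaths} and the specification of \chooseedge, and the other three follow by combining this local fact with Lemma~\ref{lemma:correctness_algo_listpaths} and an inductive count.

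First I would dispatch property~\ref{item:R3}. Looking at Algorithm~\ref{alg:liststpaths}: after computing $e=(u,v):=\chooseedge(C,u)$, the algorithm performs a right branch (lines~7--10) \emph{only} when $e$ is a back edge, and in every case it performs a left branch (lines~11--13). Hence a node has two children exactly when $e$ is a back edge and one child when $e$ is a tree edge. By the specification of \chooseedge, the returned $v$ is the last neighbor of $u$ in DFS order, so $e$ is the (unique) tree edge out of $u$ precisely when $u$ has no back edges leaving it in the current \head; since \head\ is biconnected, $u$ always has at least one incident edge, so there are no leaves among internal recursive calls other than the base case $u=t$. This establishes~\ref{item:R3}.

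Next, property~\ref{item:R1}. By Lemma~\ref{lemma:correctness_algo_listpaths} the algorithm outputs exactly the paths in $\setofpaths_{s,t}(\beadstring)$, and each output happens at a distinct leaf (the base case $u=t$, line~2). Conversely every leaf of $R$ must be a base-case node: a non-base node always spawns at least the left child, so it is internal. Thus leaves $\leftrightarrow$ outputs $\leftrightarrow$ $st$-paths, and the correspondence is one-to-one because the left/right branching partitions $\setofpaths(\pi_s,u,\beadstring)$ into the two disjoint sets described in Section~\ref{sec:basic-scheme} (paths using $e$ versus paths not using $e$), so no two leaves can yield the same path. For property~\ref{item:R2}, I would argue that along any root-to-leaf trace, $\pi_s$ is extended by one edge $(u,v)$ exactly on each left branch (line~12) and is unchanged on each right branch (line~8); since the leaf outputs $\pi_s = \pi$ and $\pi_s$ starts empty at the root, the number of left branches on the trace equals $|\pi|$. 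Finally property~\ref{item:R4}: in any binary tree, the number of leaves equals the number of binary (two-child) nodes plus one. By~\ref{item:R1} the number of leaves is $|\setofpaths_{s,t}(\beadstring)|$, so the number of binary nodes is $|\setofpaths_{s,t}(\beadstring)| - 1$.

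The main obstacle, such as it is, is not in any single step but in making sure the base of the induction and the disjointness claim in~\ref{item:R1} are airtight: specifically, one must be careful that \emph{every} leaf is a $u=t$ base case (i.e.\ the recursion never ``gets stuck'' with $u\neq t$ and no usable edge), which is exactly where we invoke the invariant from Section~\ref{sec:basic-scheme} that $\beadstring$ is always non-empty and \head\ is biconnected, guaranteeing \chooseedge\ always returns a valid edge. Once that invariant is in hand, the remaining bookkeeping — that left branches count path-length and that binary nodes are in bijection-minus-one with leaves — is a routine structural induction on $R$, and I would present it compactly rather than unrolling every case.
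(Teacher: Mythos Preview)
Your proposal is correct and follows essentially the same approach as the paper's own proof: item~\ref{item:R3} from the code and the \chooseedge\ specification (tree edge last, biconnectivity ensuring an edge always exists), item~\ref{item:R1} from correctness plus the disjoint left/right partition, item~\ref{item:R2} from the observation that only left branches extend $\pi_s$, and item~\ref{item:R4} from the standard leaves-vs-binary-nodes count. Your write-up is in fact a bit more careful than the paper's (you make explicit that every leaf is a base case and flag the role of the non-emptiness invariant), but there is no substantive difference in the argument.
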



We define a \emph{spine} of $R$ to be a subset of $R$'s nodes linked
as follows: the first node is a node $x$ that is either the left child
of its parent or the root of $R$, and the other nodes are those
reachable from $x$ by right branching in $R$. Let $x = \langle \pi_s,
u, C \rangle$ be the first node in a spine $S$. The nodes in $S$
correspond to the edges that are incident to vertex $u$ in
$\beadstring$: hence their number equals the degree $d(u)$ of $u$ in
$\beadstring$, and the deepest (last) node in $S$ is always a tree
edge in $\beadstring$ while the others are back edges (in reverse DFS
order). Summing up, $R$ can be seen as composed by spines,
unary nodes, and leaves, where each spine has a unary node as deepest
node. This gives a global pictures of $R$ that we now exploit for the
analysis.


We define the \emph{compact head}, denoted by $\chead = (V_X, E_X)$,
as the (multi)graph obtained by compacting the maximal chains of
degree-2 vertices, except $u$, $t$, and the vertices that are the
leaves of its DFS tree rooted at $u$.

The rationale behind the above definition is that the costs defined in
terms of $\chead$ amortize well, as the size of $\chead$ and the
number of $st$-paths in the subtree of $R$ rooted at node $x = \langle
\pi_s, u, C \rangle$ are intimately related (see
Lemma~\ref{lemma:lower_bound_paths_beadstring} in
Section~\ref{sec:recursion-amortization}) while this is not
necessarily true for $\head$.

We now define the following abstract cost for spines, unary nodes, and
leaves of $R$, for a sufficiently large constant $c_0 > 0$, that
Algorithm~\ref{alg:liststpaths} must fulfill:
\begin{equation}
  \label{eq:abstrac_cost}
  T(r) =
  \left\{\begin{array}{ll}
      c_0 & \mbox{if $r$ is a unary node}\\ 
      c_0 |\pi| & \mbox{if $r$ is a leaf corresponding to path $\pi$}\\ 
      c_0 (|V_X|+|E_X|) \quad & \mbox{if $r$ is a spine with compact head $\chead$}
\end{array}\right. 
\end{equation}

\begin{lemma}
  \label{lemma:total_cost_recursion_tree}
  The sum of the costs in the nodes of the recursion tree $\sum_{r \in R} T(r) = O(\sum_{\pi \in \setofpaths_{s,t}(\beadstring)}{|\pi|})$.
\end{lemma}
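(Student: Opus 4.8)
The plan is to charge the abstract cost $T(r)$ of every node $r$ of $R$ to the $st$-paths (leaves) that descend from $r$, so that each leaf corresponding to a path $\pi$ receives only $O(|\pi|)$ total charge. The three node types are handled separately. First, \emph{leaves}: a leaf costs $c_0|\pi|$ and can simply charge itself, contributing $O(|\pi|)$. Second, \emph{unary nodes}: by Lemma~\ref{lem:properties_recursion}(\ref{item:R2}), along the root-to-leaf trace of a path $\pi$ there are exactly $|\pi|$ left branches; each unary node on that trace is the parent of a left child (it has only a left child), so the number of unary nodes on the trace is at most $|\pi|$. Hence summing $c_0$ over unary nodes and charging each to one of the $|\pi|$ left-branch positions below it along any fixed descendant leaf gives $O(|\pi|)$ per leaf — more carefully, I would charge each unary node to \emph{all} leaves below it but observe that the unary nodes above a fixed leaf number at most $|\pi|$, so the per-leaf total is $O(|\pi|)$; globally this is $O(\sum_\pi |\pi|)$.

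The crux is the \emph{spines}. A spine $S$ headed at $x=\langle \pi_s,u,C\rangle$ has cost $c_0(|V_X|+|E_X|)$, the size of the compact head $\chead$. I want to charge this to the $st$-paths in $\setofpaths_{s,t}(\beadstring)$ that descend from $x$, i.e. the leaves of the subtree rooted at $x$ — call this number $N_x$. The key inequality I need is $|V_X|+|E_X| = O(N_x)$, i.e. the compact head is not much bigger than the number of solutions it must eventually produce; this is exactly what Lemma~\ref{lemma:lower_bound_paths_beadstring} (cited in the text as the rationale for compacting) is set up to deliver, and I would invoke it here. Granting that, each spine contributes $O(N_x)$, and it remains to show $\sum_{\text{spines } S} N_{x(S)} = O(\sum_\pi |\pi|)$. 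For this, fix a leaf/path $\pi$ and count the spines $S$ whose head $x(S)$ is an ancestor of that leaf: the heads of spines are precisely the left children (plus the root), so the spine-heads on the root-to-leaf trace correspond to the left branches on that trace, of which there are exactly $|\pi|$ by Lemma~\ref{lem:properties_recursion}(\ref{item:R2}). Therefore $\pi$ is counted in $N_{x(S)}$ for at most $|\pi|+1$ spines, and swapping the order of summation, $\sum_S N_{x(S)} \le \sum_{\pi}(|\pi|+1) = O(\sum_\pi |\pi|)$ (using $\eta \ge 1$, or that $|\pi|\ge 1$, to absorb the $+1$).

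Putting the three bounds together yields $\sum_{r\in R} T(r) = O(\sum_{\pi\in\setofpaths_{s,t}(\beadstring)}|\pi|)$. The main obstacle is the spine bound, and within it the inequality $|V_X|+|E_X| = O(N_x)$: it is false for the raw head $\head$ (a long chain of degree-2 vertices can be large while admitting few $st$-paths), which is precisely why the \emph{compact} head was introduced — after contracting maximal degree-2 chains, every remaining vertex/edge witnesses branching or an endpoint, and each such structural feature forces a distinct $st$-path. I expect the real work (deferred here to Lemma~\ref{lemma:lower_bound_paths_beadstring} and Section~\ref{sec:recursion-amortization}) to be in proving that inequality carefully; the amortization argument above is then a routine double-counting over the recursion tree using Lemma~\ref{lem:properties_recursion}.
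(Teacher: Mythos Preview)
Your plan is essentially the paper's proof: the same three-way split (unary nodes, leaves, spines), the same charging of unary nodes and leaves via Lemma~\ref{lem:properties_recursion}(\ref{item:R2}), and the same double-counting of spine-heads against left branches on root-to-leaf traces. One clarification: the inequality $|V_X|+|E_X|=O(N_x)$ that you attribute to Lemma~\ref{lemma:lower_bound_paths_beadstring} is not what that lemma delivers directly---it only gives $N_x\ge |E_X|-|V_X|+1$. The paper bridges the gap with a separate density lemma (Lemma~\ref{lemma:density}) showing $|E_X|/|V_X|\ge 11/10$ for the compacted head, whence $|V_X|+|E_X|\le\frac{\alpha+1}{\alpha-1}(|E_X|-|V_X|)$ for $\alpha=11/10$; your intuitive remark that ``every remaining vertex/edge witnesses branching'' is exactly this density step, but it is a genuine second lemma, not a consequence of the ear-decomposition count in Lemma~\ref{lemma:lower_bound_paths_beadstring}.
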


Section~\ref{sec:recursion-amortization} contains the proof of
Lemma~\ref{lemma:total_cost_recursion_tree} and related properties.
Setting $u:=s$, we obtain that the cost in
Lemma~\ref{lemma:total_cost_recursion_tree} is optimal, by
Lemma~\ref{lemma:beadstring}.

\begin{theorem}
  \label{theorem:optimal_paths}
  Algorithm~\ref{alg:liststpaths} solves problem
  Problem~\ref{prob:liststpaths} in optimal $O(m + \sum_{\pi \in
  \setofpaths_{s,t}(G)}{|\pi|})$ time.
\end{theorem}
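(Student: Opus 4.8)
The plan is to assemble the theorem from the pieces already laid out in the excerpt, so the proof is mainly a matter of checking that the bookkeeping closes. First I would invoke Lemma~\ref{lemma:beadstring} to reduce from $G$ to the bead string: since $\setofpaths_{s,t}(G) = \setofpaths_{s,t}(\sbeadstring)$, it suffices to run Algorithm~\ref{alg:liststpaths} with the initial call $\liststpaths(\{\}, s, C_0)$, where $C_0$ is the certificate for $\sbeadstring$. Computing the block tree and the bead string $\sbeadstring$, and building the initial certificate $C_0$ (a DFS tree of the head plus the classification of edges), costs $O(m)$ time; this accounts for the additive $O(m)$ term. Correctness of the enumeration — that every $st$-path is output exactly once — is already given by Lemma~\ref{lemma:correctness_algo_listpaths}, so nothing new is needed there.

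Next I would bound the running time. The total work done by Algorithm~\ref{alg:liststpaths} is, up to constant factors, the sum over all recursion-tree nodes of the work charged to that node: $O(1)$ for a base-case/unary node, $O(|\pi|)$ to emit a path at a leaf, and the cost of the certificate operations ($\chooseedge$, $\oracleleft$, $\oracleright$, $\undooracle$) along a spine. The key claim to establish here is that, with the certificate implementation of Section~\ref{sec:certificate} (and Appendix~\ref{app:extend-analys-oper}), the work attributable to a spine $S$ whose first node has compact head $\chead = (V_X, E_X)$ is $O(|V_X| + |E_X|)$ — i.e.\ that Algorithm~\ref{alg:liststpaths} actually realizes the abstract cost $T(r)$ of~\eqref{eq:abstrac_cost}. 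Granting this, the total time is $O\!\left(\sum_{r \in R} T(r)\right)$, and Lemma~\ref{lemma:total_cost_recursion_tree} collapses this sum to $O\!\left(\sum_{\pi \in \setofpaths_{s,t}(\beadstring)} |\pi|\right) = O\!\left(\sum_{\pi \in \setofpaths_{s,t}(G)} |\pi|\right)$. Adding the $O(m)$ preprocessing term gives the claimed bound $O(m + \sum_{\pi \in \setofpaths_{s,t}(G)} |\pi|)$.

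Finally I would argue optimality. Reading the input requires $\Omega(m)$ time, and writing out every $st$-path requires $\Omega(\sum_{\pi \in \setofpaths_{s,t}(G)} |\pi|)$ time; since at least one $st$-path exists (we may assume $\setofpaths_{s,t}(G) \neq \emptyset$, otherwise the problem is trivial and detected in $O(m)$ time while building the block tree), these two lower bounds combine to $\Omega(m + \sum_{\pi} |\pi|)$, matching the upper bound.

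The main obstacle is the middle step: verifying that the concrete certificate operations, together with the spine structure, respect the abstract cost $T$ in~\eqref{eq:abstrac_cost}. In particular one must check that a single pass down a spine $S$ — issuing $\oracleright$/$\undooracle$ at each binary node and one $\oracleleft$/$\undooracle$ at the terminal unary node — touches only $O(|V_X| + |E_X|)$ worth of the head's structure, amortized over the $d(u)$ nodes of the spine, and that compaction of degree-2 chains does not hide work that is actually performed on the uncompacted head $\head$. This is exactly what Section~\ref{sec:certificate} and Appendix~\ref{app:extend-analys-oper} are set up to prove, so in the body of this proof I would simply cite those results; the surrounding argument is then the short gluing sketched above.
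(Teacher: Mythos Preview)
Your proposal is correct and mirrors the paper's own assembly of the theorem: reduce to the bead string via Lemma~\ref{lemma:beadstring}, invoke Lemma~\ref{lemma:correctness_algo_listpaths} for correctness, build the initial certificate in $O(m)$ (Lemma~\ref{lem:certificate_scratch}), appeal to Section~\ref{sec:certificate} (culminating in Lemma~\ref{lem:algo_cost}) to show the implementation meets the abstract cost~\eqref{eq:abstrac_cost}, and then apply Lemma~\ref{lemma:total_cost_recursion_tree}. One small inaccuracy: along a spine every node issues an $\oracleleft$ (not just the terminal unary node), and Lemma~\ref{lem:promotebackedge} bounds the \emph{sum} of all those calls by $O(|E_X|)$; since you already defer the details to Section~\ref{sec:certificate}, this does not affect the argument.
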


By Lemma~\ref{lemma:reduction}, we obtain
an optimal result for listing cycles.

\begin{theorem}
  \label{theorem:optimal_cycles}
  Problem~\ref{prob:listcycles} can be optimally solved in $O(m +
  \sum_{c \in \setofcycles(G)}{|c|})$ time. 
\end{theorem}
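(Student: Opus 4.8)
The plan is to derive Theorem~\ref{theorem:optimal_cycles} as an immediate corollary of Theorem~\ref{theorem:optimal_paths} together with the reduction already established in Lemma~\ref{lemma:reduction}. Concretely, Theorem~\ref{theorem:optimal_paths} asserts that Algorithm~\ref{alg:liststpaths} solves Problem~\ref{prob:liststpaths} in optimal $O(m + \sum_{\pi \in \setofpaths_{s,t}(G)}{|\pi|})$ time; this is precisely the hypothesis of Lemma~\ref{lemma:reduction}, namely the existence of an algorithm solving Problem~\ref{prob:liststpaths} in that optimal bound. Feeding Algorithm~\ref{alg:liststpaths} into the construction of Lemma~\ref{lemma:reduction}, we obtain an algorithm that solves Problem~\ref{prob:listcycles} in optimal $O(m + \sum_{c \in \setofcycles(G)}{|c|})$ time. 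So the proof is essentially one line: apply Lemma~\ref{lemma:reduction} with the algorithm of Theorem~\ref{theorem:optimal_paths}.

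The only point that deserves a word of care is that Lemma~\ref{lemma:reduction} invokes the $st$-path listing algorithm not on the whole graph $G$, but repeatedly on graphs of the form $B - b$, where $B$ is a biconnected component and $b = (s,t)$ is a back edge of a DFS traversal of $B$. I would note explicitly that Theorem~\ref{theorem:optimal_paths} applies verbatim to each such call: $B-b$ is an undirected graph without self-loops or parallel edges (removing one edge preserves this), the chosen $s,t$ are distinct, and the claimed running time $O(|E_{B-b}| + \sum_{\pi \in \setofpaths_{s,t}(B-b)}{|\pi|})$ is exactly the optimal bound of Theorem~\ref{theorem:optimal_paths} instantiated on $B-b$. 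The amortization argument inside the proof of Lemma~\ref{lemma:reduction} — that for every edge $e \neq b$ in $B$ there is a cycle through both $b$ and $e$, so $\sum_{\pi \in \setofpaths_{s,t}(B-b)}{|\pi|} = \Omega(|E_B|)$, which pays for re-decomposing $B-b$ into biconnected components — then yields the global $O(m + \sum_{c \in \setofcycles(G)}{|c|})$ bound without further work.

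There is no real obstacle here: the substantive content — correctness of Algorithm~\ref{alg:liststpaths} (Lemma~\ref{lemma:correctness_algo_listpaths}), the recursion-tree cost analysis (Lemma~\ref{lemma:total_cost_recursion_tree} and hence Theorem~\ref{theorem:optimal_paths}), and the structural reduction (Lemma~\ref{lemma:reduction}) — has all been discharged earlier in the excerpt. If anything, the mild subtlety I would flag is purely bookkeeping: making sure the $O(m)$ additive term is accounted for once (building the initial list $L$ of biconnected components and the first DFS) rather than per biconnected component, which is exactly how the proof of Lemma~\ref{lemma:reduction} already organizes the charges. Thus the proof of Theorem~\ref{theorem:optimal_cycles} consists solely of combining Theorem~\ref{theorem:optimal_paths} with Lemma~\ref{lemma:reduction}, and optimality follows because $\Omega(m)$ time is needed to read $G$ and $\Omega(\sum_{c \in \setofcycles(G)}{|c|})$ time is needed to write the output.
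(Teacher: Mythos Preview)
Your proposal is correct and matches the paper's approach exactly: the paper derives Theorem~\ref{theorem:optimal_cycles} in one line by invoking Lemma~\ref{lemma:reduction} with the optimal $st$-path algorithm supplied by Theorem~\ref{theorem:optimal_paths}. The additional remarks you make about applying Theorem~\ref{theorem:optimal_paths} to each $B-b$ and about the bookkeeping of the $O(m)$ term are already absorbed into the proof of Lemma~\ref{lemma:reduction}, so nothing further is needed.
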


\section{Amortization strategy}
\label{sec:recursion-amortization}

We devote this section to prove
Lemma~\ref{lemma:total_cost_recursion_tree}. Let us split the sum
in Eq.~\eqref{eq:abstrac_cost} in three parts, and bound each part
individually, as
\begin{equation}
  \label{eq:sum_R}
  \sum_{r \in R} T(r) \leq \sum_{r:\,\mathrm{unary}} T(r) + \sum_{r:\,\mathrm{leaf}} T(r) + \sum_{r:\,\mathrm{spine}} T(r).
\end{equation}

We have that $\sum_{r:\,\mathrm{unary}} T(r) = O(\sum_{\pi \in
  \setofpaths_{s,t}(G)}{|\pi|})$, since there are
$|\setofpaths_{s,t}(G)|$ leaves, and the root-to-leaf trace leading to
the leaf for $\pi$ contains at most $|\pi|$ unary nodes by
Lemma~\ref{lem:properties_recursion}, where each unary node has cost
$O(1)$ by Eq.~\eqref{eq:abstrac_cost}.

Also, $\sum_{r:\,\mathrm{leaf}} T(r) = O(\sum_{\pi \in
  \setofpaths_{s,t}(G)}{|\pi|})$, since the leaf $r$ for $\pi$ has cost
$O(|\pi|)$ by Eq.~\eqref{eq:abstrac_cost}.

It remains to bound $\sum_{r\,\mathrm{spine}} T(r)$. By
Eq.~\eqref{eq:abstrac_cost}, we can rewrite this cost as
$\sum_{\chead} c_0 (|V_X| + |E_X|)$, where the sum ranges over the
compacted heads $\chead$ associated with the spines $r$. We use the
following lemma to provide a lower bound on the number of $st$-paths
descending from $r$.

\begin{lemma}
  \label{lemma:lower_bound_paths_beadstring}
  Given a spine $r$, and its bead string $\beadstring$ with head
  $\head$, there are at least $|E_X| - |V_X| + 1$ $st$-paths in $G$
  that have prefix $\pi_s = s \leadsto u$ and suffix $u \leadsto t$
  internal to $\beadstring$, where the compacted head is $\chead =
  (V_X, E_X)$.
\end{lemma}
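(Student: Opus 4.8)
The plan is to count $st$-paths by exhibiting an injection from a suitable set of objects of size $|E_X| - |V_X| + 1$ into the set of $st$-paths with prefix $\pi_s$ and suffix internal to $\beadstring$. The natural candidate is the set of \emph{back edges} of the DFS tree of $\head$ rooted at $u$ (equivalently, the cycle space of $\head$ has dimension equal to the number of back edges). First I would fix the DFS tree $T_u$ of $\head$ rooted at $u$ used by the certificate, and observe that $\head$ is biconnected, so every vertex of $\head$ other than $u$ lies on a tree path from $u$, and every back edge $(x,y)$ (with $y$ an ancestor of $x$) together with the tree path $y \leadsto x$ closes a fundamental cycle. The number of back edges in $\head$ is $|E(\head)| - |V(\head)| + 1$; the first routine step is to check that compaction of degree-2 chains does not change this quantity, i.e.\ $|E(\head)| - |V(\head)| + 1 = |E_X| - |V_X| + 1$, since contracting a degree-2 vertex removes one vertex and merges two edges into one, preserving $|E| - |V|$.

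Next I would build, for each back edge $b = (x,y)$ of $\head$, a distinct $st$-path. Since $\head$ is biconnected and $u, t$ are its only ``boundary'' vertices towards the rest of $\beadstring$ (with the suffix $u \leadsto t$ forced to enter the rest of $\beadstring$ through the articulation point of $\head$, call it $w$, where possibly $w = t$ if $r=1$... but more carefully $w$ is the vertex of $\head$ shared with the next bead, or $t$ itself), there is a path from $u$ to $w$ inside $\head$ using the back edge $b$: go down the tree from $u$ toward $x$, traverse $b$ up to $y$, and then reach $w$ within $\head$ avoiding reuse of vertices — biconnectivity of $\head$ guarantees enough connectivity, and one has to argue this can be done so that different back edges give different $u$-$w$ paths. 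Then prepend $\pi_s$ and append a fixed suffix $w \leadsto t$ internal to $\beadstring$ (which exists and is internally disjoint from $\head \setminus \{w\}$ by Lemma~\ref{lemma:beadstring}, since the beads after $\head$ form a connected chain to $t$). Concatenating these three pieces yields a genuine simple $st$-path of the required form.

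The main obstacle is the injectivity argument: I must choose the $u$-to-$w$ portion inside $\head$ so that the back edge used is \emph{recoverable} from the path, i.e.\ distinct back edges yield distinct $st$-paths. One clean way is to route through the \emph{fundamental cycle} of $b$: the path enters $\head$, follows tree edges from $u$ down to $y$, then down the tree to $x$, uses $b$ to return to $y$ — wait, that reuses $y$; instead one goes $u \leadsto y$ by tree edges, then along $y \leadsto x$ by tree edges, then $b$ back up — this is a cycle, not a path. So the correct construction needs more care: I would instead argue via an \emph{ear decomposition} of $\head$ (open ears from $u$), which gives exactly $|E_X| - |V_X| + 1$ ears, each ear being a path that can be ``inserted'' to produce a new $st$-path, and the presence or absence of each ear's interior edges distinguishes the paths. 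Alternatively, and perhaps most simply, I would use the known fact that a biconnected graph with $u$ and $w$ distinguished has at least $|E| - |V| + 1$ internally disjoint... no — at least as many $u$-$w$ paths as the cyclomatic number plus one, by repeatedly applying Menger/ear arguments; citing or reproving this counting bound for $u$-$w$ paths in $\head$ is the crux, after which prepending $\pi_s$ and appending the forced suffix $w \leadsto t$ finishes the proof. I expect the write-up to spend most of its length on this path-counting lemma inside the biconnected head, with the reduction to it being the short and easy part.
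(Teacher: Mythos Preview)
Your proposal eventually lands on exactly the paper's argument: reduce to counting $u$--$w$ paths inside the biconnected head and lower-bound that count by $|E|-|V|+1$ via an ear decomposition (the paper starts from a cycle through the two distinguished vertices, giving $2$ paths with $|E|=|V|$, and observes that each subsequent ear raises both the cyclomatic number and the path count by at least one). The back-edge/fundamental-cycle detours are unnecessary and the paper works directly in $\chead$ rather than in $\head$ plus a compaction-invariance step, but the core idea is identical.
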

\begin{proof}
   $\chead$ is biconnected. In any biconnected graph $B = (V_B, E_B)$
   there are at least $|E_B| - |V_B| + 1$ $xy$-paths for any $x,y \in
   V_B$. Find a ear decomposition \cite{Diestel} of $B$ and consider
   the process of forming $B$ by adding ears one at the time, starting
   from a single cycle including $x$ and $y$. Initially 
   $|V_B|=|E_B|$ and there are 2 $xy$-paths. Each new ear forms a path
   connecting two vertices that are part of a $xy$-path, increasing
   the number of paths by at least 1. If the ear has $k$ edges, its
   addition increases $V$ by $k-1$, $E$ by $k$, and the number of
   $xy$-paths by at least 1. The result follows by induction.
\end{proof}

The implication of Lemma~\ref{lemma:lower_bound_paths_beadstring} is
that there are at least $|E_X| - |V_X| + 1$ leaves descending from the
given spine $r$. Hence, we can charge to each of them a cost of
$\frac{c_0 (|V_X| + |E_X|)}{|E_X| - |V_X| +
  1}$. Lemma~\ref{lemma:density} allows us to prove that the latter
cost is $O(1)$ when $\head$ is different from a single edge or a
cycle. (If $\head$ is a single edge or a cycle, $\chead$ is a
  single or double edge, and the cost is trivially a constant.)
\begin{lemma}
  \label{lemma:density}
  For a compacted head $\chead = (V_X, E_X)$, its density is
  $\frac{|E_X|}{|V_X|} \geq \frac{11}{10}$.
\end{lemma}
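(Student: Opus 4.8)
\emph{Proof plan.} The plan is to bound $|E_X|$ from below by two linear inequalities in $|V_X|$ and then to combine them. The cases in which $\head$ is a single edge or a cycle are excluded (there $\chead$ is a single or double edge and the relevant cost is a constant anyway), so assume $\head$ is biconnected and neither. Recall from the proof of Lemma~\ref{lemma:lower_bound_paths_beadstring} that $\chead$ is then biconnected; since it is not a single or double edge, it has minimum degree at least $2$, and being connected and not a single cycle it satisfies $|E_X|\ge|V_X|+1$. This is the first, elementary inequality.

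The second, sharper, inequality couples a degree count with a count of back edges. Let $\mathcal{T}$ be the DFS tree of $\head$ rooted at $u$ used to define $\chead$, and let $L$ be its set of leaves (non-root vertices with no children). By construction $\chead$ keeps exactly the vertices of $\{u,t\}\cup L$ together with every vertex of $\head$-degree at least $3$, and compaction leaves the degree of every kept vertex unchanged; hence a degree-$2$ vertex of $\chead$ must belong to $\{u,t\}\cup L$, so the set $V_2\subseteq V_X$ of degree-$2$ vertices has $|V_2|\le|L|+2$. Counting degrees, $2|E_X|=\sum_{v\in V_X}\deg(v)\ge 2|V_2|+3(|V_X|-|V_2|)=3|V_X|-|V_2|\ge 3|V_X|-|L|-2$. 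For the back edges, observe that contracting in $\mathcal{T}$ every vertex that gets compacted produces a spanning tree $\mathcal{T}_X$ of $\chead$: a compacted vertex has degree $2$ in $\head$ and is not a DFS leaf, hence has DFS-tree-degree $2$, so each maximal compacted chain consists only of tree edges of $\mathcal{T}$. Thus $\mathcal{T}_X$ still has the DFS property (no cross edges), its leaf set is again $L$, and every $\ell\in L$, having exactly one incident tree edge but degree at least $2$, is incident to a non-tree edge, which must lead to an ancestor of $\ell$. As distinct leaves are pairwise incomparable in $\mathcal{T}_X$, these non-tree edges are pairwise distinct, so the number $|E_X|-|V_X|+1$ of non-tree edges of $\chead$ is at least $|L|$.

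Substituting $|L|\le|E_X|-|V_X|+1$ into $2|E_X|\ge 3|V_X|-|L|-2$ gives $3|E_X|\ge 4|V_X|-3$. Adding seven copies of the elementary inequality $|E_X|\ge|V_X|+1$ yields $10|E_X|\ge(4|V_X|-3)+7(|V_X|+1)=11|V_X|+4$, whence $\frac{|E_X|}{|V_X|}\ge\frac{11}{10}+\frac{4}{10|V_X|}>\frac{11}{10}$, which is the claimed bound (with a little to spare, and uniformly in $|V_X|$, so the small cases $|V_X|\le3$ need no separate treatment).

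The step I expect to require the most care is the structural bookkeeping behind $\mathcal{T}_X$: checking that maximal compacted chains are composed solely of tree edges (this is precisely where biconnectivity of $\head$ enters — a non-leaf degree-$2$ vertex cannot be a DFS leaf, so both its incident edges are tree edges), that contracting such chains changes neither the degree of any kept vertex nor the set of leaves, and that it creates no self-loop (a chain whose two endpoints coincide would force that endpoint to be a cut vertex, contradicting biconnectivity of $\head$). Everything else — the two inequalities and the closing arithmetic — is routine.
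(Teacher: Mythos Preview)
Your proof is correct, and it takes a genuinely different route from the paper's. Both arguments rest on the same structural observation---the only degree-$2$ vertices surviving compaction are DFS leaves (plus possibly $u$ and $t$), and each leaf contributes a distinct back edge---but they exploit it differently. The paper partitions $V_X=\{r\}\cup V_2\cup V_3$, writes the density as $\frac{x+y+d(r)}{2(|V_3|+|V_2|+1)}$ with $x=\sum_{v\in V_3}d(v)$ and $y=\sum_{v\in V_2}d(v)$, derives the constraint $y\le 2x$ from the fact that every tree edge incident to a leaf meets $V_3$, and then minimizes the resulting rational function over $x\ge 3$, $0\le y\le 2x$ to obtain $11/10$. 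You instead isolate two linear inequalities, the elementary $|E_X|\ge|V_X|+1$ and the sharper $3|E_X|\ge 4|V_X|-3$ (coming from the degree sum plus the bound $|L|\le|E_X|-|V_X|+1$ on the number of leaves), and take the linear combination $7\cdot(\text{first})+1\cdot(\text{second})$.

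What your approach buys: it is more modular, avoids the continuous optimization step, and yields the strict inequality $\frac{|E_X|}{|V_X|}\ge\frac{11}{10}+\frac{4}{10|V_X|}$ with an explicit slack that automatically covers all sizes. What the paper's approach buys: the optimization makes the extremal configuration visible (minimum at $y=2x$, $x=3$), which is informative even if the optimum is not attained at integer values. Your structural bookkeeping for $\mathcal{T}_X$ is sound; in particular, the observation that a compacted vertex, being a non-leaf of degree~$2$, has both incident edges in the DFS tree is exactly what guarantees the contracted $\mathcal{T}_X$ is a spanning tree and inherits the no-cross-edge property.
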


Specifically, let $\alpha = \frac{11}{10}$ and write $\alpha = 1 +
2/\beta$ for a constant $\beta$: we have that $|E_X| + |V_X| = (|E_X|
- |V_X|) + 2 |V_X| \leq (|E_X| - |V_X|) + \beta (|E_X| - |V_X|) =
\frac{\alpha+1}{\alpha-1} (|E_X| - |V_X|)$. Thus, we can charge each
leaf with a cost of $\frac{c_0 (|V_X| + |E_X|)}{|E_X| - |V_X| + 1}
\leq c_0 \frac{\alpha+1}{\alpha-1} = O(1)$. This motivates the
definition of $\chead$, since Lemma~\ref{lemma:density} does not
necessarily hold for the head $\head$ (due to the unary nodes in its
DFS tree).

One last step to bound $\sum_{\chead} c_0 (|V_X| + |E_X|)$: as noted
before, a root-to-leaf trace for the string storing $\pi$ has $|\pi|$
left branches by Lemma~\ref{lem:properties_recursion}, and as many
spines, each spine charging $c_0 \frac{\alpha+1}{\alpha-1} = O(1)$ to
the leaf at hand. This means that each of the $|\setofpaths_{s,t}(G)|$
leaves is charged for a cost of $O(|\pi|)$, thus bounding the sum as
$\sum_{r\,\mathrm{spine}} T(r) = \sum_{\chead} c_0 (|V_X| + |E_X|) =
O(\sum_{\pi \in \setofpaths_{s,t}(G)}{|\pi|})$. This completes the
proof of Lemma~\ref{lemma:total_cost_recursion_tree}. As a corollary,
we obtain the following result.

\begin{lemma}
  \label{lemma:amortized_cost_per_path}
  The recursion tree $R$ with cost as in Eq.~\eqref{eq:abstrac_cost}
  induces an $O(|\pi|)$ amortized cost for each $st$-path $\pi$.
\end{lemma}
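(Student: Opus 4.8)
The plan is to derive this lemma directly from Lemma~\ref{lemma:total_cost_recursion_tree}, which has just been established, by combining it with the one-to-one correspondence between $st$-paths and leaves of $R$ (Lemma~\ref{lem:properties_recursion}, item~\ref{item:R1}). First I would recall that the abstract cost $T(r)$ of Eq.~\eqref{eq:abstrac_cost} is defined on the three kinds of nodes of $R$ — unary nodes, leaves, and spines — and that $\sum_{r \in R} T(r) = O(\sum_{\pi \in \setofpaths_{s,t}(\beadstring)}{|\pi|})$ by Lemma~\ref{lemma:total_cost_recursion_tree}. Since $\setofpaths_{s,t}(\beadstring) = \setofpaths_{s,t}(G)$ by Lemma~\ref{lemma:beadstring} (taking $u := s$, $\beadstring := \sbeadstring$), the total cost over the entire recursion tree is $O(\sum_{\pi \in \setofpaths_{s,t}(G)}{|\pi|})$.

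The second step is to define the amortized cost of a path $\pi$ to be the total cost $T(r)$ charged, by the amortization scheme used in the proof of Lemma~\ref{lemma:total_cost_recursion_tree}, to the unique leaf of $R$ corresponding to $\pi$. I would spell out that scheme: each unary node on the root-to-leaf trace of $\pi$ contributes its $O(1)$ cost to $\pi$; the leaf itself contributes its $O(|\pi|)$ cost; and each spine on the trace contributes $c_0 \frac{\alpha+1}{\alpha-1} = O(1)$ to each of its descending leaves, in particular to $\pi$'s leaf. Since the root-to-leaf trace for $\pi$ has exactly $|\pi|$ left branches by Lemma~\ref{lem:properties_recursion}, item~\ref{item:R2}, it meets at most $|\pi|$ unary nodes and at most $|\pi|$ spines, so the three contributions sum to $O(|\pi|) + O(|\pi|) + O(|\pi|) = O(|\pi|)$. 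This shows the amortized cost assigned to each $st$-path $\pi$ is $O(|\pi|)$.

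Finally I would check that this charging is consistent, i.e.\ that summing the amortized costs over all $\pi \in \setofpaths_{s,t}(G)$ recovers (an upper bound on) $\sum_{r \in R} T(r)$: every unary node lies on a unique root-to-leaf trace and is charged once; every leaf is charged once; every spine's cost $c_0(|V_X|+|E_X|)$ is distributed among its $\geq |E_X| - |V_X| + 1$ descending leaves (Lemma~\ref{lemma:lower_bound_paths_beadstring}), each receiving an $O(1)$ share, so the spine's full cost is accounted for. Hence $\sum_{\pi} (\text{amortized cost of }\pi) = \Theta(\sum_{r \in R} T(r)) = O(\sum_{\pi}{|\pi|})$, confirming the amortization is sound. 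The only mild subtlety — and the closest thing to an obstacle — is being careful that the per-spine $O(1)$ charge to each descendant leaf is exactly the quantity bounded in the proof of Lemma~\ref{lemma:total_cost_recursion_tree} via Lemmas~\ref{lemma:lower_bound_paths_beadstring} and~\ref{lemma:density}; once that identification is made explicit, the lemma is an immediate corollary, as the text already anticipates.
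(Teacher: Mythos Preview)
Your proposal is correct and mirrors the paper exactly: Lemma~\ref{lemma:amortized_cost_per_path} is stated there without separate proof, as an immediate corollary of the per-leaf charging argument just used to establish Lemma~\ref{lemma:total_cost_recursion_tree}, and you have simply made that charging explicit. One minor slip in your consistency check: a unary node need not lie on a \emph{unique} root-to-leaf trace (its single child may still have many leaves below it), but this only means its $O(1)$ cost is covered multiple times over, so soundness of the amortization is unaffected.
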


\section{Certificate implementation and maintenance}
\label{sec:certificate}


The certificate $C$ associated with a node $\langle \pi_s, u, C
\rangle$ in the recursion tree is a compacted and augmented DFS tree
of bead string $\beadstring$, rooted at vertex~$u$. The DFS tree
changes over time along with $\beadstring$, and is
maintained in such a way that $t$ is in the leftmost path of the tree.
We compact the DFS tree by contracting the vertices that have
degree~2, except $u$, $t$, and the leaves (the latter surely have
incident back edges). Maintaining this compacted representation is not
a difficult data-structure problem. From now on we can assume
w.l.o.g.\mbox{} that $C$ is an augmented DFS tree rooted at $u$ where
internal nodes of the DFS tree have degree $\ge 3$, and each vertex
$v$ has associated:

\begin{enumerate}
	\setlength{\itemsep}{0pt} 
	\item A doubly-linked list $lb(v)$ of back edges linking 
		$v$ to its descendants $w$ sorted by DFS order.
	\item A doubly-linked list $ab(v)$ of back edges linking
          $v$ to its ancestors $w$ sorted by DFS order.
        \item \label{item:point3} An integer $\gamma(v)$, such that if $v$ is an
                ancestor of $w$ then $\gamma(v) < \gamma(w)$.
	\item \label{item:point4} The smallest $\gamma(w)$ over all neighbors $w$ of $v$ in $C$,
          excluding the parent, denoted by $\mathit{lowpoint}(v)$.
\end{enumerate}

Given two vertices $v,w \in C$ such that $v$ is the parent of $w$, we
can efficiently test if any of the children of $w$ is in the same
biconnected component of $v$, i.e.\mbox{} $\mathit{lowpoint}(w) \leq
\gamma(v)$. (Note that we adopt a variant of $\mathit{lowpoint}$ using
$\gamma(v)$ in place of $depth(v)$: it has the same effect whereas using
$\gamma(v)$ is preferable since it is easier to dynamically maintain.)

\begin{lemma}
  \label{lem:certificate_scratch}
 The certificate associated with the root of the recursion can be
 computed in $O(m)$ time.
\end{lemma}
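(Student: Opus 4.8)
The plan is to build the certificate by a single depth-first search from the root $u$ of the initial bead string $\sbeadstring$, and to show that each of the four pieces of information attached to a vertex can be produced within the same $O(m)$ sweep (recalling that $m \ge n-1$ since $G$ is connected, and that $|\sbeadstring| \le m+n = O(m)$).

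First I would run DFS on $\sbeadstring$ starting at $u$, orienting every non-tree edge as a back edge (legitimate since $G$ is undirected, so there are no cross edges). During this DFS I assign each vertex a counter value $\gamma(v)$ equal to its DFS discovery time; this immediately gives property~\ref{item:point3}, that $\gamma(v) < \gamma(w)$ whenever $v$ is a proper ancestor of $w$, because discovery times strictly increase down any root-to-node path. While visiting $v$, each incident back edge is appended either to $lb(v)$ (if it goes to a descendant, i.e.\ the other endpoint is discovered later and lies in the subtree) or to $ab(v)$ (if it goes to an ancestor); scanning the adjacency list of $v$ once in DFS-neighbor order makes both lists sorted by DFS order as required, and since every edge contributes to exactly two such lists the total work is $O(m)$. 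The $\mathit{lowpoint}(v)$ values of property~\ref{item:point4} are then computed in the standard bottom-up fashion as the DFS unwinds: $\mathit{lowpoint}(v)$ is the minimum of $\gamma$ over the endpoints of back edges in $ab(v)$ together with the $\mathit{lowpoint}$ of each child of $v$, excluding the edge to the parent — again $O(1)$ amortized per edge. Finally, I would apply the compaction step: contract every maximal chain of degree-2 vertices other than $u$, $t$, and the DFS-tree leaves, and orient the leftmost root-to-$t$ path so that $t$ lies on it (which is possible because $t \in \beadstring$ means $t$ is reachable from $u$, so some tree path leads to it, and we can reorder children). This contraction touches each vertex and edge a constant number of times, so it costs $O(m)$, and it preserves all four annotations in the obvious way (merging adjacency information along a contracted chain, keeping the extreme $\gamma$ values, recomputing $\mathit{lowpoint}$ of the surviving endpoints).

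The only mildly delicate points — none of them a real obstacle — are: (i) checking that after compaction internal DFS-tree nodes indeed have degree $\ge 3$, which holds by the choice of which vertices to spare from contraction; (ii) verifying that the leftmost-path invariant for $t$ can be established at construction time, which follows because we are free to order the children of each vertex in $C$ and $t$ is a genuine descendant of $u$ in the DFS tree of the (connected) bead string; and (iii) confirming that the doubly-linked lists $lb(v)$ and $ab(v)$ end up globally consistent with the single DFS order, which is immediate since we used one DFS pass and a single global clock $\gamma$. Since every step above is a linear-time graph traversal or a linear-time contraction over a graph of size $O(m)$, the total construction time is $O(m)$, proving the lemma.
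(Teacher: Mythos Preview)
Your approach is essentially the paper's: a DFS sweep from the root that assigns the $\gamma$ labels, fills the back-edge lists by appending, computes lowpoints bottom-up, and then compacts degree-$2$ chains. The paper phrases it as two DFS passes but the content is the same.

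There is one ordering subtlety worth fixing. You build $lb(\cdot)$ and $ab(\cdot)$ during the DFS and only \emph{afterwards} ``orient the leftmost root-to-$t$ path'' by reordering children. Reordering children changes the DFS order, so the lists you already built are no longer sorted with respect to the tree you end up with; yet the later certificate operations (in particular $\chooseedge$ and the spine processing of $\oracleright$) rely on $lb(u)$ being in the DFS order of the \emph{final} tree, the one in which $t$ is leftmost. The paper avoids this by first running a DFS only until it hits $t$, and then restarting the DFS so that the $s\leadsto t$ path is traversed first; the lists are then built in the correct order from the outset. Your claim in point~(iii) that consistency ``is immediate since we used one DFS pass'' therefore does not survive the child-reordering you do afterward. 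The fix is trivial (two passes, as in the paper, still $O(m)$), but as written the step would fail.

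A second, smaller difference: the paper runs its DFS on $G$ and prunes subtrees not in $\sbeadstring$ on the fly using the lowpoint test, whereas you assume $\sbeadstring$ is already isolated. Either is fine within $O(m)$, but you should say how $\sbeadstring$ is obtained.
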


\begin{proof}
	In order to set $t$ to be in the leftmost path, we perform a
	DFS traversal of graph $G$ starting from $s$ and stop when we
	reach vertex $t$. We then compute the DFS tree, traversing the
	path $s \leadsto t$ first. When visiting vertex $v$, we set
	$\gamma(v)$ to depth of $v$ in the DFS. Before going up
	on the traversal, we compute the lowpoints using the lowpoints
	of the children. Let $z$ be the parent of $v$. If $\mathit{lowpoint}(v)
	> \mathit{lowpoint}(z)$ and $w$ is not in the leftmost path in the DFS,
	we cut the subtree of $v$ as it does not belong to $B_{s,t}$.
	When finding a back edge $e=(v,w)$, if $w$ is a descendant of
	$v$ we append $e$ to both $lb(v)$ and $ab(w)$; else we append
	$e$ to both $ab(v)$ and $lb(w)$.  This maintains the DFS order
	in the back edge lists.  This procedure takes at most two DFS
	traversals in $O(m)$ time.  This DFS tree can be compacted in
	the same time bound.  \qed
\end{proof}

\begin{lemma}
	\label{lem:choose}
	Operation $\chooseedge(C,u)$ can be implemented in $O(1)$ time.
\end{lemma}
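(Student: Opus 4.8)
The plan is to show that $\chooseedge(C,u)$ can be answered in constant time by exploiting the structure already maintained by the certificate $C$, in particular the ordering of back edges and the leftmost-path invariant that places $t$ on the leftmost root-to-leaf path of the DFS tree rooted at $u$. Recall that $\chooseedge$ must return an edge $e=(u,v)$ with $v \in \head$ such that appending $(u,v)$ and then some path $v \leadsto t$ internal to $\beadstring$ yields an $st$-path, and moreover $v$ must be the last neighbor of $u$ in DFS order (so that the unique tree edge out of $u$ is returned exactly when $u$ has no outgoing back edges). The key observation is that the candidate neighbors of $u$ are exactly: the endpoints of the back edges in $lb(u)$ (back edges going to descendants of $u$), and the tree children of $u$ in $C$. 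There is no $ab(u)$ contribution since $u$ is the root.

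First I would argue that any such back edge endpoint $v$ is a valid answer: since $\head$ is biconnected and $v$ lies in $\head$, removing $u$ keeps $v$ connected to $t$ within $\beadstring$ (because $t$ stays in the head's biconnected structure, or more generally $\oracleleft$ will recompute the correct bead string from $v$ to $t$), so a suitable suffix exists; the formal justification is exactly the biconnectivity of $\head$ invoked in the definition of $\chooseedge$ and Lemma~\ref{lemma:beadstring}. Next I would handle the DFS-order requirement: the back edges in $lb(u)$ are stored sorted by DFS order, so the desired "last neighbor in DFS order" is obtained by inspecting the tail of the doubly-linked list $lb(u)$ in $O(1)$ time. If $lb(u)$ is nonempty, return that back edge. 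If $lb(u)$ is empty, then $u$ has outgoing tree edges only; I would then return the (unique relevant) tree edge to the leftmost child, which is reachable in $O(1)$ from $u$'s adjacency structure in $C$, and this tree edge lies on the leftmost path towards $t$ by the maintained invariant, hence is a valid $\chooseedge$ answer with $v \in \head$. One small point to check is that when internal nodes have degree $\ge 3$ after compaction, $u$ may have several tree children, but by the leftmost-path invariant the one leading to $t$ (equivalently, the first in the maintained child order) is the correct choice, and picking it is $O(1)$.

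The main obstacle I anticipate is not the time bound itself but verifying the two correctness conditions packaged into $\chooseedge$'s specification: (a) that the returned $v$ indeed belongs to $\head$, and (b) that an internal $u \leadsto t$ suffix genuinely exists after the choice. Condition (a) follows because both back-edge endpoints in $lb(u)$ and the leftmost tree child are within the first biconnected component (the compaction and the cutting done in Lemma~\ref{lem:certificate_scratch} ensure everything in $C$ below $u$ that is retained is relevant, and the head is precisely the biconnected component containing $u$). Condition (b) in the back-edge case follows from biconnectivity of $\head$ (there are two internally disjoint $u$--$v$ paths, one being the edge $e$ and the rest providing the return route to the articulation point and thence to $t$); in the tree-edge case it follows from the leftmost-path invariant since $t$ is a descendant along that branch. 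I would write these as short paragraphs citing the invariants rather than re-deriving them, and conclude that all bookkeeping — reading the tail of $lb(u)$, or following one child pointer — is $O(1)$, establishing the lemma.
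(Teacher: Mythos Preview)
Your proposal is correct and follows the same approach as the paper: inspect the tail of $lb(u)$ in $O(1)$; if it is empty, return the tree edge to $u$'s child. The only difference is that you hedge about $u$ possibly having several tree children and propose picking the leftmost one, whereas the paper simply asserts ``there are no other children.'' The paper is right here: since $\head$ is biconnected and $u$ is the root of the DFS tree, the classical root-articulation-point criterion forces $u$ to have exactly one tree child, so your leftmost-child selection is correct but the case distinction you anticipate never arises.
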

\begin{proof}
If the list $lb(v)$ is empty, return the tree edge $e=(u,v)$ linking $u$
to its only child $v$ (there are no other children).  Else,
return the last edge in $lb(v)$. \qed
\end{proof}

We analyze the cost of updating and restoring the
certificate $C$. We can reuse parts of~$C$,
namely, those corresponding to the vertices that are not in the
compacted head $H_X = (V_X,E_X)$ as defined in
Section~\ref{sub:recursion-tree-cost}.
%
%
%
We prove that, given a unary node $u$ and its tree edge $e=(u,v)$, the
subtree of $v$ in~$C$ can be easily made a certificate for the left
branch of the recursion.

\begin{lemma}
	\label{lem:unary_left}
	On a unary node, $\oracleleft(C,e)$ takes $O(1)$
	time.
\end{lemma}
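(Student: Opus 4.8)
The claim is that when the current node $u$ is unary — meaning $\chooseedge(C,u)$ returns the unique tree edge $e=(u,v)$ because $lb(u)$ is empty — the operation $\oracleleft(C,e)$ runs in $O(1)$ time. The plan is to argue that almost nothing needs to change: the subtree of $v$ in the current compacted DFS tree $C$ is, after a constant amount of surgery, already a valid certificate for the recursive call on $v$ with partial path $\pi_s \cdot (u,v)$.

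First I would recall what deleting $u$ from $\beadstring$ does. Since $u$ is unary in the compacted tree, $u$ has exactly one child $v$ and no back edges incident to it (an empty $lb(u)$ and, because $u$ is the root, an empty $ab(u)$ as well). Hence $u$ is a cut vertex whose removal simply detaches the (single) subtree rooted at $v$; the block tree of $\beadstring - u$ is exactly the block tree of the subtree of $v$, and the new bead string $\vbeadstring$ connecting $v$ to $t$ is the portion of that subtree on the path to $t$. Crucially, $t$ was on the leftmost root-to-leaf path of $C$, so it remains on the leftmost path when we re-root at $v$ (the leftmost child of $u$), and the head $H_v$ is just the biconnected component of $v$ that used to sit directly below $u$. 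So the DFS tree structure, the back-edge lists $lb(\cdot)$ and $ab(\cdot)$ of all surviving vertices, and the leftmost-path property are all inherited unchanged.

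Next I would check the two auxiliary fields. The $\gamma$ values are only required to respect the ancestor order (point \ref{item:point3}), and deleting the root preserves that order on the remaining vertices, so no $\gamma$ needs updating. The $\mathit{lowpoint}$ values (point \ref{item:point4}) are defined as the minimum $\gamma$ over neighbors excluding the parent; the only vertex whose neighbor set changed is $v$, which lost its parent $u$ — but $u$ was already excluded from $v$'s lowpoint as its parent, so even $\mathit{lowpoint}(v)$ is unchanged. The only genuine modifications are: detach $v$ from $u$ (so $v$ becomes the new root and its parent pointer is nulled), and possibly re-compact, since $v$ may now have degree $2$ in the new tree and would need to be absorbed into a chain — but $v$ is the root of the new certificate, and the root is explicitly exempt from compaction, so nothing happens there either. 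Each of these is an $O(1)$ pointer operation, and all of it must be recorded in the bookkeeping $I$ so that $\undooracle(C,I)$ can reattach $v$ to $u$ in $O(1)$ time; this is immediate since we changed only a constant number of pointers.

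**Main obstacle.** The only subtle point — and the thing I would be most careful about — is justifying that no part of the subtree of $v$ needs to be \emph{pruned} as falling outside the new bead string $\vbeadstring$. In general, recomputing the block tree after a vertex deletion can shrink the bead string, which would force non-constant work. Here I would lean on the invariant maintained since Lemma \ref{lem:certificate_scratch}: $C$ already had all subtrees not leading to $t$ cut away (the condition $\mathit{lowpoint}(v) > \mathit{lowpoint}(z)$ off the leftmost path), so the subtree of $v$ is exactly $\vbeadstring$ with no dangling pieces. Removing the unary root $u$ cannot create a new articulation point below $v$, nor detach anything from the $v \leadsto t$ path, because every biconnected component strictly below $u$ is untouched. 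Hence the certificate for $v$ is obtained by pure inheritance plus constant-time re-rooting, giving the $O(1)$ bound.
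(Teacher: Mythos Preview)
Your proof is correct and follows the same approach as the paper: detach $u$, make $v$ the new root, and observe that the rest of the certificate is inherited with only $O(1)$ pointer surgery. Your write-up is in fact more careful than the paper's terse argument, since you explicitly verify that $\gamma$, $\mathit{lowpoint}$, the back-edge lists, the leftmost-path invariant, and the compaction exemption for the new root all survive unchanged.

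One small point of contrast worth noting: the paper's proof includes the instruction ``cut the list of children of $v$ keeping only the first child,'' claiming the other children leave the bead string. You instead argue that \emph{no} pruning is needed because the subtree of $v$ already coincides with $B_{v,t}$. Your version is the correct reading: since $lb(u)=\emptyset$, the edge $(u,v)$ is a bridge, so the bead $B_2$ immediately below $H_u$ is biconnected with $v$ as its DFS root, and a DFS root of a biconnected graph has exactly one child. Hence $v$ has only one child in $C$ to begin with, and the paper's cutting step is vacuous. Either way the operation is $O(1)$, but your justification for why nothing needs pruning is the cleaner one.
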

\begin{proof}
	Take edge $e=(u,v)$. Remove edge $e$ and set $v$ as the root
	of the certificate. Since $e$ is the only edge incident in
	$v$, the subtree $v$ is still a DFS tree. Cut the list of children
	of $v$ keeping only the first child. (The other children are no
	longer in the bead string and become part of $I$.) There is
	no need to update $\gamma(v)$. \qed
\end{proof}

We now devote the rest of this section to show how to efficiently
maintain $C$ on a spine.  Consider removing a back edge $e$ from $u$:
the compacted head $H_X=(V_X,E_X)$ of the bead string can be divided
into smaller biconnected components.  Many of those can be excluded
from the certificate (i.e. they are no longer in the new bead string,
and so they are bookkept in $I$) and additionally we have to update
the lowpoints that change. We prove that this operation can be
performed in $O(|V_X|)$ total time on a spine of the recursion tree.

\begin{lemma}
	\label{lem:removebackedge}
	The total cost of all the operations $\oracleright(C,e)$ in a
        spine is $O(|V_X|)$ time.
\end{lemma}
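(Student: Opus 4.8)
The plan is to look at one entire spine $S$ at once, rather than at a single $\oracleright$ call in isolation, and to show that the combined work of all back‑edge deletions along $S$ is paid for by touching each vertex of the compacted head $\chead=(V_X,E_X)$ a constant number of times. Recall from Section~\ref{sub:recursion-tree-cost} that a spine starting at node $x=\langle\pi_s,u,C\rangle$ processes the incident edges of $u$ in reverse DFS order: the deepest node uses the unique tree edge, and every other node uses a back edge $e=(u,w)\in lb(w)$. So the spine issues $d(u)-1$ calls $\oracleright(C,e_1),\dots,\oracleright(C,e_{d(u)-1})$, each deleting one back edge incident to $u$, interleaved with the corresponding $\undooracle$ restorations (lines~\ref{code:right_update}--\ref{code:right_undo} of Algorithm~\ref{alg:liststpaths}). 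First I would describe what a single $\oracleright(C,e)$ must do: after deleting $e=(u,w)$ from $C$, the head may split; the new bead string is the biconnected component of $\chead-e$ still containing $u$, together with the chain of beads continuing toward $t$. Concretely this means (a) detaching from $C$ the subtrees/children of $u$ that drop out of $\beadstring'$ (they are bookkept in $I$), (b) removing $e$ from $lb(w)$ and $ab(u)$, and (c) recomputing the lowpoints of the vertices whose lowpoint changed because the back edge $e$ (or a now‑removed subtree) is gone.

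The key structural observation I would prove is a monotonicity/telescoping fact: as we delete the back edges of $u$ one after another in reverse DFS order, the bead string only shrinks, and a vertex $v\in V_X$ that leaves the current head never re‑enters it within the same spine. Hence the total number of vertices ever "detached and bookkept" over the whole spine is at most $|V_X|$, and each is detached exactly once — this bounds the work in part (a) summed over the spine by $O(|V_X|)$. For part (c), I would argue that when back edge $e=(u,w)$ is removed, the only lowpoints that can change are those of vertices on the tree path from $w$ up toward $u$ inside the current head, and — crucially — once such a vertex's subtree is cut off from $u$'s component it is removed from $C$ altogether, so it is never re‑examined. Charging each lowpoint recomputation to the vertex that is simultaneously being removed (or to $w$ itself for the endpoint), the total lowpoint work over the spine is again $O(|V_X|)$. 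Using the compacted representation of $C$ (internal nodes have degree $\ge 3$, maximal degree‑2 chains contracted) is what lets us bound the number of affected vertices by $|V_X|$ rather than by the uncompacted head size; I would invoke the data‑structure fields $lb,ab,\gamma,\mathit{lowpoint}$ of Section~\ref{sec:certificate} and the test $\mathit{lowpoint}(w)\le\gamma(u)$ to detect, in $O(1)$ per child, which children of $u$ stay in the same biconnected component.

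The step I expect to be the main obstacle is handling the "continuation toward $t$" correctly while keeping the amortization clean: after $e$ is removed the head changes, so the suffix beads leading to $t$ may be reorganized, and one must be sure that (i) $t$ stays on the leftmost path of the maintained DFS tree, (ii) no vertex that is still in $\beadstring'$ is mistakenly bookkept, and (iii) the parts of $C$ that are reused are reused without repair, so that the only charged work is proportional to the part of the head that is destroyed. Making (iii) precise is where the telescoping argument does the real work: I would set up a potential equal to the current number of vertices in the head, note that each $\oracleright$ decreases it by the number of detached vertices plus does $O(1)$ extra bookkeeping per detached vertex, and observe that the interleaved $\undooracle$ calls restore the potential but are billed separately (their cost is symmetric to the corresponding $\oracleright$, as discussed for $\undooracle$). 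Summing the decreases telescopes to at most the initial head size $|V_X|$, giving the claimed $O(|V_X|)$ total for the spine; I would defer the routine verification that each individual bookkeeping/lowpoint update is $O(1)$ on the compacted structure to Appendix~\ref{app:extend-analys-oper}.
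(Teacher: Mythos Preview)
Your high-level plan---amortize all $\oracleright$ calls of the spine together and bound the total by $O(|V_X|)$---is the paper's plan, and your telescoping/potential argument for part~(a) (subtrees detached from the head never return within the spine) is essentially correct. The gap is in part~(c), the lowpoint updates.

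You propose to ``charge each lowpoint recomputation to the vertex that is simultaneously being removed (or to $w$ itself for the endpoint).'' But when back edge $b_i=(z_i,u)$ is deleted, the vertices whose lowpoints must be recomputed are those on the tree path from $z_i$ toward $u$, and most of these are \emph{not} removed: every vertex on that path that lies on the leftmost $u\leadsto t$ path stays in the bead string for the entire spine, yet its lowpoint may need updating. Charging ``to the endpoint'' does not help either, because a single walk can have length $\Theta(|V_X|)$ (a path $u,v_1,\dots,v_k$ with one back edge $(v_k,u)$), so $\sum_i(\text{walk length for }b_i)$ is not bounded by the number of endpoints. Your potential (current head size) decreases only by the number of detached vertices and therefore pays for detachments, not for lowpoint work on surviving vertices.

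The paper closes this with a stopping rule you never introduce: the upward walk from $z_i$ halts at the first vertex $w$ whose recomputed $\mathit{lowpoint}(w)$ is still $\gamma(u)$. Such a $w$ still has a back edge to $u$ from its subtree, hence is an ancestor of some later $z_j$; in particular the walk for $b_i$ stops no later than the lowest common ancestor of $z_i$ and $z_{i+1}$. Because the $b_i$ are taken in reverse DFS order, the walks therefore traverse pairwise-disjoint sets of tree edges of $\chead$, and their total length is at most the number of tree edges, i.e.\ $O(|V_X|)$. This disjointness-of-walks argument is the missing idea for part~(c); without it your charging scheme does not yield the bound.

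(Minor slip: the back edges removed along the spine live in $lb(u)$ and $ab(z_i)$, not $lb(w)$ and $ab(u)$.)
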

\begin{proof}
	In the right branches along a spine, we remove all back edges
	in $lb(u)$. This is done by starting from the last edge in
	$lb(u)$, i.e.~proceeding in reverse DFS order.
	For back edge $b_i = (z_i,u)$, we traverse the vertices in the
	path from $z_i$ towards the root $u$, as these are the only lowpoints
	that can change.
	While moving upwards on the tree, on each vertex $w$, we
	update $\mathit{lowpoint}(w)$. This is done by taking the endpoint $y$
	of the first edge in $ab(w)$ (the back edge that goes the
	topmost in the tree) and choosing the minimum between
	$\gamma(y)$ and the lowpoint of each child of $w$. We stop
	when the updated $\mathit{lowpoint}(w) = \gamma(u)$ since it implies
	that the lowpoint of the vertex can not be further reduced.
	Note that we stop before $u$, except when removing the
	last back edge in $lb(u)$.

	To prune the branches of the DFS tree that are no longer in
	$B_{u,t}$, consider again each vertex $w$ in the path from
	$z_i$ towards the root $u$ and its parent $y$. It is possible
	to test if $w$ is an articulation point by checking if the
	updated $\mathit{lowpoint}(w) > \gamma(y)$. If that is the case and
	$w$ is not in the leftmost path of the DFS, it implies that $w
	\notin B_{u,t}$, and therefore we cut the subtree of $w$ and
	keep it in $I$ to restore later. We use the same halting
	criterion as in the previous paragraph.
	
	The cost of removing all back edges in the spine is
	$O(|V_X|)$: there are $O(|V_X|)$ tree edges and, in the paths
	from $z_i$ to $u$, we do not traverse the same tree edge twice
	since the process described stops at the first common ancestor
	of endpoints of back edges $b_i$. Additionally, we take $O(1)$
	time to cut a subtree of an articulation point in the DFS tree. \qed
\end{proof}

To compute $\oracleleft(C,e)$ in the binary nodes of a spine, we use
the fact that in every left branching from that spine, the graph is
the same (in a spine we only remove edges incident to $u$ and on a
left branch from the spine we remove the node $u$) and therefore its
block tree is also the same. However, the certificates
on these nodes are not the same, as they are rooted at different
vertices. By using the DFS order of the edges, we are able to traverse
each edge in $H_X$ only a constant number of times in the spine.

\begin{lemma}
	\label{lem:promotebackedge}
	The total cost of all operations $\oracleleft(C,e)$ in a
        spine is amortized $O(|E_X|)$.
\end{lemma}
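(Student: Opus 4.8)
The plan is to analyze the left-branchings along a single spine $S$ with first node $x=\langle \pi_s,u,C\rangle$, and show that, although each individual $\oracleleft$ may rebuild part of the certificate, the total work is bounded by a constant number of traversals of the compacted head $\chead=(V_X,E_X)$. Recall that the binary nodes of $S$ correspond to the back edges $b_i=(z_i,u)$ in $lb(u)$, processed in reverse DFS order, and that the $i$-th left branch issues $\oracleleft(C,e)$ with $e=b_i$, which must produce a certificate for the bead string $\vbeadstring$ rooted at $z_i$ (the endpoint of $b_i$ other than $u$). The key structural fact I would establish first is the one already hinted at in the text: across all left branches emanating from the same spine, the underlying graph $\beadstring - u$ is the same (a spine only deletes edges incident to $u$, and the left branch then deletes $u$ itself, which annihilates exactly those edges regardless of which subset had been deleted so far). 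Hence the block tree of $\beadstring-u$ is fixed throughout $S$; only the \emph{root} of the DFS/certificate changes, moving to $z_i$.

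Next I would describe how $\oracleleft(C,b_i)$ is implemented so as to reuse the previous certificate. Rooting the DFS tree at $z_i$ instead of $u$ only re-orients the tree path from $z_i$ up to $u$ (and requires recomputing $\gamma$ and $\mathit{lowpoint}$ along that path and for the subtrees hanging off it), then re-isolating the new head and pruning the beads of $\vbeadstring$ that fall outside it, storing the removed pieces in the bookkeeping record $I$. The crucial accounting observation is that the $z_i$ are encountered in reverse DFS order, so the tree paths $z_i \leadsto u$ are nested/telescoping in the same way as in the proof of Lemma~\ref{lem:removebackedge}: once a tree edge of $\chead$ has been re-oriented and its endpoints' $\gamma$ and $\mathit{lowpoint}$ brought up to date for the current root, subsequent left branches (whose roots $z_j$ with $j>i$ lie even deeper, or in an already-visited region) do not need to touch it again before the halting criterion $\mathit{lowpoint}(w)=\gamma(\text{current root})$ stops the upward walk. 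Thus each tree edge of $\chead$ is traversed $O(1)$ times over the whole spine, and since by the compaction every internal vertex of $\chead$ has degree $\ge 3$ we have $|E_X| = \Theta(|V_X| + (\text{back edges in }lb(u)))$; charging the back-edge re-orientations as well, the total is $O(|E_X|)$ amortized over the spine, with the amortization absorbing the fact that a single left branch may re-root a long path while later ones re-root almost nothing.

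The step I expect to be the main obstacle is arguing rigorously that re-rooting the certificate at $z_i$ — as opposed to rebuilding it from scratch — correctly yields $\vbeadstring$, in particular that the leftmost-path-contains-$t$ invariant is restored and that the $\mathit{lowpoint}$/$\gamma$ values along the re-oriented path are consistent with the new root without a full recomputation. The right way to handle this, which I would spell out, is to observe that removing $u$ splits $\chead$ at $u$ into the biconnected components of $\beadstring-u$; the unique one containing $z_i$ and an articulation point towards $t$ becomes the new head, and reversing the tree path $z_i\leadsto u$ within it is a purely local operation whose effect on $\gamma$ (a topological numbering) and on $\mathit{lowpoint}$ can be recomputed by one upward sweep exactly as in Lemma~\ref{lem:removebackedge} — while every vertex and edge \emph{not} on that path, and not in a pruned bead, keeps its certificate data verbatim. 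Combining the per-edge $O(1)$ charge with the telescoping of the re-rooting paths over the reverse-DFS processing order gives the claimed amortized $O(|E_X|)$ bound for the spine. $\Box$
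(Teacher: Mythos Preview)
Your proposal has a genuine structural gap. You build each $C_i$ by \emph{re-rooting} the existing DFS tree: reverse the tree path $z_i\leadsto u$, keep every other subtree verbatim, and sweep upward once to fix $\gamma$ and $\mathit{lowpoint}$. But path-reversal does not, in general, turn a DFS tree into a DFS tree. Take $u$ with unique child $v$, then $v-a$, $a-b$, $a-c$, back edges $(b,u)$ and $(c,v)$. Removing $u$ and re-rooting at $z_i=b$ by reversing $b\!-\!a\!-\!v$ leaves $c$ as a child of $a$ and $v$ as a child of $a$ as well; the non-tree edge $(c,v)$ now joins two vertices that are \emph{not} in ancestor--descendant relation, i.e.\ a cross edge. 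At that point the $lb/ab$ lists, the $\mathit{lowpoint}$ semantics, and the ``$t$ on the leftmost path'' invariant are all broken, and there is no local repair along the reversed path that fixes this without essentially re-running a DFS. Your telescoping claim also does not transfer from Lemma~\ref{lem:removebackedge}: there we update \emph{one} evolving certificate and can stop at the first common ancestor; here each left branch needs its own complete certificate rooted at a different $z_i$, so ``not touching an edge again'' is not an option unless you explain how $C_i$ is derived from $C_{i-1}$ rather than from the original $C$.

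The paper takes a different route that sidesteps both issues. It \emph{rebuilds} $C_i$ by a fresh DFS from $z_i$ restricted to $H_X$ (so the result is automatically a valid DFS tree with $t'$ leftmost), but it makes this cheap by reusing pieces of $C_{i-1}$: as soon as the new DFS enters the first bead of $\beadstring-u$ shared by $B_{z_i,t}$ and $B_{z_{i-1},t}$, everything beyond that bead towards $t$ is grafted in unchanged, and branches already pruned in $C_{i-1}$ are skipped. Because the $z_i$ are taken in reverse DFS order, each bead of $H_X$ is entered for the first time at most once over the whole spine, giving $O(|E_X|)$. The only repeated work is re-exploring the \emph{interior} of that first common bead, and this cost is not absorbed within the spine at all: it is \emph{charged to a different node of the recursion tree}, namely the unique node on the leftmost path below the $i$th spine node whose head is exactly that bead. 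This cross-spine charging is the step your argument is missing and is what actually closes the amortization.
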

\begin{proof}
	Let $t'$ be the last vertex in the path $u \leadsto t$ s.t.
	$t' \in V_X$. Since $t'$ is an articulation point, the subtree
        of the DFS tree rooted in $t'$ is maintained in the
	case of removal of vertex $u$. Therefore the only modifications
	of the DFS tree occur in the compacted head $H_X$ of $B_{u,t}$.
	Let us compute the certificate $C_i$: this is the certificate
	of the left branch of the $i$th node of the spine where we
	augment the path with the back edge $b_i = (z_i,u)$ of
	$lb(u)$ in the order defined by $\chooseedge(C,u)$.

	For the case of $C_1$, we remove $u$ and rebuild the
	certificate starting form $z_1$ (the last edge in $lb(u)$)
	using the algorithm from Lemma~\ref{lem:certificate_scratch}
	restricted to $H_X$ and using $t'$ as target and $\gamma(t')$
	as a baseline to $\gamma$ (instead of the depth). This takes
	$O(|E_X|)$.  

	For the general case of $C_i$ with $i>1$ we also rebuild
	(part) of the certificate starting from $z_i$ using the
	procedure from Lemma~\ref{lem:certificate_scratch} but we use
	information gathered in $C_{i-1}$ to avoid exploring useless
	branches of the DFS tree. The key point is that, when we reach
	the first bead in common to both $B_{z_i,t}$ and
	$B_{z_{i-1},t}$, we only explore edges internal to this bead.
	If an edge $e$ leaving the bead leads to $t$, we can reuse
	a subtree of $C_{i-1}$. If $e$ does not lead to $t$, then it
	has already been explored (and cut) in $C_{i-1}$ and there is
	no need to explore it again since it will be discarded.
	Given the order we take $b_i$, each bead is
	not added more than once, and the total cost over the
	spine is $O(|E_X|)$.

	Nevertheless, the internal edges $E_X'$ of the first bead in
	common between $B_{z_i,t}$ and $B_{z_{i-1},t}$ can be explored
	several times during this procedure.\footnote{Consider the
	case where $z_i, \ldots, z_j$ are all in the same bead after
	the removal of $u$. The bead strings are the same, but the
	roots $z_i, \ldots, z_j$ are different, so we have to compute
	the corresponding DFS of the first component $|j-i|$ times.}
	We can charge the cost $O(|E'_X|)$ of exploring those edges to
	another node in the recursion tree, since this common bead is
	the head of at least one certificate in the recursion subtree
	of the left child of the $i$th node of the spine.
	Specifically, we charge the first node in the \emph{leftmost}
	path of the $i$th node of the spine that has exactly the
	edges $E'_X$ as head of its bead string: (i) if $|E'_X| \le 1$
	it corresponds to a unary node or a leaf in the recursion tree
	and therefore we can charge it with $O(1)$ cost; (ii)
	otherwise it corresponds to a first node of a spine and
	therefore we can also charge it with $O(|E'_X|)$. We use this
	charging scheme when $i \neq 1$ and the cost is always charged
	in the leftmost recursion path of $i$th node of the spine.
	Consequently, we never charge a node in the recursion tree more
	than once.  \qed
\end{proof}

\begin{lemma}
	\label{lem:restore} On each node of the recursion tree,
	$\undooracle(C,I)$ takes time proportional to the size of the
	modifications kept in $I$.
\end{lemma}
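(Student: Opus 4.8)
The idea is to treat the bookkeeping information $I$ produced by $\oracleleft(C,e)$ and $\oracleright(C,e)$ not as an opaque object but as a \emph{journal}: a sequence of \emph{atomic undo records}, appended in the order in which the corresponding destructive changes are made to $C$ (and to the block tree). The operation $\undooracle(C,I)$ then simply scans this journal in reverse (LIFO) order and applies the inverse of each record. The whole proof reduces to two claims: (a) every destructive modification performed by $\oracleleft$ or $\oracleright$ is logged as an $O(1)$-size record, so that the length of $I$ is proportional to the number of such modifications; and (b) each record can be reverted in $O(1)$ time. Granting (a) and (b), $\undooracle(C,I)$ costs $O(|I|)$, which is exactly ``time proportional to the size of the modifications kept in $I$''.

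\textbf{The atomic records.} Inspecting the update procedures of Lemma~\ref{lem:unary_left}, Lemma~\ref{lem:removebackedge} and Lemma~\ref{lem:promotebackedge}, the only ways in which $C$ changes are: $(i)$ removal of a back edge from a list $lb(\cdot)$ or $ab(\cdot)$; $(ii)$ moving the root of the DFS tree from one endpoint of the (unique) incident edge to the other; $(iii)$ \emph{cutting} the subtree hanging from an articulation point $w$, i.e.\ detaching $w$ from its parent's child list; $(iv)$ overwriting a value $\gamma(v)$; $(v)$ overwriting a $\mathit{lowpoint}(v)$; $(vi)$ splitting or re-forming a compacted chain of degree-$2$ vertices when the head changes. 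For $(i)$ we store a pointer to the removed list cell together with its two neighbours in the doubly-linked list, so reinsertion is $O(1)$. For $(iii)$ we store, together with the cut subtree, a pointer to $w$'s former parent and to $w$'s position in that parent's child list (itself doubly linked), so reattachment is $O(1)$; the detached subtree is never walked during the cut or during the restore, only repointed. For $(ii)$, $(iv)$, $(v)$ we store the single old value; for $(vi)$ we store the $O(1)$ endpoints of the chain. Each record is thus of constant size and constant-time reversible, and the update procedures append exactly one record per such change, giving (a) and (b). For the portions of $C$ that $\oracleleft$ rebuilds from scratch on a spine (Lemma~\ref{lem:promotebackedge}), the freshly allocated tree/list nodes are reachable only through pointers (root pointer, child-list links) that are themselves covered by records of types $(ii)$–$(iii)$, and the reused part of $C$ is touched only through changes of type $(iv)$–$(v)$; hence undoing amounts to discarding the new root pointer and replaying the value/pointer records, with no traversal of the discarded structure.

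\textbf{Main obstacle.} The delicate point is not the asymptotics but the \emph{completeness and orderability} of the journal: one must check, procedure by procedure, that no implicit side effect escapes logging (for instance, when pruning branches in $\oracleright$ one both cuts subtrees \emph{and} updates lowpoints along the path to $u$, and when rebuilding in $\oracleleft$ one may re-contract chains), and that applying the records in strict reverse order is consistent — a lowpoint restore must come after the reattachment of the subtree whose removal caused the lowpoint to change, and a chain re-expansion must precede the value restores of its internal vertices. LIFO replay handles this automatically precisely because each record is created \emph{after} the structural prerequisites it depends on, which is an invariant maintained by the update procedures. Once this invariant is verified, the bound $\undooracle(C,I) = O(|I|)$ follows immediately.
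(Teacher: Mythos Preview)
Your proposal is correct and follows essentially the same approach as the paper: both treat $I$ as a stack (journal) of atomic modifications to linked-list–based structures and undo them in LIFO order, with the key observation that cutting a subtree logs only the attaching pointer, not the subtree's contents. The paper's proof is terser—it simply appeals to ``persistent versions of these data structures'' that keep a modification stack—whereas you spell out the concrete record types and the ordering invariant, but the underlying mechanism is the same.
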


From Lemmas~\ref{lem:choose} and
\ref{lem:removebackedge}--\ref{lem:restore}, it follows that on a
spine of the recursion tree we have the costs:
$\chooseedge(u)$ on each node which is bounded by $O(|V_X|)$ as there
are at most $|V_X|$ back edges in $u$; $\oracleright(C,e)$,
$\undooracle(C,I)$ take $O(|V_X|)$ time; $\oracleleft(C,e)$ and
$\undooracle(C,I)$ are charged $O(|V_X|+|E_X|)$ time.  We thus have the following
result,  completing the proof of Theorem~\ref{theorem:optimal_paths}.


\begin{lemma}
	\label{lem:algo_cost}
	Algorithm~$\ref{alg:liststpaths}$ can be implemented with a
        cost fulfilling Eq.~\eqref{eq:abstrac_cost}, thus it takes
        total $O(m+\sum_{r \in R} T(r)) = O(m+\sum_{\pi \in
          \setofpaths_{s,t}(\beadstring)}{|\pi|})$ time.
\end{lemma}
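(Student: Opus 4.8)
The plan is to assemble Lemma~\ref{lem:algo_cost} by stitching together all the certificate-maintenance lemmas proved in this section and mapping each cost onto one of the three cases of Eq.~\eqref{eq:abstrac_cost}, then invoking Lemma~\ref{lemma:total_cost_recursion_tree} to convert $\sum_{r\in R} T(r)$ into $O(\sum_{\pi}|\pi|)$. First I would recall that, by Lemma~\ref{lem:certificate_scratch}, building the initial certificate at the root costs $O(m)$; this is where the additive $O(m)$ term comes from. After that, I would argue that every subsequent operation issued by Algorithm~\ref{alg:liststpaths} is one of $\chooseedge$, $\oracleleft$, $\oracleright$, or $\undooracle$, and account for their total cost charged against the structural decomposition of $R$ into spines, unary nodes, and leaves.

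Next, I would go through the three node types. For a \emph{unary node}, $\chooseedge$ costs $O(1)$ by Lemma~\ref{lem:choose}, the single $\oracleleft$ costs $O(1)$ by Lemma~\ref{lem:unary_left}, and the matching $\undooracle$ costs $O(1)$ by Lemma~\ref{lem:restore} (the bookkeeping $I$ here has constant size — only a cut list of children); hence a unary node fits the $T(r)=c_0$ bound. For a \emph{leaf}, the work is just outputting $\pi_s$ in $O(|\pi|)$ time, matching $T(r)=c_0|\pi|$. For a \emph{spine} $S$ with first node $\langle \pi_s,u,C\rangle$ and compacted head $\chead=(V_X,E_X)$: there are $d(u)\le |V_X|$ nodes in $S$, so the $\chooseedge$ calls over the spine cost $O(|V_X|)$ in total (each is $O(1)$, and the number of back edges at $u$ is at most $|V_X|$); the $\oracleright$ calls over the spine cost $O(|V_X|)$ by Lemma~\ref{lem:removebackedge}; the $\oracleleft$ calls over the spine cost amortized $O(|E_X|)$ by Lemma~\ref{lem:promotebackedge}; and the corresponding $\undooracle$ calls are bounded, by Lemma~\ref{lem:restore}, by the total size of the modifications recorded, which is dominated by the work done to produce them, hence $O(|V_X|+|E_X|)$ over the spine. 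Summing, the whole spine costs $O(|V_X|+|E_X|)$, matching $T(r)=c_0(|V_X|+|E_X|)$ for a sufficiently large constant $c_0$.

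With Eq.~\eqref{eq:abstrac_cost} thus fulfilled, I would finish by applying Lemma~\ref{lemma:total_cost_recursion_tree}: the sum of all node costs in $R$ is $O(\sum_{\pi\in\setofpaths_{s,t}(\beadstring)}|\pi|)$. Adding the $O(m)$ start-up cost gives the claimed total $O(m+\sum_{r\in R}T(r)) = O(m+\sum_{\pi\in\setofpaths_{s,t}(\beadstring)}|\pi|)$, which by Lemma~\ref{lemma:beadstring} equals $O(m+\sum_{\pi\in\setofpaths_{s,t}(G)}|\pi|)$.

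The main obstacle is the amortization across a spine, specifically the interplay between Lemma~\ref{lem:promotebackedge}'s charging scheme and Lemma~\ref{lem:restore}. The subtlety is that $\oracleleft$ on the $i$-th binary node of a spine may re-explore the internal edges $E'_X$ of a shared bead, and those costs are charged \emph{outside} the spine — to a node on the leftmost recursion path below that binary node. I must check that this external charging is consistent with the per-node budget in Eq.~\eqref{eq:abstrac_cost}: a node charged $O(|E'_X|)$ is either a unary node or leaf (with $|E'_X|\le 1$, absorbed by the constant) or the first node of another spine whose own head has exactly $E'_X$ as edge set (so the charge is $O(|V_X'|+|E_X'|)$, within that spine's budget), and no node is charged twice. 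Likewise I must confirm that the bookkeeping $I$ handed to $\undooracle$ never exceeds, in total size over a spine, the $O(|V_X|+|E_X|)$ already accounted for. Once these charge-accounting invariants are verified, the rest is a straightforward tallying of the per-operation bounds already established.
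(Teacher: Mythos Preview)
Your proposal is correct and follows essentially the same approach as the paper: you assemble Lemmas~\ref{lem:certificate_scratch}, \ref{lem:choose}, \ref{lem:unary_left}, \ref{lem:removebackedge}, \ref{lem:promotebackedge}, and \ref{lem:restore} to verify that each node type (unary, leaf, spine) meets the budget in Eq.~\eqref{eq:abstrac_cost}, then invoke Lemma~\ref{lemma:total_cost_recursion_tree}. Your treatment is in fact more explicit than the paper's own justification (which is the short paragraph preceding the lemma), and your discussion of the external charging in Lemma~\ref{lem:promotebackedge} and of bounding $\undooracle$ via $|I|$ correctly identifies the only non-mechanical points.
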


\clearpage
\begin{singlespace}
\begin{footnotesize}
\bibliographystyle{plain}
\bibliography{cycles}
\end{footnotesize}
\end{singlespace}

\clearpage
\appendix

\section{Omitted Proofs}
\label{app:omitted-proofs}

\subsection{Lemma~\ref{lemma:correctness_algo_listpaths}}
\begin{proof}
  For a given vertex $u$ the function $\chooseedge(C, u)$ returns an
  edge $e$ incident to $u$. We maintain the invariant that $\pi_s$ is
  a path $s \leadsto u$, since at the point of the recursive call in
  line~\ref{code:left_branch}: (i) is connected as we append edge
  $(u,v)$ to $\pi_s$ and; (ii) it is simple as vertex $u$ is removed
  from the graph $G$ in the call to $\oracleleft(C,e)$ in
  line~\ref{code:left_update}. In the case of recursive call in
  line~\ref{code:right_branch} the invariant is trivially maintained
  as $\pi_s$ does not change.
  The algorithm only outputs $st$-paths since $\pi_s$ is
  a $s \leadsto u$ path and $u=t$ when the algorithm outputs, in
  line~\ref{code:base}. 

  All the paths with prefix $\pi_s$ that do not use $e$ are listed by
  the recursive call in line~\ref{code:right_branch}. This is done by
  removing $e$ from the graph in line~\ref{code:right_update} and thus
  no path can include $e$. The paths that use $e$ are listed in
  line~\ref{code:left_branch} since in the recursive call $e$ is added
  to $\pi_s$. Given that the tree edge incident to $u$ is the last one
  to be returned by $\chooseedge(C,u)$, there is no path that does not
  use this edge, therefore it is not necessary to call
  line~\ref{code:right_branch} for this edge.
\end{proof}

\subsection{Lemma~\ref{lem:properties_recursion}}

\begin{proof}
We proceed in order as follows.
\begin{enumerate}
\item We only output a solution in a leaf and we only do recursive
  calls that lead us to a solution. Moreover every node partitions the
  set of solutions in the ones that use an edge and the ones that do
  not use it. This guarantees that the leaves in the left subtree of
  the node corresponding to the recursive call and the leaves in the
  right subtree do not intersect. This implies that different leaves
  correspond to different paths from $s$ to $t$, and that for each
  path there is a corresponding leaf.
\item Each left branch corresponds to the inclusion of an edge in the
  path $\pi$.
\item Since we are in a biconnected component, there is always a left
  branch. There can be no unary node as a right branch: indeed for any
  edge of $\beadstring$ there exists always a path from $s$ to $t$
  passing through that edge. Since the tree edge is always the last
  one to be chosen, unary nodes cannot correspond to back edges and
  binary nodes are always back edges.
\item From point \ref{item:R1} and from the fact that the recursion tree is
  a binary tree. (In any binary tree, the number of binary nodes is
  equal to the number of leaves minus 1.)
\end{enumerate}
\end{proof}

%
%
%

\subsection{Lemma~\ref{lemma:density}}

\begin{proof}
	Consider the following partition $V_X = \{r\} \cup V_2 \cup
	V_3$ where: $r$ is the root; $V_2$ is the set of vertices with
	degree 2 and; $V_3$, the vertices with degree $\geq 3$.  Since
	$H_X$ is compacted DFS tree of a biconnected graph, we have
	that $V_2$ is a \emph{subset} of the leaves and $V_3$ contains
	the set of internal nodes (except $r$). There are no vertices
	with degree 1 and $d(r) \geq 2$. Let $x = \sum_{v \in V_3}
	d(v)$ and $y = \sum_{v \in V_2} d(v)$.  We can write the
	density as a function of $x$ and $y$:
	$$\frac{|E_X|}{|V_X|} = \frac{x + y + d(r)}{2 (|V_3| + |V_2| + 1)} $$

	Note that $|V_3| \le \frac{x}{3}$ as the vertices in $V_3$
	have at least degree 3, $|V_2| = \frac{y}{2}$ as vertices in
	$V_2$ have degree exactly 2. Since $d(r) \ge 2$, we derive the
	following bound:
	$$\frac{|E_X|}{|V_X|} \ge \frac{x + y + 2}{\frac{2}{3}x + y + 2} $$

	Consider any graph with $|V_X|>3$ and its DFS tree rooted at
        $r$. Note that: (i) there are no tree edges between any two
        leaves, (ii) every node in $V_2$ is a leaf and (iii) no leaf
        is a child of $r$.  Therefore, every tree edge incident in a
        vertex of $V_2$ is also incident in a vertex of $V_3$. Since
        exactly half the incident edges to $V_2$ are tree edges (the
        other half are back edges) we get that $y \le 2x$.

	With $|V_X| \ge 3$ there exists at least one internal node in
	the DFS tree and therefore $x \ge 3$.
 \begin{alignat*}{2}
	 \text{minimize }\quad   & \frac{x + y + 2}{\frac{2}{3}x + y + 2}\ \\
	 \text{subject to }\quad  & 0 \le y \le 2x, \\
	 		    & x \ge 3.
  \end{alignat*}

 Since for any $x$ the function is minimized by the maximum $y$ s.t.
 $y \le 2x$ and for any $y$ by the minimum $x$, we get:
 $$\frac{|E_X|}{|V_X|} \ge \frac{9x + 6}{8x + 6} \ge
 \frac{11}{10}$$
 \qed
\end{proof}

\subsection{Lemma~\ref{lem:restore}}
\begin{proof}
	We use standard data structures (i.e. linked lists) for the
        representation of certificate $C$.  Persistent versions of
        these data structures exist that maintain a stack of
        modifications applied to them and that can restore its
        contents to their previous states.  Given the modifications in
        $I$, these data structures take $O(|I|)$ time to restore the
        previous version of $C$.

        Let us consider the case of performing $\oracleleft(C,e)$. We
        cut at most $O(|V_X|)$ edges from $C$. Note that, although we
        conceptually remove whole branches of the DFS tree, we only
        remove edges that attach those branches to the DFS tree. The
        other vertices and edges are left in the certificate but, as
        they no longer remain attached to $B_{u,t}$, they will never
        be reached or explored. In the case of $\oracleright(C,e)$, we
        have a similar situation, with at most $O(|E_X|$) edges being
        modified along the spine of the recursion tree. \qed
\end{proof}
	
\clearpage

\section{Extended analysis of operations in a spine of the recursion tree}
\label{app:extend-analys-oper}

\begin{figure}[t!]
\centering
\subfigure[Bead string $B_{u,t}$ at the root of the spine]{
\begin{tikzpicture}
[nodeDecorate/.style={shape=circle,inner sep=1pt,draw,thick,fill=black},%
  lineDecorate/.style={-,dashed},%
  elipseDecorate/.style={color=gray!30},
  scale=0.35]
\draw (10,22) circle (9);
\draw (5,11.1) circle (3);

\node (s) at (10,34) [nodeDecorate,color=lightgray,label=above left:$s$] {};
\node (u) at (10,31) [nodeDecorate,label=above left:{ $u$}] {};

\node (tp) at (6.2,13.9) [nodeDecorate,label=above right:{ $z_4$}] {};
\node (t) at (5.5,11) [nodeDecorate,label=below:$t$] {};

\path {
	(s) edge[snake,-,color=lightgray] node {\quad\quad$\pi_s$} (u)
	(u) edge node {} (tp)
	(tp) edge node {} (t)
};

\node (a) at (9.3,28) [nodeDecorate,label=left:$v$] {};
\node (b) at (7.1,18) [nodeDecorate,] {};
\node (c) at (13,25) [nodeDecorate,] {};
\node (d) at (14.5,22) [nodeDecorate,label=left:$z_2$] {};
\node (e) at (11,22) [nodeDecorate,label=below:$z_3$] {};
\node (f) at (16,19) [nodeDecorate,label=left:$z_1$] {};
\node (g) at (13,19) [nodeDecorate,] {};
\node (h) at (10,16) [nodeDecorate,] {};

\path {
	(a) edge node {} (c)
	(c) edge node {} (d)
	(d) edge node {} (f)
	(c) edge node {} (e)
	(d) edge node {} (g)
	(b) edge node {} (h)
};

\path {
	(u) edge[dashed,bend left=-40] node {} (tp)
	(f) edge[dashed,bend left=-40] node {} (u)
	(g) edge[dashed,bend left=10] node {} (c)
	(e) edge[dashed,bend left=-10] node {} (u)
	(d) edge[dashed,bend left=-20] node {} (u)
	(a) edge[dashed,bend left=5] node {} (h)
};

\end{tikzpicture}
}
\subfigure[Spine of the recursion tree] {

\begin{tikzpicture}
[nodeDecorate/.style={shape=circle,inner sep=4pt,draw,fill=white},%
  lineDecorate/.style={-,dashed},%
  elipseDecorate/.style={color=gray!30},
  scale=0.35]

  \node (a) at (0,0) [nodeDecorate,label=above:$B_{u,t}$] {};
  \node (b) at (4,-1) [nodeDecorate] {};
  \node (c) at (8,-2) [nodeDecorate] {};
  \node (d) at (12,-3) [nodeDecorate] {};
  \node (e) at (16,-4) [nodeDecorate] {};

  \node (a2) at (-2,-5) [nodeDecorate,label=below:$B_{z_1,t}$] {};
  \node (b2) at (2,-6) [nodeDecorate,label=below:$B_{z_2,t}$] {};
  \node (c2) at (6,-7) [nodeDecorate,label=below:$B_{z_3,t}$] {};
  \node (d2) at (10,-8) [nodeDecorate,label=below:$B_{z_4,t}$] {};
  \node (e2) at (14,-9) [nodeDecorate,label=below:$B_{v,t}$] {};

\path {
	(a) edge node {} (b)
	(b) edge node {} (c)
	(c) edge node {} (d)
	(d) edge node {} (e)
	(a) edge node {} (a2)
	(b) edge node {} (b2)
	(c) edge node {} (c2)
	(d) edge node {} (d2)
	(e) edge node {} (e2)
};
\end{tikzpicture}
}
\caption{Example bead string and spine of the recursion tree}
\label{fig:topspine}
\end{figure}
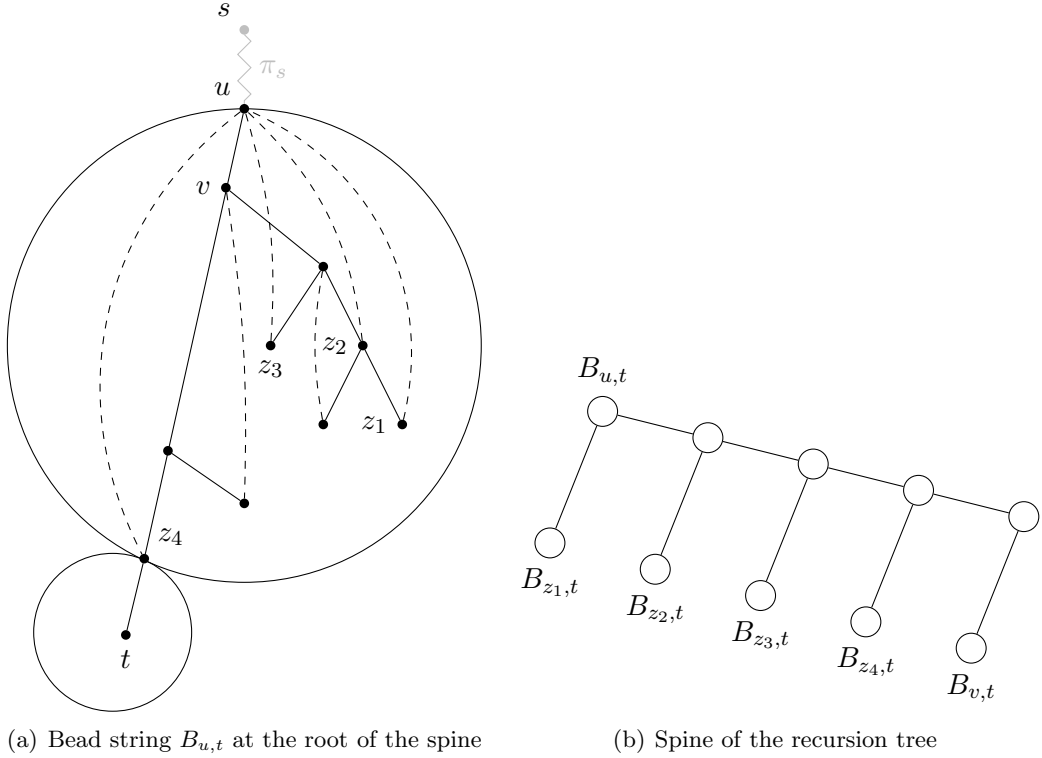

In this appendix, we present all details and illustrate with figures
the operations $\oracleright(C,e)$ and $\oracleleft(C,e)$ that are
performed along a spine of the recursion tree. In order to better
detail the procedures in Lemma~\ref{lem:removebackedge} and
Lemma~\ref{lem:promotebackedge}, we divide them in smaller parts.
Fig.~\ref{fig:topspine} shows (a) an example of a bead string
$B_{u,t}$ at the first node of the spine and (b) the nodes of the
spine. This spine contains four binary nodes corresponding to the
back edges in $lb(u)$ and an unary node corresponding to the tree edge
$(u,v)$. Note that edges are taken in reverse DFS order as defined in
operation $\chooseedge(C,u)$.  

As a consequence of Lemma~\ref{lemma:beadstring}, the impact of
operations $\oracleright(C,e)$ and $\oracleleft(C,e)$ in the
certificate is restricted to the biconnected component of $u$. Thus we
mainly focus on maintaining the compacted head $H_X = (V_X,E_X)$ of
the bead string $B_{u,t}$.


\subsection{Operation \boldmath{$\oracleright(C,e)$} in a spine of the recursion tree} 

Let us now prove and illustrate the following lemma:
\begin{lemma}
	\emph{(Lemma~\ref{lem:removebackedge} restated)} In a spine of
	the recursion tree, operations $\oracleright(C,e)$ can be
	implemented in $O(|V_X|)$ total time.
\end{lemma}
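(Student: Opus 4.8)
The plan is to describe, for a single right branch, exactly which part of the certificate $C$ must be touched, and then to amortise the total work over the whole spine using the reverse-DFS order in which $\chooseedge$ returns the back edges. First I would fix notation: let $\langle \pi_s, u, C\rangle$ be the first node of the spine, and let $b_1 = (z_1,u), b_2 = (z_2,u), \dots, b_k = (z_k,u)$ be the back edges stored in $lb(u)$, listed in the order in which $\chooseedge(C,u)$ returns them, i.e.\ reverse DFS order, so that the unique tree edge $(u,v)$ is returned last. The $i$-th binary node of the spine performs $\oracleright(C,b_i)$, which must $(i)$ delete $b_i$ from $C$, $(ii)$ recompute the $\mathit{lowpoint}$ values that change, and $(iii)$ detect the vertices of the compacted head $\chead=(V_X,E_X)$ that become articulation points and therefore leave $\beadstring$, cutting their subtrees from $C$ and recording them in the bookkeeping information $I$, so that $\undooracle$ can reinstate them.

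The key locality observation I would use is that deleting the single back edge $b_i=(z_i,u)$ changes $\mathit{lowpoint}(w)$ only for vertices $w$ lying on the tree path $P_i$ from $z_i$ up towards the root $u$: $\mathit{lowpoint}(w)$ depends only on back edges with an endpoint in the subtree of $w$, and the only such edge removed is $b_i$, whose lower endpoint is $z_i$. Hence the procedure for one call walks up $P_i$ from $z_i$; at each vertex $w$ it recomputes $\mathit{lowpoint}(w)$ from the topmost edge of $ab(w)$ together with the already-updated lowpoints of the children of $w$, it tests whether $w$ became an articulation point by comparing the new $\mathit{lowpoint}(w)$ with $\gamma(y)$ for the parent $y$ (cutting the subtree of $w$ in $O(1)$ time and bookkeeping it in $I$ when $w$ is not on the leftmost $u\leadsto t$ path), and it halts as soon as the recomputed value reaches $\gamma(u)$ — the minimum possible value, since $u$ is the root — because then neither $\mathit{lowpoint}(w)$ nor any ancestor lowpoint can have decreased.

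The amortisation is the heart of the argument. Because the back edges are processed in reverse DFS order, I would argue that no tree edge of $\chead$ is traversed twice across the whole spine: when the walk for $b_j$ reaches a vertex that is already an ancestor of some previously processed $z_i$, the lowpoint stored there has already been finalised (in particular it has reached $\gamma(u)$ whenever the vertex is still in $\beadstring$), so the halting criterion fires before that vertex is passed a second time. Consequently the traversed portions of the walks $P_1,\dots,P_k$ are edge-disjoint, and their total length is bounded by the number of tree edges of the compacted head, which is $O(|V_X|)$. Adding the $O(1)$ spent per call on deleting $b_i$ and the $O(1)$ spent per subtree cut, the whole spine costs $O(|V_X|)$ time; the separate restore cost of $\undooracle$ along the spine is handled by Lemma~\ref{lem:restore}.

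The step I expect to be the main obstacle is making the ``no tree edge traversed twice'' claim fully rigorous: one must show that the halting rule (stop when the recomputed lowpoint no longer decreases, in particular when it equals $\gamma(u)$) really does guarantee that the walk for $b_j$ stops at or below the portion of the tree already processed by $b_1,\dots,b_{j-1}$. This needs the fact that the earlier removals affect only branches lying strictly to the right of $z_j$ in DFS order, and therefore leave every common ancestor's lowpoint at a value that the removal of $b_j$ cannot further reduce; I would prove this by induction on $j$, tracking the invariant that after the $j$-th call every vertex still in $\beadstring$ on $P_1\cup\dots\cup P_j$ has its lowpoint equal to $\gamma(u)$. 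The pruning bookkeeping — deciding precisely which subtrees have left $\beadstring$ and recording just enough in $I$ — is a secondary but still delicate point that I would treat together with the halting analysis.
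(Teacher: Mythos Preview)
Your proposal is correct and follows essentially the same approach as the paper: walk up the tree path $P_i$ from $z_i$ after removing $b_i$, recompute lowpoints from $ab(w)$ and children, halt once the recomputed value equals $\gamma(u)$, cut subtrees that leave $\beadstring$, and amortise via the observation that the halting point for $b_i$ is the first common ancestor of $z_i$ and the remaining $z_j$'s, so no tree edge of $\chead$ is traversed twice. The paper states this last point rather tersely (``stops at the first common ancestor of endpoints of back edges $b_i$''); your plan to make it rigorous by induction on $j$ is a reasonable elaboration, though your proposed invariant should be phrased as ``the walk for $b_j$ halts at or below the lowest common ancestor of $z_j$ and $\{z_{j+1},\dots,z_k\}$'' rather than in terms of all vertices on $P_1\cup\dots\cup P_j$ having lowpoint $\gamma(u)$, since vertices below that LCA need not retain lowpoint $\gamma(u)$ yet may still remain in $\beadstring$.
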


	In the right branches along a spine, we remove all back edges
	in $lb(u)$. This is done by starting from the last edge in
	$lb(u)$, i.e. proceeding in reverse DFS order. In the example
	from Fig.~\ref{fig:topspine}, we remove the back edges
	$(z_1,u)$, $(z_2,u)$, $(z_3,u)$ and $(z_4,u)$. To update the
	certificate corresponding to $B_{u,t}$, we have to (i) update
	the lowpoints in each vertex of $H_X$ (ii) prune vertices
	that are no longer in $B_{u,t}$ after removing a back edge.
	Note that, for a vertex $w$ in the tree, there is no need to
	update $\gamma(w)$.

	Let us consider the update of lowpoints in the DFS tree.  For
	a back edge $b_i = (z_i,u)$, we traverse the vertices in the
	path from $z_i$ towards the root $u$. By definition of
	lowpoint, these are the only lowpoints that can change.
	Suppose that we remove back edge $(z_4,u)$ in the example from
	Fig.~\ref{fig:topspine}, only the lowpoints of the vertices
	in the path from $z_4$ towards the root $u$ change.
	Furthermore, consider a vertex $w$ in the tree that is an
	ancestor of at least two endpoints $z_i, z_j$ of back edges
	$b_i$, $b_j$. The lowpoint of $w$ does not change when we
	remove $b_i$.  These observations lead us to the following
	lemma.

\begin{lemma}
	In a spine of the recursion tree, the update of lowpoints in
	the certificate by operation $\oracleright(C,e)$ can be done
	in $O(|V_X|)$ total time.  \label{lem:lowpoints}
\end{lemma}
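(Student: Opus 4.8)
The plan is to flesh out the argument already sketched in the proof of Lemma~\ref{lem:removebackedge}, keeping the focus on the lowpoint bookkeeping. Recall that a spine starts at a node $\langle\pi_s,u,C\rangle$ and that the successive $\oracleright(C,e)$ calls delete the back edges of $lb(u)$ one by one, in the reverse DFS order in which $\chooseedge(C,u)$ returns them; write them $b_1=(z_1,u),b_2=(z_2,u),\dots$ in order of deletion, and recall (from the consequence of Lemma~\ref{lemma:beadstring}) that only the compacted head $\chead=(V_X,E_X)$ is affected.

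First I would record two elementary facts. \emph{Locality}: deleting $b_i$ can change $\mathit{lowpoint}(w)$ only for vertices $w$ on the tree path from $z_i$ up to $u$, since for any other $w$ neither the children lowpoints nor the lists $ab(w),lb(w)$ change. \emph{Monotonicity}: a lowpoint can only \emph{increase} when an edge is removed; in particular, once $\mathit{lowpoint}(w)>\gamma(u)$ it stays so for the rest of the spine. These two facts make the halting rule legitimate: while walking from $z_i$ towards $u$, if the recomputed value at $w$ equals $\gamma(u)$ then, $\gamma(u)$ being the minimum possible and values only rising, $\mathit{lowpoint}(w)$ is actually unchanged and so are all its proper ancestors', so the walk may stop at $w$.

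The heart of the proof is the amortization, which I would split in two. (i) \emph{Total length of the walks.} I claim the walk triggered by $b_i$ climbs only through vertices whose connection to $u$ has just been destroyed: it stops at the first ancestor $w$ of $z_i$ still anchored to $u$ through another child or a back edge (so $\mathit{lowpoint}(w)=\gamma(u)$ survives the deletion), exactly as in the observation preceding the lemma, and every tree edge it does climb is incident to a vertex whose lowpoint has just risen above $\gamma(u)$ — hence lies on a sub-branch that the deletion cuts off from $\beadstring$ and that is never climbed again in the spine. Consequently the tree edges climbed over the whole spine are, in total, at most the tree edges of $\chead$, i.e.\ $O(|V_X|)$. (ii) \emph{Cost per visited vertex.} Recomputing $\mathit{lowpoint}(w)$ from scratch costs $O(1+d_X(w))$, where $d_X(w)$ is the number of tree-children of $w$ (the best back-edge contribution is the topmost entry of $ab(w)$, found in $O(1)$ since $ab(w)$ is kept in DFS order). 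To avoid paying this at every visited vertex, I would have each vertex cache how many of its neighbours realize its lowpoint, so that when the walk enters $w$ from a child $c$ whose lowpoint rose, the test ``is $\mathit{lowpoint}(w)$ still realized by something else?'' is answered in $O(1)$; the expensive full rescan is then forced only at the single moment $\mathit{lowpoint}(w)$ rises strictly above $\gamma(u)$, and by monotonicity this happens at most once per vertex, for a total of $\sum_{w}O(1+d_X(w))=O(|V_X|)$.

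Combining (i) and (ii) bounds the total time of all the lowpoint updates in a spine by $O(|V_X|)$. The step I expect to be the genuine obstacle is precisely part (ii): the naive recomputation of a lowpoint is not $O(1)$, so one has to argue carefully that the only costly recomputations coincide with the one-time, monotone event of a vertex ceasing to reach $u$, and that the counters needed to detect ``unchanged'' are themselves maintainable in $O(1)$ under the same deletions. Once this is in place, part (i) is a routine edge-disjointness/charging argument on the tree edges of $\chead$.
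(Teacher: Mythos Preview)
Your walk-up-and-halt argument and the edge-disjointness in part~(i) match the paper's proof; the paper phrases~(i) as ``we do not traverse the same tree edge twice since the process stops at the first common ancestor of $z_i$ and $z_{i+1}$,'' which is the same observation in different clothing. Where you diverge is part~(ii): the paper simply recomputes $\mathit{lowpoint}(w)$ as the minimum of the top entry of $ab(w)$ and the lowpoints of \emph{all} children of $w$, and then asserts the $O(|V_X|)$ bound directly from the tree-edge count, with no counter and no discussion of the per-vertex cost. Your realizer-counter is therefore an addition not present in the paper, and it does plug a hole the paper leaves open --- for instance, if $u$'s unique tree-child $v$ has leaf children $a_1,\dots,a_k$, each with a back edge to $u$, the naive rescan at $v$ costs $\Theta(k)$ per deletion, so the paper's accounting is not self-evidently $O(|V_X|)$ without something like your trick.

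One small wording fix in~(i): ``lowpoint rose above $\gamma(u)$'' does not literally force the vertex to be ``cut off from $\beadstring$'' (it may sit on the leftmost $u\leadsto t$ path and hence survive the pruning). What you actually need, and what is true, is the weaker statement that no remaining $z_j$ can descend from such a vertex, so it is never revisited by a later walk; that already suffices for the amortization.
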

\begin{proof}
	Take each back edge $b_i = (z_i,u)$ in the order defined by
	$\chooseedge(C,u)$.  Remove $b_i$ from $lb(u)$ and $ab(z_i)$.
	Starting from $z_i$, consider each vertex $w$ in the path from
	$z_i$ towards the root $u$.
	On vertex $w$, we update $\mathit{lowpoint}(w)$ using the standard
	procedure: take the endpoint $y$ of the first edge in $ab(w)$
	(the back edge that goes the nearest to the root of the tree)
	and choosing the minimum between $\gamma(y)$ and the lowpoint
	of each child of $w$.
	When the updated $\mathit{lowpoint}(w) = \gamma(u)$, we stop examining
	the path from $z_i$ to $u$ since it implies that the lowpoint
	of the vertex can not be further reduced (i.e. $w$ is both an
	ancestor to both $z_i$ and $z_{i+1}$).

	The total cost of updating the lowpoints when removing all
	back edges $b_i$ is $O(|V_X|)$: there are $O(|V_X|)$ tree
	edges and we do not traverse the same tree edge twice since
	the process described stops at the first common ancestor of
	endpoints of back edges $b_i$ and $b_{i+1}$. By contradiction:
	if a tree edge $(x,y)$ would be traversed twice when removing
	back edges $b_i$ and $b_{i+1}$, it would imply that both $x$
	and $y$ are ancestors of $z_i$ and $z_{i+1}$ (as edge $(x,y)$
	is both in the path $z_i$ to $u$ and the path $z_{i+1}$ to
	$u$) but we stop at the first ancestor of both $z_i$ and
	$z_{i+1}$.\qed
\end{proof}

Let us now consider the removal from the certificate of vertices that
are no longer in $B_{u,t}$ as consequence of operation
$\oracleright(C,e)$ in a spine of the recursion tree. By removing a
back edge $b_i = (z_i,u)$, it is possible that a vertex $w$ previously
in $H_X$ is no longer in the bead string $B_{u,t}$ (e.g. $w$ is no
longer biconnected to $u$ and thus there is no simple path $u \leadsto
w \leadsto t$). 

\begin{lemma}
	In a spine of the recursion tree, the branches of the DFS that
	are no longer in $B_{u,t}$ due to operation
	$\oracleright(C,e)$ can be removed from the certificate in
	$O(|V_X|)$ total time.
	\label{lem:cutbranches}
\end{lemma}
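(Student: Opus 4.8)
The plan is to show that, when we remove a back edge $b_i = (z_i,u)$ along a spine, the only branches of the DFS tree that can leave $B_{u,t}$ hang off articulation points lying on the path from $z_i$ to the root $u$, and that these can be detected and cut using the same upward walk already performed in Lemma~\ref{lem:lowpoints}. First I would fix notation: let $w$ range over the vertices on the tree path from $z_i$ up to (but not necessarily reaching) $u$, and let $y$ denote the parent of $w$. After the lowpoint of $w$ has been refreshed by the procedure of Lemma~\ref{lem:lowpoints}, the standard criterion tells us that $w$ is an articulation point separating the subtree of $w$ from $u$ precisely when $\mathit{lowpoint}(w) > \gamma(y)$; combined with the fact that $w$ is not on the leftmost path (i.e.~does not lead to $t$), this means the whole subtree rooted at $w$ is no longer in $B_{u,t}$ and can be excised. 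The removal itself is done as in Lemma~\ref{lem:unary_left} and Lemma~\ref{lem:restore}: we only detach the single tree edge $(y,w)$, recording it in the bookkeeping information $I$, so each cut costs $O(1)$.

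Next I would argue correctness: no vertex outside the $z_i$-to-$u$ paths can drop out of $B_{u,t}$, because removing $b_i$ only changes biconnectivity relations among vertices whose lowpoint changed, and by the argument of Lemma~\ref{lem:lowpoints} those are exactly the vertices on that path; any subtree hanging off a vertex $w'$ not on the path keeps its attachment (its lowpoint is unchanged, so it was biconnected to $u$ before and remains so, or it was already outside and was cut earlier). I would also note that a subtree that sits on the leftmost path (the $u \leadsto t$ branch) is never cut, since $t$ must remain reachable; this matches the halting condition $\mathit{lowpoint}(w) = \gamma(u)$ used to stop the walk, which guarantees we never cut past the first common ancestor of $z_i$ and $z_{i+1}$.

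For the time bound I would reuse the amortization already established in Lemma~\ref{lem:lowpoints}: the cut-detection walk for $b_i$ traverses exactly the tree edges on the path $z_i \leadsto u$, and since the walk halts at the first common ancestor of $z_i$ and $z_{i+1}$, no tree edge of $H_X$ is traversed twice across all the removals in the spine. Hence the detection work is $O(|V_X|)$ in total, and since each articulation point found triggers an $O(1)$ detach operation and the number of such detachments over the spine is at most the number of tree edges, i.e.~$O(|V_X|)$, the overall cost of removing the stale branches is $O(|V_X|)$.

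The main obstacle I expect is making the correctness argument airtight, specifically the claim that a vertex leaves $B_{u,t}$ \emph{if and only if} it is cut by this local test — in particular ruling out the subtle case where a vertex $w$ on the path has $\mathit{lowpoint}(w) \le \gamma(y)$ after one removal but would have been cut had we considered the cumulative effect of several removed back edges. This is handled by observing that we process the back edges of $lb(u)$ one at a time in reverse DFS order and refresh lowpoints after each, so by induction the certificate is a correct DFS-tree-with-lowpoints for the current $B_{u,t}$ at every step; the test applied after the $i$-th removal therefore sees the true current lowpoints, and the "if and only if" follows from the classical characterization of articulation points via lowpoints. A secondary subtlety is ensuring the detached subtrees recorded in $I$ are exactly restorable by $\undooracle$, which follows from Lemma~\ref{lem:restore} since we only ever remove the single attaching tree edge.
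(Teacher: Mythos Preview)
Your proposal is correct and follows essentially the same approach as the paper: walk up the tree path from $z_i$ toward $u$, use the refreshed lowpoint test $\mathit{lowpoint}(w) > \gamma(y)$ together with the ``not on the leftmost path'' condition to detect stale subtrees, detach each with an $O(1)$ cut recorded in $I$, and reuse the halting criterion and non-repetition argument of Lemma~\ref{lem:lowpoints} for the $O(|V_X|)$ bound. You actually give a more careful correctness discussion (the ``if and only if'' via induction on the sequence of removals) than the paper, which simply states the procedure; the only point the paper makes explicit that you leave implicit is that, since $w$ is an articulation point, no back edges cross from the detached subtree into the surviving bead string, so the $lb$/$ab$ lists of surviving vertices need no update.
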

\begin{proof}
	To prune the branches of the DFS tree that are no longer in
	$H_X$, consider again each vertex $w$ in the path from $z_i$
	towards the root $u$ and the vertex $y$, parent of $w$. It is
	easy to check if $w$ is an articulation point by checking if
	the updated $\mathit{lowpoint}(w) > \gamma(y)$.  If that is the case
	and $w$ is not in the leftmost path of the DFS, it implies
	that $w \notin B_{u,t}$, and therefore we cut the subtree of
	$w$ and bookkeep it in $I$ to restore later. Like in the
	update the lowpoints, we stop examining the path $z_i$ towards
	$u$ in a vertex $w$ when $\mathit{lowpoint}(w) = \gamma(u)$ (the
	lowpoints and biconnected components in the path from $w$ to
	$u$ do not change).  When cutting the subtree of $w$, note
	that there are no back edges connecting it to $B_{u,t}$ ($w$ is
	an articulation point) and therefore there are no updates to
	the lists $lb$ and $ab$ of the vertices in $B_{u,t}$.  Like in
	the case of updating the lowpoints, we do not traverse the
	same tree edge twice (we use the same halting criterion). \qed
\end{proof}

With Lemma~\ref{lem:lowpoints} and Lemma~\ref{lem:cutbranches} we
finalize the proof of Lemma~\ref{lem:removebackedge}. 
Fig.~\ref{fig:oracleleftexample} shows the changes the bead string $B_{u,t}$
from Fig.~\ref{fig:topspine} goes through in the corresponding spine
of the recursion tree.

\begin{figure}[p]
\centering
\subfigure[$B_{u,t}$ after removal of $(z_1,u)$]{
\begin{tikzpicture}
[nodeDecorate/.style={shape=circle,inner sep=1pt,draw,thick,fill=black},%
  lineDecorate/.style={-,dashed},%
  elipseDecorate/.style={color=gray!30},
  scale=0.35]
\draw (10,22) circle (9);
\draw (5,11.1) circle (3);

\node (s) at (10,34) [nodeDecorate,color=lightgray,label=above left:$s$] {};
\node (u) at (10,31) [nodeDecorate,label=above left:{ $u$}] {};

\node (tp) at (6.2,13.9) [nodeDecorate,label=above right:{ $z_4$}] {};
\node (t) at (5.5,11) [nodeDecorate,label=below:$t$] {};

\path {
	(s) edge[snake,-,color=lightgray] node {\quad\quad$\pi_s$} (u)
	(u) edge node {} (tp)
	(tp) edge node {} (t)
};

\node (a) at (9.3,28) [nodeDecorate,label=left:$v$] {};
\node (b) at (7.1,18) [nodeDecorate,] {};
\node (c) at (13,25) [nodeDecorate,] {};
\node (d) at (14.5,22) [nodeDecorate,label=left:$z_2$] {};
\node (e) at (11,22) [nodeDecorate,label=below:$z_3$] {};
\node (g) at (13,19) [nodeDecorate,] {};
\node (h) at (10,16) [nodeDecorate,] {};

\path {
	(a) edge node {} (c)
	(c) edge node {} (d)
	(c) edge node {} (e)
	(d) edge node {} (g)
	(b) edge node {} (h)
};

\path {
	(u) edge[dashed,bend left=-40] node {} (tp)
	(g) edge[dashed,bend left=10] node {} (c)
	(e) edge[dashed,bend left=-10] node {} (u)
	(d) edge[dashed,bend left=-20] node {} (u)
	(a) edge[dashed,bend left=5] node {} (h)
};

\end{tikzpicture}
}
\subfigure[$B_{u,t}$ after removal of $(z_1,u),(z_2,u)$]{
\begin{tikzpicture}
[nodeDecorate/.style={shape=circle,inner sep=1pt,draw,thick,fill=black},%
  lineDecorate/.style={-,dashed},%
  elipseDecorate/.style={color=gray!30},
  scale=0.35]
\draw (10,22) circle (9);
\draw (5,11.1) circle (3);

\node (s) at (10,34) [nodeDecorate,color=lightgray,label=above left:$s$] {};
\node (u) at (10,31) [nodeDecorate,label=above left:{ $u$}] {};

\node (tp) at (6.2,13.9) [nodeDecorate,label=above right:{ $z_4$}] {};
\node (t) at (5.5,11) [nodeDecorate,label=below:$t$] {};

\path {
	(s) edge[snake,-,color=lightgray] node {\quad\quad$\pi_s$} (u)
	(u) edge node {} (tp)
	(tp) edge node {} (t)
};

\node (a) at (9.3,28) [nodeDecorate,label=left:$v$] {};
\node (b) at (7.1,18) [nodeDecorate,] {};
\node (c) at (13,25) [nodeDecorate,] {};
\node (e) at (11,22) [nodeDecorate,label=below:$z_3$] {};
\node (h) at (10,16) [nodeDecorate,] {};

\path {
	(a) edge node {} (c)
	(c) edge node {} (e)
	(b) edge node {} (h)
};

\path {
	(u) edge[dashed,bend left=-40] node {} (tp)
	(e) edge[dashed,bend left=-10] node {} (u)
	(a) edge[dashed,bend left=5] node {} (h)
};

\end{tikzpicture}
}
\subfigure[$B_{u,t}$ after removal of $(z_1,u), (z_2,u), (z_3,u)$]{
\begin{tikzpicture}
[nodeDecorate/.style={shape=circle,inner sep=1pt,draw,thick,fill=black},%
  lineDecorate/.style={-,dashed},%
  elipseDecorate/.style={color=gray!30},
  scale=0.35]
\draw (10,22) circle (9);
\draw (5,11.1) circle (3);

\node (s) at (10,34) [nodeDecorate,color=lightgray,label=above left:$s$] {};
\node (u) at (10,31) [nodeDecorate,label=above left:{ $u$}] {};

\node (tp) at (6.2,13.9) [nodeDecorate,label=above right:{ $z_4$}] {};
\node (t) at (5.5,11) [nodeDecorate,label=below:$t$] {};

\path {
	(s) edge[snake,-,color=lightgray] node {\quad\quad$\pi_s$} (u)
	(u) edge node {} (tp)
	(tp) edge node {} (t)
};

\node (a) at (9.3,28) [nodeDecorate,label=left:$v$] {};
\node (b) at (7.1,18) [nodeDecorate,] {};
\node (h) at (10,16) [nodeDecorate,] {};

\path {
	(b) edge node {} (h)
};

\path {
	(u) edge[dashed,bend left=-40] node {} (tp)
	(a) edge[dashed,bend left=5] node {} (h)
};

\end{tikzpicture}
}
\subfigure[$B_{u,t}$ at the last node of the spine]{

\begin{tikzpicture}
[nodeDecorate/.style={shape=circle,inner sep=1pt,draw,thick,fill=black},%
  lineDecorate/.style={-,dashed},%
  elipseDecorate/.style={color=gray!30},
  scale=0.35]
\draw (10,22) circle (9) [color=gray!30];
\draw (5,11.1) circle (3);
\draw (9.6,29.5) circle (1.5);
\draw (8,23) circle (5);
\draw (6.7,15.9) circle (2.1);

\node (s) at (10,34) [nodeDecorate,color=lightgray,label=above left:$s$] {};
\node (u) at (10,31) [nodeDecorate,label=above left:{ $u$}] {};

\node (tp) at (6.2,13.9) [nodeDecorate,label=above right:{ $z_4$}] {};
\node (t) at (5.5,11) [nodeDecorate,label=below:$t$] {};

\path {
	(s) edge[snake,-,color=lightgray] node {\quad\quad$\pi_s$} (u)
	(u) edge node {} (tp)
	(tp) edge node {} (t)
};

\node (a) at (9.3,28) [nodeDecorate,label=below left:$v$] {};
\node (b) at (7.1,18) [nodeDecorate,] {};
\node (h) at (10,19) [nodeDecorate,] {};

\path {
	(b) edge node {} (h)
};

\path {
	(a) edge[dashed,bend left=5] node {} (h)
};

\end{tikzpicture}
}

\caption{Example application of $\oracleright(C,e)$ on a spine of the recursion tree}
\label{fig:oracleleftexample}
\end{figure}
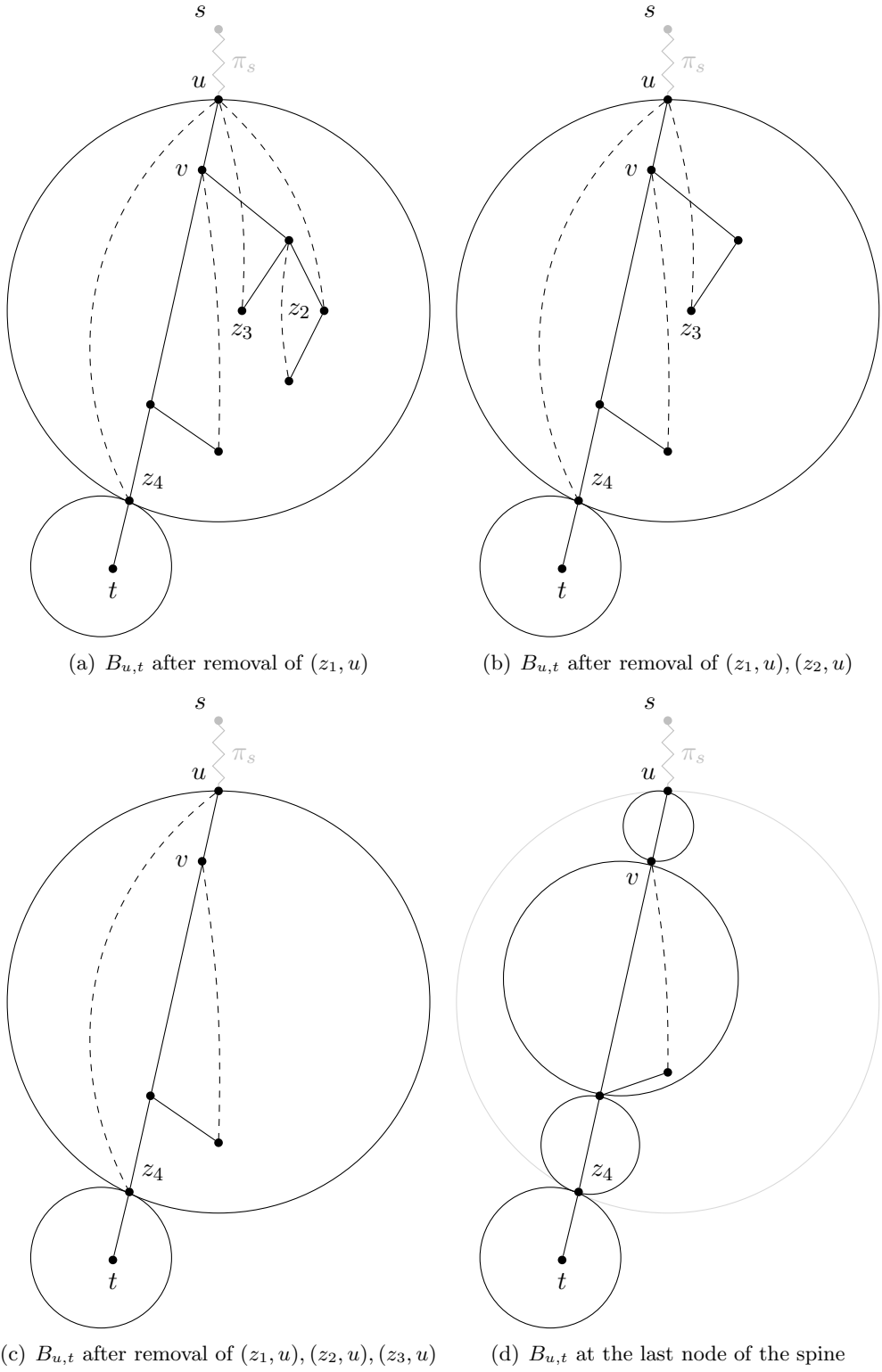

\subsection{Operation \boldmath{$\oracleleft(C,e)$} in a spine of the recursion tree} 

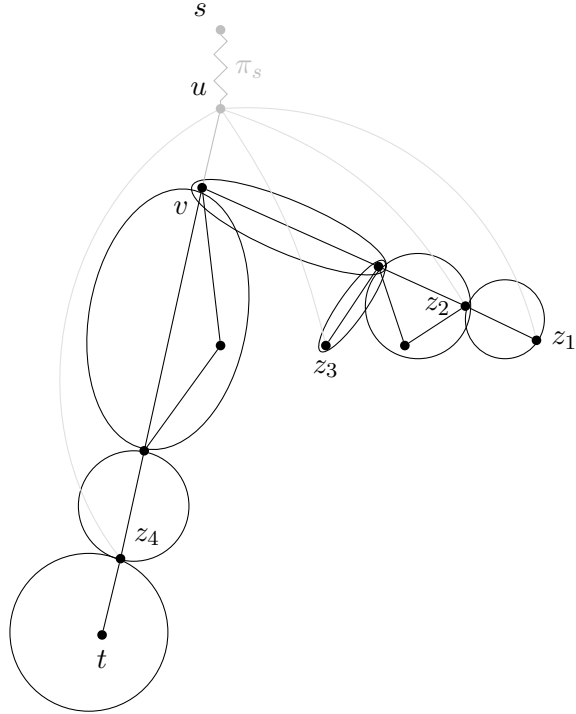
\begin{figure}[t!]
\centering
\begin{tikzpicture}
[nodeDecorate/.style={shape=circle,inner sep=1pt,draw,thick,fill=black},%
  lineDecorate/.style={-,dashed},%
  elipseDecorate/.style={color=gray!30},
  scale=0.35]
\draw (5,11.1) circle (3);
\draw[rotate around={-10:(8,23)}] (8,23) ellipse (3 and 5);
\draw[rotate around={-23:(12.6,26.5)}] (12.6,26.5) ellipse (4 and 1);
\draw[rotate around={55:(15,23.5)}] (15,23.5) ellipse (2.1 and 0.5);
\draw (6.7,15.9) circle (2.1);
\draw (17.5,23.5) circle (2.0);
\draw (20.8,23) circle (1.5);

\node (s) at (10,34) [nodeDecorate,color=lightgray,label=above left:$s$] {};
\node (u) at (10,31) [nodeDecorate,color=lightgray,label=above left:{ $u$}] {};

\node (tp) at (6.2,13.9) [nodeDecorate,label=above right:{ $z_4$}] {};
\node (t) at (5.5,11) [nodeDecorate,label=below:$t$] {};

\path {
	(s) edge[snake,-,color=lightgray] node {\quad\quad$\pi_s$} (u)
	(u) edge[color=lightgray] node {} (a)
	(a) edge node {} (tp)
	(tp) edge node {} (t)
};

\node (a) at (9.3,28) [nodeDecorate,label=below left:$v$] {};
\node (b) at (7.1,18) [nodeDecorate,] {};
\node (c) at (16,25) [nodeDecorate,] {};
\node (d) at (19.3,23.5) [nodeDecorate,label=left:$z_2$] {};
\node (e) at (14,22) [nodeDecorate,label=below:$z_3$] {};
\node (f) at (22,22.2) [nodeDecorate,label=right:$z_1$] {};
\node (g) at (17,22) [nodeDecorate,] {};
\node (h) at (10,22) [nodeDecorate,] {};

\path {
	(a) edge node {} (c)
	(c) edge node {} (d)
	(d) edge node {} (f)
	(c) edge node {} (e)
	(d) edge node {} (g)
	(b) edge node {} (h)
};

\path {
	(g) edge node {} (c)
	(a) edge node {} (h)


	(u) edge[bend left=-50,color=lightgray!50] node {} (tp)
	(f) edge[bend left=-40,color=lightgray!50] node {} (u)
	(e) edge[bend left=-10,color=lightgray!50] node {} (u)
	(d) edge[bend left=-20,color=lightgray!50] node {} (u)
};

\end{tikzpicture}
\caption{Block tree after removing vertex $u$}
\label{fig:blocktree_without_u}
\end{figure}

To compute $\oracleleft(C,e)$ in the binary nodes of a spine, we use
the fact that in every left branching from that spine, the graph is
the same (in a spine we only remove edges incident to $u$ and on a
left branch from the spine we remove the node $u$) and therefore its
block tree is also the same. In Fig.~\ref{fig:blocktree_without_u},
we show the resulting block tree of the graph from
Fig.~\ref{fig:topspine} after having removed vertex $u$. However,
the certificates on these left branches are not the same, as they are
rooted at different vertices. In the example we must compute the
certificates $C_1 \ldots C_4$ corresponding to bead strings $B_{z_1,t}
\ldots B_{z_4,t}$. We do not take into account the cost of the left
branch on the last node of spine (corresponding to $B_{v,t}$) as the
node is unary and we have shown in Lemma~\ref{lem:unary_left} how to
maintain the certificate in $O(1)$ time.

By using the reverse DFS order of the back edges, we are able to
traverse each edge in $H_X$ only an amortized constant number of times
in the spine.
\begin{lemma}
	\emph{(Lemma~\ref{lem:promotebackedge} restated)} The calls to
	operation $\oracleleft(C,e)$ in a spine of the recursion tree
	can be charged with a time cost of $O(|E_X|)$ to that spine. 
\end{lemma}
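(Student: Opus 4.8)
The plan is to bound the total work of the sequence of $\oracleleft(C,e)$ calls issued along a single spine, whose first node is $\langle \pi_s, u, C\rangle$ and whose binary nodes correspond to the back edges $b_i=(z_i,u)$ of $lb(u)$ taken in reverse DFS order. The left branch at the $i$th binary node must produce a certificate $C_i$ rooted at $z_i$, for the bead string $B_{z_i,t}$ of the graph $\beadstring - u$. The first thing I would exploit is that $\beadstring - u$ — hence its block tree — is identical for \emph{all} left branches off the spine, so the $C_i$ differ only in their root; moreover, by Lemma~\ref{lemma:beadstring}, if $t'$ is the last vertex of the path $u\leadsto t$ lying in $V_X$, then $t'$ is an articulation point and the subtree of $C$ rooted at $t'$ is common to every $C_i$. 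Thus it suffices to account for reconstructing the part inside the compacted head $H_X=(V_X,E_X)$, with $t'$ playing the role of target and $\gamma(t')$ the baseline for the $\gamma$-labels.

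Next I would build the $C_i$ incrementally. For $C_1$ I would just rerun the linear-time construction of Lemma~\ref{lem:certificate_scratch} restricted to $H_X$, rooted at $z_1$; this costs $O(|E_X|)$. For $i>1$, I would run the same DFS construction from $z_i$ but reuse $C_{i-1}$: as soon as the DFS enters the \emph{first} bead $\beta$ common to $B_{z_i,t}$ and $B_{z_{i-1},t}$, I would explore only the edges internal to $\beta$, because every edge leaving $\beta$ either leads toward $t$ — in which case the corresponding subtree of $C_{i-1}$ is reattached unchanged — or it does not, in which case it was already explored and pruned while constructing $C_{i-1}$ and can be discarded without re-exploration. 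The combinatorial heart of the argument, which I would derive from the reverse-DFS order in which the $z_i$ are processed, is that across the whole spine each bead of the block tree of $\beadstring-u$ is explored afresh at most once; hence the ``fresh'' exploration cost telescopes to $O(|E_X|)$.

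The step I expect to be the main obstacle is the repeated work on the common bead $\beta$: when $z_i, z_{i+1}, \dots$ all fall in the same first bead after deleting $u$, the DFS of $\beta$ has to be recomputed once per root, so its internal edge set $E_X'$ may be traversed several times. I would not pay for this out of the spine's budget; instead I would charge $O(|E_X'|)$ to a node strictly below the left child of the $i$th spine node. Concretely, $\beta$ is the head of the bead string of at least one certificate in that recursion subtree, so I would charge the first node on the leftmost root-to-leaf path below the $i$th spine node whose bead string has exactly $\beta$ as head: if $|E_X'|\le 1$ this is a unary node or a leaf and $O(1)$ suffices (matching the unary/leaf terms of Eq.~\eqref{eq:abstrac_cost}), and otherwise it is the first node of a deeper spine and $O(|E_X'|)$ matches the spine term of Eq.~\eqref{eq:abstrac_cost}. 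Since distinct values of $i$ place their charges on distinct nodes of distinct leftmost paths, no node of the recursion tree is charged more than once. Combining the $O(|E_X|)$ of fresh exploration with these redistributed charges yields the claimed amortized $O(|E_X|)$ bound per spine, which is precisely the $\oracleleft$ contribution needed to establish Eq.~\eqref{eq:abstrac_cost}.
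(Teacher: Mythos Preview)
Your proposal is correct and follows essentially the same approach as the paper: build $C_1$ from scratch in $O(|E_X|)$ via Lemma~\ref{lem:certificate_scratch} restricted to $H_X$ with target $t'$, build each subsequent $C_i$ incrementally from $C_{i-1}$ by exploring only fresh beads plus the internal edges of the first common bead, observe that the reverse-DFS order ensures each bead is freshly explored at most once, and amortize the repeated cost $O(|E_X'|)$ on the common bead by charging the unique node on the leftmost path below the $i$th spine node whose head is that bead. The case split on $|E_X'|\le 1$ versus $|E_X'|>1$ and the no-double-charging argument are exactly the paper's.
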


To achieve this time cost, for each back edge $b_i = (z_i,u)$, we
compute the certificate corresponding to $B_{z_i,t}$ based on the
certificate of $B_{z_{i-1},t}$. Consider the compacted head $H_X =
(V_X , E_X )$ of the bead string $B_{u,t}$. We use $O(|E_X|)$ time to
compute the first certificate $C_1$ that corresponds to bead string
$B_{z_1,t}$. Fig.~\ref{fig:spine-left-update} shows bead string
$B_{z_1,t}$ from the example of Fig.~\ref{fig:topspine}.

\begin{lemma}
	The certificate $C_1$, corresponding to bead string
	$B_{z_1,t}$, can be computed in $O(|E_X|)$ time.
\end{lemma}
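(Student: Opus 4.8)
The plan is to obtain $C_1$ by reusing the part of the parent certificate $C$ that is unaffected by deleting $u$, and recomputing from scratch only the part that lies inside the compacted head $H_X=(V_X,E_X)$; since $H_X$ has only $O(|E_X|)$ edges, this yields the claimed bound.

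First I would invoke the structural fact already used in the proof of Lemma~\ref{lem:promotebackedge}: let $t'$ be the last vertex of the path $u\leadsto t$ with $t'\in V_X$. Because $t'$ is an articulation point of $B_{u,t}$ (or $t'=t$), removing $u$ changes nothing in the subtree of the DFS tree hanging below $t'$ --- its tree edges, its lists $lb(\cdot),ab(\cdot)$, its $\gamma$ values and lowpoints all stay valid --- and there are no back edges joining $V_X\setminus\{t'\}$ to that subtree. Hence that whole portion of $C$ is already a correct certificate fragment for $B_{z_1,t}$ and can be reused verbatim; it only remains to rebuild the fragment corresponding to $H_X$ with $u$ deleted, now rooted at $z_1$ and with $t'$ as target.

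Next I would run the from-scratch procedure of Lemma~\ref{lem:certificate_scratch} on $H_X-u$: start the DFS at $z_1$, explore a $z_1\leadsto t'$ path first and then attach the reused subtree as the leftmost child of $t'$, so that $t$ stays on the leftmost root-to-leaf path; while backtracking, fill in the lists $lb(\cdot),ab(\cdot)$ and the lowpoints, cut every subtree rooted at a vertex whose lowpoint exceeds the $\gamma$ of its parent and that is not on the leftmost path (exactly as in Lemma~\ref{lem:certificate_scratch} and Lemma~\ref{lem:cutbranches}), and finally re-compact any degree-2 chains created by the deletion of $u$. The single deviation from Lemma~\ref{lem:certificate_scratch} is that the $\gamma$ values inside $H_X$ are assigned using $\gamma(t')$ as baseline instead of plain depth, in such a way that along the seam $z_1\leadsto t'$ they stay below the $\gamma$ values of the reused subtree; this is what preserves the global invariant ``$v$ ancestor of $w\Rightarrow\gamma(v)<\gamma(w)$'' and hence keeps the lowpoint/articulation-point tests meaningful across $t'$. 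Since $H_X-u$ has at most $|E_X|$ edges and at most $|V_X|\le|E_X|+1$ vertices, the procedure performs at most two traversals in $O(|E_X|)$ time, attaching the reused subtree is $O(1)$, and the correctness of $C_1$ --- an augmented compacted DFS tree of $B_{z_1,t}$ rooted at $z_1$ with $t$ on the leftmost path --- follows from the correctness of Lemma~\ref{lem:certificate_scratch} applied to $G[B_{z_1,t}]$ together with the reuse argument above.

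I expect the only genuinely delicate point to be this re-baselining of $\gamma$: one must check that re-rooting the $H_X$ fragment at $z_1$ and shifting its $\gamma$ values to be consistent with $\gamma(t')$ can be done with plain integers while simultaneously respecting ancestor-monotonicity inside the rebuilt fragment and remaining consistent with the untouched values below $t'$, so that the constant-time lowpoint comparisons used throughout Section~\ref{sec:certificate} still behave correctly on the glued certificate. Everything else is a direct instantiation of the already-proved Lemma~\ref{lem:certificate_scratch}.
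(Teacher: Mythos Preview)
Your proposal is correct and follows essentially the same approach as the paper: reuse the subtree below the articulation point $t'$, rebuild only the $H_X$ part by re-running the procedure of Lemma~\ref{lem:certificate_scratch} restricted to $H_X-u$ with root $z_1$ and target $t'$, and re-baseline $\gamma$ from $\gamma(t')$ so that the glued certificate remains globally consistent. The paper makes the $\gamma$ re-baselining explicit --- setting $\gamma(v)=\gamma(w)-1$ along the $t'\leadsto z_1$ path and $\gamma(v)=\gamma(\text{parent}(v))+1$ elsewhere --- which is exactly the construction needed to resolve the ``delicate point'' you correctly flag.
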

\begin{proof}
	Let $t'$ be the last vertex in the path $u \leadsto t$ s.t.
	$t' \in V_X$. Since $t'$ is an articulation point, the subtree
	of the DFS tree rooted in $t'$ is maintained in the case of
	removal vertex $u$. Therefore the only modifications of the
	DFS tree occur in head $H_X$ of $B_{u,t}$.

	To compute $C_1$, we remove $u$ and rebuild the certificate
	starting form $z_1$ using the algorithm from
	Lemma~\ref{lem:certificate_scratch} restricted to $H_X$ and
	using $t'$ as target and $\gamma(t')$ as a baseline to
	$\gamma$ (instead of the depth). In particular we do
	the following.
	In order to set $t'$ to be in the leftmost path, we perform a
	DFS traversal of graph $H_X$ starting from $z_1$ and stop when
	we reach vertex $t'$. We then compute the DFS tree, traversing
	the path $z_1 \leadsto t'$ first.
	
	{\it Update of $\gamma$.} For each tree edge $(v,w)$ in the
	$t' \leadsto z_1$ path, we set $\gamma(v)=\gamma(w)-1$, using
	$\gamma(t')$ as a baseline.  During the rest of the traversal,
	when visiting vertex $v$, let $w$ be the parent of $v$ in the
	DFS tree. We set $\gamma(v)=\gamma(w)+1$. This maintains the
	property that $\gamma(v)>\gamma(w)$ for any $w$ ancestor of
	$v$.

	{\it Update of lowpoints and pruning of the tree.}  Bottom-up
        in the DFS-tree of $H_X$, we compute the lowpoints using the
        lowpoints of the children.  Let $z$ be the parent of $v$. If
        $\mathit{lowpoint}(v) > \mathit{lowpoint}(z)$ and $v$ is not
        in the leftmost path in the DFS, we cut the subtree of $v$ as
        it does not belong to $B_{z_1,t}$.

	{\it Computing $lb$ and $ab$.} During the traversal, when
	finding a back edge $e=(v,w)$, if $w$ is a descendant of $v$
	we append $e$ to both $lb(v)$ and $ab(w)$; else we append $e$
	to both $ab(v)$ and $lb(w)$.  This maintains the DFS order in
	the back edge lists.

	This procedure takes $O(|E_X|)$ time.
\end{proof}

To compute each certificate $C_i$, corresponding to bead string
$B_{z_i,t}$, we are able to avoid visiting most of the edges that
belong $B_{z_{i-1},t}$. Since we take $z_i$ in reverse DFS order, on
the spine of the recursion we visit $O(|E_X|)$ edges plus a term that
can be amortized.

\begin{lemma}
	For each back edge $b_i = (z_i,u)$ with $i>1$, let ${E_X}_i'$
	be the edges in the first bead in common between $B_{z_i,t}$
	and $B_{z_{i-1},t}$. The total cost of computing all
	certificates $B_{z_i,t}$ in a spine of the recursion tree is:
	$O(E_X + \sum_{i>1}{{E_X}_i'})$.
	\label{lem:cost-spine-not-amortized}
\end{lemma}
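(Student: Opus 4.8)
The plan is to produce $C_1$ from scratch and then, for $i>1$, to build $C_i$ incrementally from $C_{i-1}$, re-exploring only the beads of $B_{z_i,t}$ that are not already beads of $B_{z_{i-1},t}$, together with the first bead the two bead strings share. First I would dispose of $C_1$: by the preceding lemma it is computed in $O(|E_X|)$ time, rooting the rebuilt certificate at $z_1$, using $t'$ as target and $\gamma(t')$ as baseline for $\gamma$, exactly as in Lemma~\ref{lem:certificate_scratch} restricted to $H_X$.

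For $i>1$, I would keep $C_{i-1}$ available (with its vertices flagged) and run the rebuilding procedure of Lemma~\ref{lem:certificate_scratch} on $H_X$ starting the DFS at $z_i$, with two prunings: never enter a subtree that was already cut away in $C_{i-1}$, and stop descending along a branch as soon as it reaches a vertex that still belongs to $C_{i-1}$. Call $\mathcal{B}_i$ the first bead that $B_{z_i,t}$ and $B_{z_{i-1},t}$ have in common (the fork bead); the DFS from $z_i$ then explores exactly the beads of $B_{z_i,t}$ strictly above $\mathcal{B}_i$ --- denote the set of their edges by $P_i$ --- and then the internal edges ${E_X}_i'$ of $\mathcal{B}_i$. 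Beyond $\mathcal{B}_i$ the two bead strings coincide: since consecutive beads meet only at articulation points, each branch leaving $\mathcal{B}_i$ either leads to $t$, in which case the corresponding subtree of $C_{i-1}$ is re-linked unchanged (its $\gamma$ values are kept, and the boundary of $\mathcal{B}_i$ toward $t$ is anchored to them, so the re-linking costs $O(1)$), or it does not lead to $t$, in which case it has already been cut in $C_{i-1}$ and is discarded in $O(1)$. (The last, unary node of the spine is handled separately by Lemma~\ref{lem:unary_left} and is not counted here.) Hence the work to obtain $C_i$ from $C_{i-1}$ is $O(|P_i| + |{E_X}_i'|)$, and the total is $O\!\left(|E_X| + \sum_{i>1}|P_i| + \sum_{i>1}|{E_X}_i'|\right)$.

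It remains to show $\sum_{i>1}|P_i| = O(|E_X|)$, and this is where the reverse DFS order in which $\chooseedge(C,u)$ hands out the back edges $b_i=(z_i,u)$ must be used; I expect this to be the main obstacle. Working in the block--cut tree $\mathcal{T}$ of $H_X - u$ rooted at the bead containing $t$, the bead string $B_{z_i,t}$ is the $\mathcal{T}$-path from $z_i$'s bead to the root, the fork bead $\mathcal{B}_i$ is the $\mathcal{T}$-LCA of $z_i$'s and $z_{i-1}$'s beads, and $P_i$ is the $\mathcal{T}$-path from $z_i$'s bead up to (but excluding) $\mathcal{B}_i$. Consecutive prefixes are automatically edge-disjoint, since $P_i\subseteq B_{z_i,t}$ while $P_{i+1}\cap B_{z_i,t}=\emptyset$; so it suffices to rule out overlaps between non-consecutive $P_i$'s, and for this I would show that processing the $z_i$ in reverse DFS order makes the fork beads $\mathcal{B}_i$ occur along a single monotone descent in $\mathcal{T}$, so that every bead of $H_X$ is contained in at most one $P_i$. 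Granting this, $\sum_{i>1}|P_i|\le |E_X|$, and combining with the per-step bound above yields the claimed total cost $O(|E_X|+\sum_{i>1}|{E_X}_i'|)$; the amortization that removes the second term is then carried out in Lemma~\ref{lem:promotebackedge}.
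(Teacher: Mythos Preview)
Your approach is essentially the paper's. Both compute $C_1$ from scratch in $O(|E_X|)$ via Lemma~\ref{lem:certificate_scratch}, and for $i>1$ rebuild only the beads of $B_{z_i,t}$ that are not already in $B_{z_{i-1},t}$, plus the first shared (fork) bead, re-linking the rest of $C_{i-1}$ below it in $O(1)$. The paper phrases the key step simply as ``given the order we take $b_i$, each bead is not added more than once to a certificate $C_i$''; your block--cut tree formulation with the prefixes $P_i$ is the same statement, made more explicit.

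One remark on your ``monotone descent'' claim: what you actually need is only pairwise disjointness of the $P_i$, not that the fork beads $\mathcal{B}_i$ form a chain in $\mathcal{T}$. Your consecutive-case argument already gives $P_i\cap B_{z_{i-1},t}=\emptyset$, and a direct induction (rather than a chain condition on the $\mathcal{B}_i$) is the cleaner route: a bead lying in some $P_j$ has, by construction, never been part of any $B_{z_k,t}$ with $k<j$, so it cannot lie in any earlier $P_k$. The reverse DFS order is what makes this induction go through, exactly as the paper asserts; your flagging it as the main point is correct, but the chain formulation is a slightly stronger statement than you need and is not what the paper argues.

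A second minor point: your pruning ``stop descending as soon as the DFS reaches a vertex of $C_{i-1}$'' is a little too aggressive as written, since you still have to traverse the internal edges ${E_X}_i'$ of the fork bead to re-root it at the new entry point. The paper handles this by switching modes upon entering $B_{z_{i-1},t}$: inside the fork bead it restricts the DFS to edges of $C_{i-1}$, and only beyond the fork bead does it reuse subtrees wholesale. Your cost accounting already reflects this (you charge ${E_X}_i'$ separately), so this is only a wording issue, not a gap.
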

\begin{proof}
	Let us compute the certificate $C_i$: this is the certificate
	of the left branch of the $i$th node of the spine where we
	augment the path with the back edge $b_i = (z_i,u)$ of
	$lb(u)$.

	For the general case of $C_i$ with $i>1$ we also rebuild
	(part) of the certificate starting from $z_i$ using the
	procedure from Lemma~\ref{lem:certificate_scratch} but we use
	information gathered in $C_{i-1}$ to avoid exploring useless
	branches of the DFS tree. The key point is that, when we reach
	the first bead in common to both $B_{z_i,t}$ and
	$B_{z_{i-1},t}$, we only explore edges internal to this bead.
	If an edge $e$ that leaves the bead leads to $t$, we can reuse
	a subtree of $C_{i-1}$. If $e$ does not lead to $t$, then it
	has already been explored (and cut) in $C_{i-1}$ and there is
	no need to explore it again since it is going to be discarded.

	In detail, we start computing a DFS from $z_i$ in $B_{u,t}$
	until we reach a node $t' \in B_{z_{i-1},t}$. Note that the
	bead of $t'$ has one entry point and one exit point in
	$C_{i-1}$. After reaching $t'$ we proceed with the traversal
	using only edges already in $C_{i-1}$. When arriving at a
	vertex $w$ that is not in the same bead of $t'$, we stop the
	traversal. If $w$ is in a bead towards $t$, we reuse the
	subtree of $w$ and use $\gamma(w)$ as a baseline of the
	numbering $\gamma$. Otherwise $w$ is in a bead towards
	$z_{i-1}$ and we cut this branch of the certificate. When all
	edges in the bead of $t'$ are traversed, we proceed with visit
	in the standard way.

	Given the order we take $b_i$, each bead is not added more
	than once to a certificate $C_i$, therefore the total cost
	over the spine is $O(|E_X|)$.
	Nevertheless, the internal edges ${E_X}_i'$ of the first bead
	in common between $B_{z_i,t}$ and $B_{z_{i-1},t}$ are explored
	for each back edge $b_i$.\qed
\end{proof}

Although the edges in ${E_X}_i'$ are in a common bead between
$B_{z_i,t}$ and $B_{z_{i-1},t}$, these edges must be visited. Since
the entry point in the common bead can be different for $z_i$ and
$z_{i-1}$, the DFS tree of that bead can also be different. For an
example, consider the case where $z_i, \ldots, z_j$ are all in the
same bead after the removal of $u$. The bead strings $B_{z_i,t} \ldots
B_{z_j,t}$ are the same, but the roots $z_i, \ldots, z_j$ of the
certificate are different, so we have to compute the corresponding DFS
of the first bead $|j-i|$ times. Note that this is not the case for
the other beads in common: the entry point is always the same.

\begin{lemma}
	The cost $O(E_X + \sum_{i>1}{{E_X}_i'})$ on a spine of the
	recursion tree can be amortized to $O(E_X)$ in that spine.
	\label{lem:left-amortize}
\end{lemma}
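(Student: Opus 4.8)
\emph{Proof idea.} The plan is to keep the $O(|E_X|)$ term of Lemma~\ref{lem:cost-spine-not-amortized} charged to the spine and to redistribute the surplus $\sum_{i>1}|{E_X}_i'|$ over the recursion tree, so that (a)~every node receives at most one such charge and (b)~the charge it receives has the same order as its own abstract cost in Eq.~\eqref{eq:abstrac_cost}. Since Lemma~\ref{lemma:total_cost_recursion_tree} already shows that the sum of those abstract costs is $O(\sum_{\pi}|\pi|)$, a charging satisfying (a) and (b) legitimately turns the bound of Lemma~\ref{lem:cost-spine-not-amortized} into the claimed $O(|E_X|)$ per spine.

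First I would fix the charging rule. For each $i>1$, let $D_i$ be the first bead common to the bead strings $B_{z_i,t}$ and $B_{z_{i-1},t}$; since both live in the same graph $B_{u,t}-u$, the bead $D_i$ is well defined, and ${E_X}_i'$ is its set of (compacted) internal edges. Let $L_i$ be the left child of the $i$th node of the spine, that is, the node that recurses on $B_{z_i,t}$ after moving from $u$ to $z_i$. Following left branches out of $L_i$ traces a simple path from $z_i$ to $t$ inside $B_{z_i,t}$, and by Lemma~\ref{lemma:beadstring} this path must enter $D_i$; at the first node $x_i$ on this leftmost path whose current head is exactly $D_i$, I charge the $O(|{E_X}_i'|)$ cost spent on re-exploring $D_i$ while building $C_i$. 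If $|{E_X}_i'|\le 1$, then $D_i$ is a single edge and $x_i$ is a unary node or a leaf, which can absorb an $O(1)$ charge inside the $c_0$ (resp.\ $c_0|\pi|$) term of Eq.~\eqref{eq:abstrac_cost}; otherwise $D_i$ is a genuine biconnected component, $x_i$ is the first node of a spine whose compacted head is $D_i$, and it can absorb the $O(|{E_X}_i'|)$ charge inside its abstract spine cost. (For $i=1$ the $O(|E_X|)$ cost of building $C_1$ is paid directly by the spine, so no charge is issued.)

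The crux, and the only step that is not routine bookkeeping, is to prove that this charging is injective. The structural fact I would use is that every node on the leftmost path out of $L_i$, including $L_i$ itself, is the left child of its parent. Suppose a node $x$ were $x_i$ for a pair $(S,i)$ and also $x_j$ for a pair $(S',j)$. Then $x$ lies on the leftmost paths out of both $L_i$ and $L_j$, so $L_i$ and $L_j$ are comparable ancestors of $x$ along its left-child ancestor chain. If $L_i=L_j$, they have a common parent, which is a single spine node, so $S=S'$ and $i=j$. If instead $L_i$ is a proper ancestor of $L_j$, then $L_j$, and hence its parent, lie on the leftmost path out of $L_i$ and are therefore left children; but the parent of $L_j$ is the $j$th node of the spine $S'$ with $j\ge 2$, which is reached by a right branch and is thus a right child, a contradiction. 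Hence no node is charged twice, and the lemma follows: along any spine the retained term is $O(|E_X|)$, while the surplus $\sum_{i>1}O(|{E_X}_i'|)$ is offloaded to pairwise distinct nodes that can each afford it. The point needing the most care in a full write-up is the existence claim for $x_i$: that the node whose head equals $D_i$ really does lie on the leftmost path out of $L_i$, with a compacted head of $\Theta(|{E_X}_i'|)$ edges; this holds because the vertices deleted between $z_i$ and the entry point of $D_i$ all lie in beads strictly preceding $D_i$, so $D_i$ is intact when $x_i$ is reached.
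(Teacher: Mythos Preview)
Your proposal is correct and follows essentially the same charging scheme as the paper: offload each $O(|{E_X}_i'|)$ term (for $i>1$) to the unique node on the leftmost path below $L_i$ whose head is the common bead $D_i$, distinguishing the cases $|{E_X}_i'|\le 1$ (unary node/leaf) and $|{E_X}_i'|>1$ (first node of a spine). Your write-up is in fact more careful than the paper's on the injectivity of the charging---the paper simply asserts that charges land in distinct leftmost subpaths and hence never collide, whereas you explicitly rule out collisions across different spines via the left-child/right-child contradiction---and you also make the existence of $x_i$ explicit.
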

\begin{proof}
	We can charge the cost $O(|{E_X}_i'|)$ of exploring the edges
	in the first bead in common between $B_{z_i,t}$ and
	$B_{z_{i-1},t}$ to another node in the recursion tree. Since
	this common bead is the head of at least one certificate in
	the recursion subtree of the left child of the $i$th node of
	the spine.  Specifically, we charge the first and only node in
	the \emph{leftmost} path of the $i$th child of the spine that
	has exactly the edges ${E_X}_i'$ as head of its bead string:
	(i) if $|{E_X}_i'| \le 1$ it corresponds to a unary node or a
	leaf in the recursion tree and therefore we can charge it with
	$O(1)$ cost; (ii) otherwise it corresponds to a first node of
	a spine and therefore we can also charge it with
	$O(|{E_X}_i'|)$. We use this charging scheme when $i \neq 1$
	and the cost is always charged in the leftmost recursion path
	of $i$th node of the spine, consequently we never charge a
	node in the recursion tree more than once. \qed
\end{proof}

Lemmas~\ref{lem:cost-spine-not-amortized} and~\ref{lem:left-amortize}
finalize the proof of Lemma~\ref{lem:promotebackedge}. 
Fig.~\ref{fig:spine-left-update} shows the certificates of bead strings
$B_{z_i,t}$ on the left branches of the spine from
Fig.~\ref{fig:topspine}.

\begin{figure}[t]
\centering
\subfigure[Certificate of bead string $B_{z_1,t}$] {

\begin{tikzpicture}
[nodeDecorate/.style={shape=circle,inner sep=1pt,draw,thick,fill=black},%
  lineDecorate/.style={-,dashed},%
  elipseDecorate/.style={color=gray!30},
  scale=0.22]
\draw (10,28) circle (2);
\draw[color=white] (10,14) circle (11);
\draw (10,23.5) circle (2.5);
\draw (10,19.5) circle (1.5);
\draw (10,14) circle (4);
\draw (10,8) circle (2);
\draw (10,3) circle (3);

\node (s) at (10,34) [nodeDecorate,color=lightgray,label=above left:$s$] {};

\node (z1) at (10,30) [nodeDecorate,label=below right:$z_1$] {};
\node (z2) at (10,26) [nodeDecorate,label=below left:$z_2$] {};
\node (a) at (10,21) [nodeDecorate] {};
\node (b) at (12,23) [nodeDecorate] {};
\node (v) at (10,18) [nodeDecorate,label=below left:$v$] {};
\node (c) at (10,14) [nodeDecorate] {};
\node (d) at (10,10) [nodeDecorate] {};
\node (z4) at (10,6) [nodeDecorate,label=above right:$z_4$] {};
\node (t) at (10,2) [nodeDecorate,label=right:$t$] {};

\path {
	(s) edge[snake,-,color=lightgray] node {\quad\quad$\pi_s$} (z1)
	(z1) edge node {} (z2)
	(z2) edge node {} (a)
	(a) edge node {} (b)
	(a) edge node {} (v)
	(v) edge node {} (c)
	(c) edge node {} (d)
	(d) edge node {} (z4)
	(z4) edge node {} (t)
};

\path {
	(d) edge[dashed,bend left=-50] node {} (v)
	(b) edge[dashed,bend left=-30] node {} (z2)
};
\end{tikzpicture}
}
\subfigure[Certificate of bead string $B_{z_2,t}$] {

\begin{tikzpicture}
[nodeDecorate/.style={shape=circle,inner sep=1pt,draw,thick,fill=black},%
  lineDecorate/.style={-,dashed},%
  elipseDecorate/.style={color=gray!30},
  scale=0.22]
\draw[color=white] (10,14) circle (11);
\draw (10,23.5) circle (2.5);
\draw (10,19.5) circle (1.5);
\draw (10,14) circle (4);
\draw (10,8) circle (2);
\draw (10,3) circle (3);

\node (s) at (10,30) [nodeDecorate,color=lightgray,label=above left:$s$] {};

\node (z2) at (10,26) [nodeDecorate,label=below left:$z_2$] {};
\node (a) at (10,21) [nodeDecorate] {};
\node (b) at (12,23) [nodeDecorate] {};
\node (v) at (10,18) [nodeDecorate,label=below left:$v$] {};
\node (c) at (10,14) [nodeDecorate] {};
\node (d) at (10,10) [nodeDecorate] {};
\node (z4) at (10,6) [nodeDecorate,label=above right:$z_4$] {};
\node (t) at (10,2) [nodeDecorate,label=right:$t$] {};

\path {
	(s) edge[snake,-,color=lightgray] node {\quad\quad$\pi_s$} (z2)
	(z2) edge node {} (a)
	(a) edge node {} (b)
	(a) edge node {} (v)
	(v) edge node {} (c)
	(c) edge node {} (d)
	(d) edge node {} (z4)
	(z4) edge node {} (t)
};

\path {
	(d) edge[dashed,bend left=-50] node {} (v)
	(b) edge[dashed,bend left=-30] node {} (z2)
};
\end{tikzpicture}
}
\subfigure[Certificate of bead string $B_{z_3,t}$] {

\begin{tikzpicture}
[nodeDecorate/.style={shape=circle,inner sep=1pt,draw,thick,fill=black},%
  lineDecorate/.style={-,dashed},%
  elipseDecorate/.style={color=gray!30},
  scale=0.22]
\draw[color=white] (10,14) circle (11);
\draw (10,23.5) circle (2.5);
\draw (10,19.5) circle (1.5);
\draw (10,14) circle (4);
\draw (10,8) circle (2);
\draw (10,3) circle (3);

\node (s) at (10,30) [nodeDecorate,color=lightgray,label=above left:$s$] {};

\node (z3) at (10,26) [nodeDecorate,label=below left:$z_3$] {};
\node (a) at (10,21) [nodeDecorate] {};
\node (v) at (10,18) [nodeDecorate,label=below left:$v$] {};
\node (c) at (10,14) [nodeDecorate] {};
\node (d) at (10,10) [nodeDecorate] {};
\node (z4) at (10,6) [nodeDecorate,label=above right:$z_4$] {};
\node (t) at (10,2) [nodeDecorate,label=right:$t$] {};

\path {
	(s) edge[snake,-,color=lightgray] node {\quad\quad$\pi_s$} (z3)
	(z3) edge node {} (a)
	(a) edge node {} (v)
	(v) edge node {} (c)
	(c) edge node {} (d)
	(d) edge node {} (z4)
	(z4) edge node {} (t)
};

\path {
	(d) edge[dashed,bend left=-50] node {} (v)
};
\end{tikzpicture}
}

\subfigure[Certificate of bead string $B_{z_4,t}$] {

\begin{tikzpicture}
[nodeDecorate/.style={shape=circle,inner sep=1pt,draw,thick,fill=black},%
  lineDecorate/.style={-,dashed},%
  elipseDecorate/.style={color=gray!30},
  scale=0.22]
\draw[color=white] (10,8) circle (11);
\draw (10,3) circle (3);

\node (s) at (10,14) [nodeDecorate,color=lightgray,label=above left:$s$] {};

\node (z4) at (10,6) [nodeDecorate,label=above right:$z_4$] {};
\node (t) at (10,2) [nodeDecorate,label=right:$t$] {};

\path {
	(s) edge[snake,-,color=lightgray] node {\quad\quad$\pi_s$} (z4)
	(z4) edge node {} (t)
};

\end{tikzpicture}
}
\caption{Certificates of the left branches of a spine}
\label{fig:spine-left-update}
\end{figure}
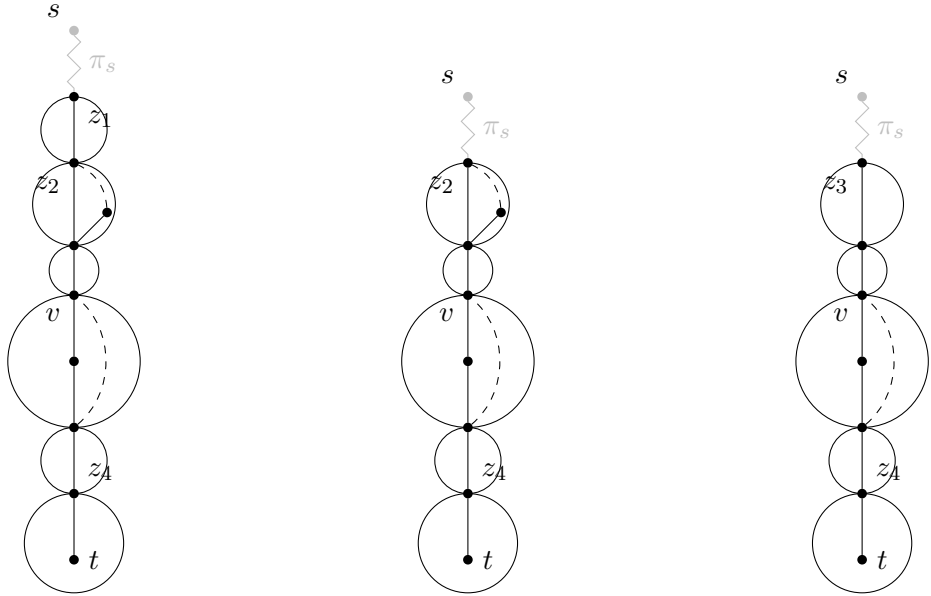
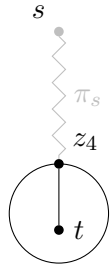

\end{document}